\newcommand{\txt}[1]{\quad\textnormal{#1}\quad}
\newcommand{\lbi}[3]{\mathrm{let}\;#1\;\mathrm{be}\;#2\;\mathrm{in}\;#3}
\long\def\symbolfootnote[#1]#2{\begingroup%
\def\thefootnote{\fnsymbol{footnote}}\footnote[#1]{#2}\endgroup}
\newcommand{\ra}[1]{\stackrel{#1}{\longrightarrow}}
\newtheorem{theorem}{Theorem}
\newtheorem*{theorem*}{Theorem}
\newtheorem{lemma}{Lemma}
\newtheorem*{lemma*}{Lemma}
\newtheorem*{claim*}{Claim}
\newtheorem{corollary}{Corollary}
\theoremstyle{definition}
\newtheorem{definition}{Definition}
\newtheorem*{definition*}{Definition}
\newtheorem{remark}{Remark}
\begin{document}
\title{\textbf{Syntax and Semantics of Linear Dependent Types}
\\ \Large\textbf{Technical Report}}
\author{Matthijs V\'ak\'ar}
\date{Oxford, UK, \today}

\maketitle
\begin{abstract}A type theory is presented that combines (intuitionistic) linear types with type dependency, thus properly generalising both intuitionistic dependent type theory and full linear logic. A syntax and complete categorical semantics are developed, the latter in terms of (strict) indexed symmetric monoidal categories with comprehension. Various optional type formers are treated in a modular way. In particular, we will see that the historically much-debated multiplicative quantifiers and identity types arise naturally from categorical considerations. These new multiplicative connectives are further characterised by several identities relating them to the usual connectives from dependent type theory and linear logic. Finally, one important class of models, given by families with values in some symmetric monoidal category, is investigated in detail.
\end{abstract}
\vspace{120pt}
\quad
\clearpage \noindent \textbf{Disclaimer and Acknowledgements}\\
\noindent The concept of a syntax with linear dependent types is not new. Such a calculus was first considered in \cite{cervesato1996linear}, in the context of a linear extension of the Logical Framework (LF). In terms of the semantics, the author would like to point to \cite{shulman2013enriched} and \cite{ponto2012duality}, which study very similar semantic objects, although without the notion of comprehension. Finally, the author would like to thank Urs Schreiber for sparking his interest in the topic through many enthusiastic posts on the nLab and nForum.\\
\\
The contributions of the present work are as follows.
\begin{enumerate}
\item The presentation of a syntax in a style that is very close to both the dual intuitionistic linear logic of \cite{barber1996dual} and intuitionistic dependent type theory as presented in \cite{hofmann1997syntax}. This clarifies, from a syntactic point of view, how exactly linear dependent types fit in with the work in both traditions.
\item The addition of various structural rules that allow us to consider a more basic setting without the various type formers of the Linear Logical Framework (LLF). In the rich setting of LLF, these become admissible.
\item The addition of various type formers to the syntax (most notably, $\Sigma$-, $!$-, and $\mathrm{Id}$-types).
\item The development of the first categorical semantics for linear dependent types: although there have been some suggestions about this on the nLab, in particular by Mike Shulman and Urs Schreiber, no account appears to have been published, as far as the author is aware. The semantics is developed in a style that combines features of the linear-non-linear adjunctions of \cite{barber1996dual} and the comprehension categories of \cite{jacobs1993comprehension}. Among other things, it shows that multiplicative quantifiers arise naturally, as adjoints to substitution, thereby offering a new point of view on the long-debated issue of quantifiers in linear logic.
\end{enumerate}

\clearpage
\tableofcontents
\clearpage

\section*{Introduction}
To put this work in context, the best point of departure may be Church's simply typed $\lambda$-calculus (or intuitionistic propositional type theory) \cite{church1940formulation}, which, according to the Curry-Howard correspondence (e.g. \cite{howard1995formulae}), can be thought of as a proof calculus for intuitionistic propositional logic. To this, Lambek (e.g. \cite{lambek1972deductive}) added the new idea that it can also be viewed as a syntax for describing Cartesian closed categories. From this starting point, two traditions branch off that are particularly relevant for us.

On the one hand, so-called dependent type theory can be seen to extend this, along the Curry-Howard correspondence, to provide a proof calculus for intuitionistic predicate logic. See e.g. \cite{martin1998intuitionistic}. Various flavours of categorical semantics for this calculus have been given, but they almost all amount to the same essential ingredients: certain fibred or, equivalently, indexed Cartesian categories, usually (depending on the exact formulation of the type theory) with a notion of comprehension that relates the fibres to the base category.

On the other hand, there is so-called linear logic, a school of logic initiated by Girard, which has pursued a further, resource-sensitive analysis of the intuitionistic propositional type theory by exposing precisely how many times each assumption is used in proofs. See e.g. \cite{girard1987linear}. Later, the essence of Girard's system was seen to be captured by a less restrictive system called (multiplicative) intuitionistic linear logic (where Girard's original system is now referred to as classical linear logic). This will be the sense in which we use the word 'linear'. A satisfactory proof calculus and categorical semantics, in terms of symmetric monoidal closed categories, are easily given. The final essential element of linear type theories, the resource modality providing the connection with intuitionistic type theories, can be given categorical semantics as a monoidal adjunction between an intuitionistic and a linear model \cite{barber1996dual}.

The question arises whether this linear analysis can be extended to predicate logic. Although some work has been done in this direction, the author feels a satisfactory answer has not been given.

Although Girard's early work in linear logic already discussed quantifiers, the analysis appears to have stayed rather superficial. In particular, an account of internal quantification, or a linear variant of Martin-L\"of's type theory, was missing, let alone a Curry-Howard correspondence. Later, linear types and dependent types were first combined in \cite{cervesato1996linear}, where a syntax was presented that extends LF (Logical Framework) with linear types (that depend on terms of intuitionistic types). This has given rise to a line of work in the computer science community. See e.g. \cite{dal2011linear,petit2012linear,gaboardi2013linear}. All the work seems to be syntactic in nature, however.

On the other hand, very similar ideas, this time at the level of categorical semantics and specific models (coming from homotopy theory, algebra, and mathematical physics), have recently emerged in the mathematical community, perhaps independently. Relevant literature includes \cite{may2006parametrized,shulman2013enriched,ponto2012duality,schreiber2014quantization}.

Although some suggestions about possible connections between the two lines of work have been made on the nLab and nForum in recent months, no account of the correspondence has been published, as far as the author is aware. Moreover, the syntactic tradition seems to have stayed restricted to the situation of functional type theory, in which one has $\Pi$- and $\multimap$-types, while our interest is in the general case of algebraic type theory, which is of fundamental importance from the point of view of mathematics, where structures seldom admit internal homs. At the same time, the syntactic tradition seems to lack other type formers (such as $\Sigma$-, $!$-, and $\mathrm{Id}$-types). The semantic tradition, however, does not seem to have given a sufficient account of the notion of comprehension. The present text takes some steps to close this gap in the existing literature.

The point of this paper is to illustrate how linear and dependent types can be combined straightforwardly and in great generality. First, in section \ref{sec:lindep}, we start with a general discussion of combining linear and dependent types. Second, in section \ref{sec:syn}, we present a syntax, intuitionistic linear dependent type theory (ILDTT), a natural blend of the dual intuitionistic linear logic (DILL) of \cite{barber1996dual} and dependent type theory (DTT), as presented in, e.g., \cite{hofmann1997syntax}. Third, in section \ref{sec:sem}, we present a complete categorical semantics that is the obvious combination of linear/non-linear adjunctions (e.g. \cite{barber1996dual}) and indexed Cartesian categories with comprehension (e.g. \cite{jacobs1993comprehension}). Finally, in section \ref{sec:dismod}, an important class of (rather discrete) models is investigated in terms of families with values in a fixed symmetric monoidal category.

This paper is the first report of a research programme that aims to explore the interplay between type dependency and linearity. We will therefore end with a brief discussion of some of our work in progress and planned work on related topics.
\clearpage
\section{Preliminaries}
We start by suggesting references on both linear type theory and dependent type theory, since having the right point of view on both subjects greatly simplifies understanding our presentation of the new material.

\subsection{(Non-Linear) Intuitionistic Dependent Type Theory (DTT)}
\subsubsection*{Syntax}
For an introductory but thorough treatment of the syntax, we refer the reader to \cite{hofmann1997syntax}.

\subsubsection*{Categorical Semantics} There are many (more or less) equivalent kinds of categorical models of a dependent type theory. Perhaps the best notion to have in mind while reading this text is the notion of a full (split) comprehension category of \cite{jacobs1993comprehension} or, equivalently, the following rephrasing of a full split comprehension category with $1$- and $\times$-types\footnote{This restriction avoids our having to talk about multicategories. A reader familiar with multicategories will see the obvious generalisation to the case without $1$- and $\times$-types.}. 

\begin{definition*}[Strict Indexed Cartesian Monoidal Category with Comprehension] By a strict\footnote{The strictness here refers to the fact that $\mathcal{I}$ is a functor, rather than a pseudofunctor. This reflects the fact that term substitution in types is a strict operation in type theory, even though the non-strict version might seem more natural from the point of view of category theory. It should be noted, though, that every indexed category is equivalent to a strict one. \cite{power1989general}} indexed\footnote{Of course, using the Grothendieck construction, we could equivalently use split fibred categories here, as is done in certain accounts of semantics for DTT. However, indexed categories seem closer to the syntax and are technically less involved in the strict case.} Cartesian monoidal category with comprehension, we mean the following data.
\begin{enumerate}
\item A category of intuitionistic contexts $\mathcal{C}$ with a terminal object $\cdot$.
\item A strict indexed Cartesian monoidal category $\mathcal{I}$ over $\mathcal{C}$, i.e. a contravariant functor $\mathcal{I}$ into the category $\mathrm{CMCat}$ of (small) Cartesian monoidal categories and strong cartesian monoidal functors $$\mathcal{C}^{op}\ra{\mathcal{I}}\mathrm{CMCat}.$$
We will also write $-\{f\}:=\mathcal{I}(f)$ for the action of $\mathcal{I}$ on a morphism $f$ of $\mathcal{C}$.
\item A \emph{comprehension schema}, i.e. for each $\Delta\in\mathrm{ob}(\mathcal{C})$ and $A\in\mathrm{ob}(\mathcal{I}(\Delta))$ a representation for the functor $$x\mapsto\mathcal{I}(\mathrm{dom}(x))(1,A\{x\}):(\mathcal{C}/\Delta)^{op}\ra{}\mathrm{Set},$$
which induces a fully faithful \emph{comprehension functor}: if we write the representing object $\Delta.{A}\ra{\mathbf{p}_{\Delta,{A}}}\Delta\in\mathrm{ob}(\mathcal{C}/\Delta)$, write $a  \mapsto  \langle f,a\rangle$ for the isomorphism $\mathcal{I}(\Delta')(1,A\{f\}) \cong  \mathcal{C}/\Delta(f,\mathbf{p}_{\Delta,{A}})$, and write $\mathbf{v}_{\Delta,{A}}\in\mathcal{I}(\Delta.{A})(1,A\{\mathbf{p}_{\Delta,{A}}\})$ for the universal element, the comprehension functor $\mathcal{I}\ra{}\mathcal{C}/-$ is defined as
\[
\begin{tikzcd}[ampersand replacement=\&, column sep=large]
\mathcal{I}(\Delta) \arrow[r] \& \mathcal{C}/\Delta\\
A\ra{f}B \arrow[r, maps to] \& \mathbf{p}_{\Delta,A} \arrow[r,"{\langle \mathbf{p}_{\Delta,A},\mathcal{I}(\mathbf{p}_{\Delta,A})(f)\circ \mathbf{v}_{\Delta,A}\rangle}"] \& \mathbf{p}_{\Delta,B}.
\end{tikzcd}
\]
\end{enumerate}
\end{definition*}
Note that the comprehension schema says that we build up the morphisms into objects that arise as lists of types in our category of contexts $\mathcal{C}$ as lists of closed terms. This is the crucial aspect that ensures that we are in the world of internal quantification. The fact that we require the comprehension functor to be fully faithful means that the non-closed terms in $\mathcal{I}(\Delta)(A,B)$ correspond precisely to the closed terms $\mathcal{I}(\Delta.A)(I,B)$. The equivalent condition on a comprehension category is the notion of fullness of \cite{jacobs1993comprehension}. This is essential to get a precise fit with the syntax. However, we will drop this restriction when modelling linear dependent types, since having $A$ as a linear assumption (an assumption in the fibre) can then genuinely differ from having $A$ as an intuitionistic assumption (an assumption in the base).

\subsection{Intuitionistic Linear (Non-Dependent) Type Theory (ILTT)}
The most suitable flavour of ILTT to have in mind while reading these notes is the so-called dual intuitionistic linear logic (DILL) of \cite{barber1996dual}. This will be our principal reference for both syntax and semantics. Both the syntax and semantics given in this reference are very close in spirit to our syntax and semantics of linear dependent types. For the semantics, more background is provided in \cite{mellies2009categorical}. 

\clearpage
\section{Intuitionistic Linear Dependent Type Theory?}
\label{sec:lindep}
Although it is a priori not entirely clear what linear dependent type theory should be, one can identify some guiding criteria. My initial reaction was the following.

The goal is a version of dependent type theory without weakening and contraction rules, together with an exponential co-modality that restores the missing structural rules.

When one first tries to write down such a type theory, however, one runs into the following discrepancy.
\begin{itemize}
\item The lack of weakening and contraction rules in \emph{linear type theory} forces us to refer to each declared variable precisely once: for a sequent $x:A\vdash t:B$, we know that $x$ has a unique occurrence in $t$.
\item In \emph{dependent type theory}, types can have free (term) variables: $x:A\vdash B\;\mathrm{type}$, where $x$ is a free variable in $B$. Crucially, we can then talk about terms of $B$: $x:A\vdash b:B$, where, generally, $x$ may also be free in $b$. For almost all interesting applications we will need multiple occurrences of $x$ to construct $b:B$, at least one for $B$ and one for $b$.
\end{itemize}
The question now is what it means to refer to a declared variable only once.

Do we not count occurrences in types? This point of view seems incompatible with universes, however, which play an important role in dependent type theory. If we do count them, however, the language seems to lose much of its expressive power. In particular, it seems to prevent us from talking about constant types.\\
\\
In this paper, we will circumvent the issue \emph{by restricting to type dependency on terms of intuitionistic types}. In this case, there is no conflict, as those terms can be copied and deleted freely.

A semantic, rather than syntactic, argument for this system is the following. We will see that if we start out with a model of linear type theory with external quantification (i.e. a strict indexed symmetric monoidal category) and demand that the quantification be internal (i.e. demand that it be a linear dependent type theory) in the natural way (i.e. impose the comprehension axiom), then our base category becomes a cartesian category (i.e. a model of intuitionistic type theory).

Our best attempt at giving meaning to type dependency on linear types will be through dependency on a (co-Kleisli or co-Eilenberg-Moore) category of co-algebras for $!$. We will briefly return to the issue later, but most of the discussion is beyond the scope of this paper.\\
\\
In a linear dependent type theory, one could initially conceive of an additive $\Sigma$-type, denoted by $\Sigma^\&_{x:X}A$, and a multiplicative $\Sigma$-type, denoted by $\Sigma^\otimes_{x:X}A$, to generalise the additive and multiplicative conjunction of intuitionistic linear logic. The modality would relate the additive $\Sigma$-type of linear type families to the multiplicative $\Sigma$-type of intuitionistic type families, through a Seely-like isomorphism:
$$!(\Sigma^\&_{x:X}A)=\Sigma^\otimes_{x:!X}!A.$$
The idea is that the linear $\Sigma$-types should simultaneously generalise the conjunctions from propositional linear logic and the $\Sigma$-types from intuitionistic dependent type theory.

In general, though, since we are restricting to type dependency on $!$-ed types, the additive $\Sigma$-type (as far as one can make sense of the notion) would become effectively intuitionistic (as both terms in the introduction rule would have to depend on the same context, one of which is forced to be intuitionistic). This also breaks the symmetry in the Seely isomorphism. Moreover, we will see that the multiplicative $\Sigma$-types (and $\Pi$-types) arise naturally in the categorical semantics. We will later briefly return to the possibility of a genuinely linear additive $\Sigma$-type and Seely isomorphisms.

For the $\Pi$-type, one would also expect to obtain two linear analogues: an additive and a multiplicative version. However, given the overwhelming preference in the linear logic literature for multiplicative implication, we can probably safely restrict our attention to multiplicative $\Pi$-types for now.
\clearpage
\section{Syntax of ILDTT}
\label{sec:syn}
We assume the reader has some familiarity with the formal syntax of dependent type theory and linear type theory. In particular, we will not go into syntactic details such as $\alpha$-conversion, name-binding, capture-free substitution of $a$ for $x$ in $t$ (write $t[a/x]$), and pre-syntax. The reader can find details on all of these topics in \cite{hofmann1997syntax}.

We next present the formal syntax of ILDTT. We start with a presentation of the judgements that represent the propositions in the language and then discuss the rules of inference: first the structural core, then the logical rules for a series of optional type formers. We conclude this section with a few basic results about the syntax.

\subsubsection*{Judgements}
We adopt a notation $\Delta;\Xi$ for contexts, where $\Delta$ is `an intuitionistic region' and $\Xi$ is `a linear region', as in \cite{barber1996dual}. The idea is that we have an empty context and can extend an existing context $\Delta;\Xi$ with both intuitionistic and linear types that are allowed to depend on $\Delta$.

Our language will express judgements of the following six forms.

\begin{figure}[h]
\centering
\fbox{\parbox{\textwidth}{\footnotesize
\begin{tabular}{ll}
\textbf{ILDTT judgement} & \textbf{Intended meaning}\vspace{2pt}\\
$\vdash \Delta;\Xi \;\mathrm{ctxt}$ & $\Delta;\Xi$ is a valid context\\
$\Delta;\cdot \vdash A\;\mathrm{type}$ &  $A$ is a type in (intuitionistic) context $\Delta$\\
$\Delta;\Xi\vdash a:A$ & $a$ is a term of type $A$ in context $\Delta;\Xi$\\
$\vdash \Delta;\Xi \equiv \Delta';\Xi'\;\mathrm{ctxt}$\hspace{40pt} & $\Delta;\Xi$ and $\Delta';\Xi'$ are judgementally equal contexts\\
$\Delta;\cdot\vdash A\equiv A'\;\mathrm{type}$ & $A$ and $A'$ are judgementally equal types in (intuitionistic) context $\Delta$\\
$\Delta;\Xi\vdash a\equiv a':A$ & $a$ and $a'$ are judgementally equal terms of type $A$ in context $\Delta;\Xi$
\end{tabular}}}
\normalsize
\caption{Judgements of ILDTT.}
\end{figure}

\subsubsection*{Structural Rules}
We will use the following structural rules, which are essentially the structural rules of dependent type theory where some rules appear in both an intuitionistic and a linear form. We present the rules by group, with their names, from left to right and top to bottom.

\begin{figure}[h]
\centering
\fbox{\parbox{\textwidth}{\footnotesize
\quad\\
\quad\\
\begin{tabular}{lr}
\AxiomC{}
\RightLabel{C-Emp}
\UnaryInfC{$\vdash \cdot;\cdot \;\mathrm{ctxt}$}
\DisplayProof
& \\
& \\
\AxiomC{$\vdash\Delta;\Xi\; \mathrm{ctxt}$}
\AxiomC{$\Delta;\cdot \vdash A\;\mathrm{type}$}
\RightLabel{Int-C-Ext}
\BinaryInfC{$\vdash \Delta,x:A;\Xi \;\mathrm{ctxt}$}
\DisplayProof

&
\hspace{56pt}
\AxiomC{$\vdash \Delta;\Xi\equiv\Delta';\Xi'\;\mathrm{ctxt}$}
\AxiomC{$\Delta;\cdot \vdash A\equiv B\;\mathrm{type}$}
\RightLabel{Int-C-Ext-Eq}
\BinaryInfC{$\vdash \Delta,x:A;\Xi\equiv\Delta',y:B;\Xi'\;\mathrm{ctxt}$}
\DisplayProof\\
& \\
\AxiomC{$\vdash \Delta;\Xi\;\mathrm{ctxt}$}
\AxiomC{$\Delta;\cdot \vdash A\;\mathrm{type}$}
\RightLabel{Lin-C-Ext}
\BinaryInfC{$\vdash \Delta;\Xi,x:A\;\mathrm{ctxt}$}
\DisplayProof

&

\AxiomC{$\vdash \Delta;\Xi\equiv\Delta';\Xi'\;\mathrm{ctxt}$}
\AxiomC{$\Delta;\cdot\vdash A\equiv B\;\mathrm{type}$}
\RightLabel{Lin-C-Ext-Eq}
\BinaryInfC{$\vdash \Delta;\Xi,x:A\equiv\Delta';\Xi',y:B\;\mathrm{ctxt}$}
\DisplayProof
\\
&\\
&\\
\AxiomC{$\vdash \Delta,x:A,\Delta';\cdot\;\mathrm{ctxt}$}
\RightLabel{Int-Var}
\UnaryInfC{$\Delta,x:A,\Delta';\cdot\vdash x:A$}
\DisplayProof
&\hspace{174pt}
\AxiomC{$\vdash \Delta;x:A\;\mathrm{ctxt}$}
\RightLabel{Lin-Var}
\UnaryInfC{$\Delta;x:A\vdash x:A$}
\DisplayProof
\end{tabular}}}
\normalsize
\caption{Context formation and variable declaration rules.}
\end{figure}

\vspace{-30pt}
\begin{figure}[h]
\centering
\fbox{\parbox{\textwidth}{\footnotesize
\quad\\
\begin{tabular}{lr}
\AxiomC{$\vdash \Delta;\Xi\;\mathrm{ctxt}$}
\RightLabel{C-Eq-R}
\UnaryInfC{$\vdash \Delta;\Xi\equiv \Delta;\Xi\;\mathrm{ctxt}$}
\DisplayProof
&\hspace{103pt}
\AxiomC{$\vdash \Delta;\Xi\equiv \Delta';\Xi'\;\mathrm{ctxt}$}
\RightLabel{C-Eq-S}
\UnaryInfC{$\vdash \Delta';\Xi'\equiv \Delta;\Xi\;\mathrm{ctxt}$}
\DisplayProof\\
&\\
\AxiomC{$\vdash \Delta;\Xi\equiv \Delta';\Xi'\;\mathrm{ctxt}$}
\AxiomC{$\vdash \Delta';\Xi'\equiv \Delta'';\Xi''\;\mathrm{ctxt}$}
\RightLabel{C-Eq-T}
\BinaryInfC{$\vdash \Delta;\Xi\equiv \Delta'';\Xi''\;\mathrm{ctxt}$}
\DisplayProof &\\
&\\
\AxiomC{$\Delta;\cdot\vdash A\;\mathrm{type}$}
\RightLabel{Ty-Eq-R}
\UnaryInfC{$\Delta;\cdot\vdash A\equiv A\;\mathrm{type}$}
\DisplayProof
&
\AxiomC{$\Delta;\cdot\vdash A\equiv A'\;\mathrm{type}$}
\RightLabel{Ty-Eq-S}
\UnaryInfC{$\Delta;\cdot\vdash A'\equiv A\;\mathrm{type}$}
\DisplayProof\\
&\\
\AxiomC{$\Delta;\cdot\vdash A\equiv A'\;\mathrm{type}$}
\AxiomC{$\Delta;\cdot\vdash A'\equiv A''\;\mathrm{type}$}
\RightLabel{Ty-Eq-T}
\BinaryInfC{$\Delta;\cdot\vdash A\equiv A''\;\mathrm{type}$}
\DisplayProof
&\\
&\\
\AxiomC{$\Delta;\Xi\vdash a:A$}
\RightLabel{Tm-Eq-R}
\UnaryInfC{$\Delta;\Xi\vdash a\equiv a: A$}
\DisplayProof
&
\AxiomC{$\Delta;\Xi\vdash a\equiv a':A$}
\RightLabel{Tm-Eq-S}
\UnaryInfC{$\Delta;\Xi\vdash a'\equiv a: A$}
\DisplayProof
\\
&\\
\AxiomC{$\Delta;\Xi\vdash a\equiv a':A$}
\AxiomC{$\Delta;\Xi\vdash a'\equiv a'':A$}
\RightLabel{Tm-Eq-T}
\BinaryInfC{$\Delta;\Xi\vdash a\equiv a'': A$}
\DisplayProof
\end{tabular}
\\
\\
\\
\begin{tabular}{ll}
\AxiomC{$\Delta;\Xi\vdash a:A$}
\AxiomC{$\vdash \Delta;\Xi\equiv \Delta';\Xi'\;\mathrm{ctxt}$}
\AxiomC{$\Delta;\cdot \vdash A\equiv A'\;\mathrm{type}$}
\RightLabel{Tm-Conv}
\TrinaryInfC{$\Delta';\Xi'\vdash a:A'$}
\DisplayProof &\\
&\\
\AxiomC{$\Delta;\cdot\vdash A\;\mathrm{type}$}
\AxiomC{$\vdash \Delta;\cdot\equiv \Delta';\cdot\;\mathrm{ctxt}$}
\RightLabel{Ty-Conv}
\BinaryInfC{$\Delta';\cdot\vdash A\;\mathrm{type}$}
\DisplayProof
\end{tabular}
\normalsize}}
\caption{A few standard rules for judgemental equality, saying that it is an equivalence relation and is compatible with typing.}
\end{figure}
\begin{figure}[h]
\centering
\fbox{\parbox{\textwidth}{\footnotesize
\quad\\
\begin{tabular}{lr}
\AxiomC{$\Delta,\Delta';\Xi\vdash\mathcal{J}$}
\AxiomC{$\Delta;\cdot\vdash A\;\mathrm{type}$}
\RightLabel{Int-Weak}
\BinaryInfC{$\Delta,x:A,\Delta';\Xi\vdash \mathcal{J}$}
\DisplayProof
&\\
&\\
&\\
\AxiomC{$\Delta,x:A,x':A',\Delta';\Xi\vdash \mathcal{J}$}
\RightLabel{Int-Exch}
\UnaryInfC{$\Delta,x':A',x:A,\Delta';\Xi\vdash \mathcal{J}$}
\DisplayProof
&\vspace{4pt}
\AxiomC{$\Delta;\Xi,x:A,x':A',\Xi'\vdash \mathcal{J}$}
\RightLabel{Lin-Exch}
\UnaryInfC{$\Delta;\Xi,x':A',x:A,\Xi'\vdash \mathcal{J}$}
\DisplayProof
\\
(if $x$ is not free in $A'$)&\\
&\\
\AxiomC{${\Delta},x:A,\Delta';\cdot \vdash B\;\mathrm{type}$}
\AxiomC{$\Delta;\cdot \vdash a:A$}
\RightLabel{Int-Ty-Subst}
\BinaryInfC{${\Delta},\Delta'[{a}/x];\cdot \vdash B[{a}/x]\;\mathrm{type}$}
\DisplayProof\hspace{-12pt}
&
\AxiomC{${\Delta},x:A,\Delta';\cdot \vdash B\equiv B'\;\mathrm{type}$}
\AxiomC{$\Delta;\cdot \vdash a:A$}
\RightLabel{Int-Ty-Subst-Eq}
\BinaryInfC{${\Delta},\Delta'[{a}/x];\cdot \vdash B[{a}/x]\equiv B'[{a}/x]\;\mathrm{type}$}
\DisplayProof\\
&\\
&\\
\AxiomC{${\Delta},x:A,\Delta';\Xi \vdash b:B$}
\AxiomC{$\Delta;\cdot \vdash a:A$}
\RightLabel{Int-Tm-Subst}
\BinaryInfC{${\Delta},\Delta'[{a}/x];\Xi[{a}/x] \vdash b[{a}/x]:B[{a}/x]$}
\DisplayProof\hspace{-9pt}
&
\AxiomC{${\Delta},x:A,\Delta';\Xi \vdash b\equiv b':B$}
\AxiomC{$\Delta;\cdot \vdash a:A$}
\RightLabel{Int-Tm-Subst-Eq}
\BinaryInfC{${\Delta},\Delta'[{a}/x];\Xi[{a}/x] \vdash b[{a}/x]\equiv b'[{a}/x]:B[{a}/x]$}
\DisplayProof
\\
&\\
&\\
\AxiomC{$\Delta;\Xi,x:A\vdash b:B$}
\AxiomC{$\Delta;\Xi'\vdash a:A$}
\RightLabel{Lin-Tm-Subst}
\BinaryInfC{$\Delta;\Xi,\Xi'\vdash b[a/x]:B$}
\DisplayProof
&
\AxiomC{$\Delta;\Xi,x:A\vdash b\equiv b':B$}
\AxiomC{$\Delta;\Xi'\vdash a:A$}
\RightLabel{Lin-Tm-Subst-Eq}
\BinaryInfC{$\Delta;\Xi,\Xi'\vdash b[a/x]\equiv b'[a/x]:B$}
\DisplayProof\\
&\\
&\\
\end{tabular}\\
\vspace{-15pt}
\normalsize}}
\caption{Exchange, weakening, and substitution rules. Here, $\mathcal{J}$ represents a statement of the form $B\;\mathrm{type}$, $B\equiv B'$, $b:B$, or $b\equiv b':B$, such that all judgements are well-formed.}
\end{figure}
\vspace{-30pt}
\clearpage

\subsubsection*{Logical Rules} We introduce some basic (optional) type and term formers, for which we will give type formation (denoted -F), term introduction (-I), term elimination (-E), term computation rules (-C), and (judgemental) term uniqueness principles (-U). We also assume the obvious rules stating that the new type formers and term formers respect judgemental equality. Moreover, $\Sigma_{!x:!A}$, $\Pi_{!x:!A}$, $\lambda_{!x:!A}$, and $\lambda_{x:A}$ are name-binding operators, binding free occurrences of $x$ within their scope. Anticipating some theorems of the calculus, we overload some of the notation for the -I and -E rules for various type formers, in order to avoid syntactic clutter. Uniqueness of typing can easily be restored by carrying around enough type information in the notation corresponding to the various -I and -E rules.

We require -U-rules for the various type formers in this paper, as this allows us to give a natural categorical semantics. In practice, when building a computational implementation of a type theory like ours, one would probably drop these rules to make the system decidable, which would correspond to switching to weak variants of the categorical constructions presented here.\footnote{In that case, in DTT, one would usually require some stronger `dependent' elimination rules, which would make propositional equivalents of the -U-rules provable, adding some extensionality to the system, while preserving its computational properties. Such rules are problematic in ILDTT, however, from both syntactic and semantic points of view, and further investigation is warranted here.}

\begin{figure}[h]
\centering
\fbox{
\parbox{\textwidth}{\footnotesize
\begin{tabular}{lr}
\AxiomC{$\Delta,x:A;\cdot\vdash B\;\mathrm{type}$}
\RightLabel{$\Sigma$-F}
\UnaryInfC{$\Delta;\cdot\vdash \Sigma_{!x:!{A}}B\;\mathrm{type}$}
\DisplayProof
&

\\
\AxiomC{$\Delta;\cdot \vdash a:A$}
\AxiomC{$\Delta;\Xi \vdash b:B[{a}/x]$}
\RightLabel{$\Sigma$-I}
\BinaryInfC{$\Delta ; \Xi \vdash ! {a} \otimes b:\Sigma_{!x:!{A}}B $}
\DisplayProof
&
\AxiomC{$\Delta;\cdot \vdash C\;\mathrm{type}$}
\noLine
\UnaryInfC{$\Delta;\Xi \vdash t:\Sigma_{!x:!{A}}B$}
\noLine
\UnaryInfC{$\Delta,x:A;\Xi',y:B\vdash c:C$}
\RightLabel{$\Sigma$-E}
\UnaryInfC{$\Delta;\Xi,\Xi' \vdash \mathrm{let}\;t\;\mathrm{be}\;  !{x} \otimes y \;\mathrm{in}\;c:C$}
\DisplayProof
\\
\\
\AxiomC{$ \Delta;\Xi\vdash \mathrm{let}\; !{a} \otimes b\;\mathrm{be}\; ! {x} \otimes y \;\mathrm{in}\;c:C$}
\RightLabel{$\Sigma$-C}
\UnaryInfC{$\Delta;\Xi\vdash \mathrm{let}\; !{a} \otimes b\;\mathrm{be}\; ! {x} \otimes y \;\mathrm{in}\;c\equiv c[{a}/x,b/y]:C$}
\DisplayProof\hspace{44pt}
&

\AxiomC{$ \Delta;\Xi\vdash \mathrm{let}\;t\;\mathrm{be}\; ! {x} \otimes y \;\mathrm{in}\; !{x} \otimes y:\Sigma_{!x:!{A}}B$}
\RightLabel{$\Sigma$-U}
\UnaryInfC{$\Delta;\Xi\vdash \mathrm{let}\;t\;\mathrm{be}\;  !{x} \otimes y \;\mathrm{in}\; !{x} \otimes y\equiv t:\Sigma_{!x:!{A}}B$}
\DisplayProof
\\
&\\
&\\
&\\
\AxiomC{$\Delta,x:A;\cdot \vdash B\;\mathrm{type}$}
\RightLabel{$\Pi$-F}
\UnaryInfC{$\Delta;\cdot\vdash\Pi_{!x:!{A}}B\;\mathrm{type}$}
\DisplayProof
&\\
&\\
\AxiomC{$\vdash \Delta;\Xi\;\mathrm{ctxt}$}
\AxiomC{$\Delta,x:A;\Xi\vdash b:B$}
\RightLabel{$\Pi$-I}
\BinaryInfC{$\Delta;\Xi\vdash \lambda_{!x:{!A}}b:\Pi_{!x:{!A}}B$}
\DisplayProof
&
\AxiomC{$\Delta;\cdot \vdash a:A$}
\AxiomC{$\Delta;\Xi\vdash f:\Pi_{!x:!{A}}B$}
\RightLabel{$\Pi$-E}
\BinaryInfC{$\Delta;\Xi\vdash f(!{a}):B[{a}/x]$}
\DisplayProof\\
&\\
&\\
\AxiomC{$\Delta;\Xi\vdash (\lambda_{!x:!{A}}b)(!{a}):B[{a}/x]$}
\RightLabel{$\Pi$-C}
\UnaryInfC{$\Delta;\Xi\vdash (\lambda_{!x:!{A}}b)(!{a})\equiv b[{a}/x]:B[{a}/x]$}
\DisplayProof
&
\AxiomC{$\Delta;\Xi\vdash \lambda_{!x:!{A}}f(!x):\Pi_{!x:!{A}}B$}
\RightLabel{$\Pi$-U}
\UnaryInfC{$\Delta;\Xi\vdash f\equiv \lambda_{!x:{!A}}f(!x):\Pi_{!x:{!A}}B$}
\DisplayProof
\\
&\\
&\\
&\\
\begin{tabular}{l}
\AxiomC{$\Delta;\cdot \vdash a:A$}
\AxiomC{$\Delta;\cdot \vdash a':A$}
\RightLabel{$\mathrm{Id}$-F}
\BinaryInfC{$\Delta;\cdot \vdash \mathrm{Id}_{!A}(a,a')\;\mathrm{type}$}
\DisplayProof \\
\\
\AxiomC{$\Delta;\cdot \vdash a:A$}
\RightLabel{$\mathrm{Id}$-I}
\UnaryInfC{$\Delta;\cdot \vdash \mathrm{refl}_{!a}:\mathrm{Id}_{!A}(a,a)$}
\DisplayProof
\end{tabular}\hspace{-64pt}
&\hspace{-64pt}\begin{tabular}{l}
\hspace{77pt}$\Delta,x:A,x':A;\cdot \vdash D\;\mathrm{type}$\\
\hspace{77pt}$\Delta,z:A;\Xi \vdash d:D[{z}/x,{z}/x']$\\
\hspace{77pt}$\Delta ;\cdot \vdash a:A$\\
\hspace{77pt}$\Delta;\cdot \vdash a':A$\\
\AxiomC{$\Delta ;\Xi' \vdash p:\mathrm{Id}_{!A}(a,a')$}
\RightLabel{$\mathrm{Id}$-E}
\UnaryInfC{$\Delta;\Xi[a/z],\Xi' \vdash \mathrm{let}\; (a,a',p)\;\mathrm{be}\;(z,z,\mathrm{refl}_{!z})\;\mathrm{in}\; d:D[{a}/x,{a'}/x']$}
\DisplayProof
\end{tabular}\hspace{-5pt}
\\
&\\
&\\
\AxiomC{$\Delta;\Xi\vdash \mathrm{let}\; (a,a,\mathrm{refl}_{!a})\;\mathrm{be}\;(z,z,\mathrm{refl}_{!z})\;\mathrm{in}\; d:D[{a}/x,{a}/x']$}
\RightLabel{$\mathrm{Id}$-C}
\UnaryInfC{$\Delta;\Xi\vdash \mathrm{let}\; (a,a,\mathrm{refl}_{!a})\;\mathrm{be}\;(z,z,\mathrm{refl}_{!z})\;\mathrm{in}\; d\equiv d[{a}/z] :D[{a}/x,{a}/x']$}
\DisplayProof \hspace{-46pt}
& \hspace{-46pt} \\
\end{tabular}\vspace{10pt}
\begin{tabular}{lr}
\AxiomC{$\Delta,x:A,x':A;\Xi, z:\mathrm{Id}_{!A}(x,x') \vdash \mathrm{let}\;(x,x',z)\;\mathrm{be}\;(x,x,\mathrm{refl}_{!x})\;\mathrm{in}\;c[x/x',\mathrm{refl}_{!x}/z]:C$}
\RightLabel{$\mathrm{Id}$-U}
\UnaryInfC{$\Delta,x:A,x':A;\Xi, z:\mathrm{Id}_{!A}(x,x') \vdash \mathrm{let}\;(x,x',z)\;\mathrm{be}\;(x,x,\mathrm{refl}_{!x})\;\mathrm{in}\;c[x/x',\mathrm{refl}_{!x}/z]\equiv c:C$}
\DisplayProof \hspace{-176pt}&\hspace{-176pt}
\end{tabular}
\normalsize
}}
\caption{Rules for linear equivalents of some of the usual type formers from DTT: $\Sigma$-, $\Pi$-, and $\mathrm{Id}$-types.}
\end{figure}
\nopagebreak
\begin{figure}[h]
\centering
\fbox{\parbox{\textwidth}{\footnotesize 
\quad\\
\begin{tabular}{lr}
\AxiomC{}
\RightLabel{$I$-F}
\UnaryInfC{$\Delta;\cdot\vdash I\;\mathrm{type}$}
\DisplayProof
&\\
\AxiomC{}
\RightLabel{$I$-I}
\UnaryInfC{$\Delta;\cdot\vdash *:I$}
\DisplayProof
&
\AxiomC{$\Delta;\Xi'\vdash t:I$}
\AxiomC{$\Delta;\Xi\vdash a:A$}
\RightLabel{$I$-E}
\BinaryInfC{$\Delta;\Xi,\Xi'\vdash \mathrm{let}\;t\;\mathrm{be}\;*\;\mathrm{in}\;a:A$}
\DisplayProof 
 \\
&\\
\AxiomC{$\Delta;\Xi\vdash  \mathrm{let}\;*\;\mathrm{be}\;*\;\mathrm{in}\;a:A$}
\RightLabel{$I$-C}
\UnaryInfC{$\Delta;\Xi\vdash \mathrm{let}\;*\;\mathrm{be}\;*\;\mathrm{in}\;a\equiv a :A$}
\DisplayProof
&
\AxiomC{$\Delta;\Xi\vdash \mathrm{let}\;t\;\mathrm{be}\;*\;\mathrm{in}\;*:I$}
\RightLabel{$I$-U}
\UnaryInfC{$\Delta;\Xi\vdash \mathrm{let}\;t\;\mathrm{be}\;*\;\mathrm{in}\;*\equiv t :I$}
\DisplayProof 
\\
&\\
\AxiomC{$\Delta;\cdot \vdash A\;\mathrm{type}$}
\AxiomC{$\Delta;\cdot \vdash B\;\mathrm{type}$}
\RightLabel{$\otimes$-F}
\BinaryInfC{$\Delta;\cdot \vdash A\otimes B\;\mathrm{type}$}
\DisplayProof 
&\\
&\\
\AxiomC{$\Delta;\Xi\vdash a:A$}
\AxiomC{$\Delta;\Xi'\vdash b:B$}
\RightLabel{$\otimes$-I}
\BinaryInfC{$\Delta;\Xi,\Xi'\vdash a\otimes b:A\otimes B$}
\DisplayProof
&
\AxiomC{$\Delta;\Xi\vdash t:A\otimes B$}
\AxiomC{$\Delta;\Xi',x:A,y:B\vdash c:C$}
\RightLabel{$\otimes$-E}
\BinaryInfC{$\Delta;\Xi,\Xi'\vdash \mathrm{let}\; t\;\mathrm{be}\;x\otimes y\;\mathrm{in} \; c:C$}
\DisplayProof
\\
&\\
\AxiomC{$\Delta;\Xi \vdash \mathrm{let}\; a\otimes b \;\mathrm{be}\;x\otimes y\;\mathrm{in} \; c :C$}
\RightLabel{$\otimes$-C}
\UnaryInfC{$\Delta;\Xi \vdash \mathrm{let}\; a\otimes b \;\mathrm{be}\;x\otimes y\;\mathrm{in} \; c \equiv c[a/x,b/y]:C$}
\DisplayProof \hspace{48pt}
&
\AxiomC{$\Delta;\Xi \vdash \mathrm{let}\; t \;\mathrm{be}\;x\otimes y\;\mathrm{in} \; x\otimes y :A\otimes B$}
\RightLabel{$\otimes$-U}
\UnaryInfC{$\Delta;\Xi \vdash \mathrm{let}\; t \;\mathrm{be}\;x\otimes y\;\mathrm{in} \; x\otimes y\equiv t:A\otimes B$}
\DisplayProof
\\
&\\
&\\
\AxiomC{$\Delta;\cdot \vdash A\;\mathrm{type}$}
\AxiomC{$\Delta;\cdot \vdash B\;\mathrm{type}$}
\RightLabel{$\multimap$-F}
\BinaryInfC{$\Delta;\cdot \vdash A\multimap B\;\mathrm{type}$}
\DisplayProof
&\\
&\\
\AxiomC{$\Delta;\Xi,x:A\vdash b:B$}
\RightLabel{$\multimap$-I}
\UnaryInfC{$\Delta;\Xi\vdash \lambda_{x:A}b:A\multimap B$}
\DisplayProof
&
\AxiomC{$\Delta;\Xi\vdash f:A\multimap B$}
\AxiomC{$\Delta;\Xi'\vdash a:A$}
\RightLabel{$\multimap$-E}
\BinaryInfC{$\Delta;\Xi,\Xi'\vdash f(a):B$}
\DisplayProof
\\
&\\
\AxiomC{$\Delta;\Xi \vdash (\lambda_{x:A}b)(a):B$}
\RightLabel{$\multimap$-C}
\UnaryInfC{$\Delta;\Xi\vdash (\lambda_{x:A}b)(a)\equiv b[a/x]:B$}
\DisplayProof
&
\AxiomC{$\Delta;\Xi \vdash \lambda_{x:A}fx:A\multimap B$}
\RightLabel{$\multimap$-U}
\UnaryInfC{$\Delta;\Xi\vdash \lambda_{x:A}fx\equiv f:A\multimap B$}
\DisplayProof
\end{tabular}
\\
\\
\\
\begin{tabular*}{\textwidth}{lcr}
\AxiomC{}
\RightLabel{$\top$-F}
\UnaryInfC{$\Delta;\cdot\vdash \top\;\mathrm{type}$}
\DisplayProof
\hspace{99pt}
&
\AxiomC{$\vdash \Delta;\Xi\;\mathrm{ctxt}$}
\RightLabel{$\top$-I}
\UnaryInfC{$\Delta;\Xi\vdash \langle\rangle:\top$}
\DisplayProof
\hspace{99pt}
&
\AxiomC{$\Delta;\Xi\vdash t:\top$}
\RightLabel{$\top$-U}
\UnaryInfC{$\Delta;\Xi\vdash t\equiv\langle\rangle:\top$}
\DisplayProof
\end{tabular*}
\\
\\
\\
\begin{tabular}{lr}
\AxiomC{$\Delta;\cdot \vdash A\;\mathrm{type}$}
\AxiomC{$\Delta;\cdot \vdash B\;\mathrm{type}$}
\RightLabel{$\&$-F}
\BinaryInfC{$\Delta;\cdot \vdash A\& B\;\mathrm{type}$}
\DisplayProof
\hspace{154pt}
&
\AxiomC{$\Delta;\Xi\vdash a:A$}
\AxiomC{$\Delta;\Xi\vdash b:B$}
\RightLabel{$\&$-I}
\BinaryInfC{$\Delta;\Xi\vdash \langle a, b\rangle:A\& B$}
\DisplayProof
\\
&\\
\AxiomC{$\Delta;\Xi\vdash t:A\& B$}
\RightLabel{$\&$-E1}
\UnaryInfC{$\Delta;\Xi\vdash \mathrm{fst}(t):A$}
\DisplayProof
&
\AxiomC{$\Delta;\Xi\vdash t:A\& B$}
\RightLabel{$\&$-E2}
\UnaryInfC{$\Delta;\Xi\vdash \mathrm{snd}(t):B$}
\DisplayProof
\\
&\\
\AxiomC{$\Delta;\Xi \vdash \mathrm{fst}(\langle a,b\rangle ):A$}
\RightLabel{$\&$-C1}
\UnaryInfC{$\Delta;\Xi \vdash \mathrm{fst}(\langle a,b\rangle )\equiv a :A$}
\DisplayProof 
&
\AxiomC{$\Delta;\Xi \vdash \mathrm{snd}(\langle a,b\rangle):B$}
\RightLabel{$\&$-C2}
\UnaryInfC{$\Delta;\Xi \vdash \mathrm{snd}(\langle a,b\rangle)\equiv b:B $}
\DisplayProof \\
&\\
&\\
\AxiomC{$\Delta;\Xi\vdash \langle\mathrm{fst}(t),\mathrm{snd}(t) \rangle:A\& B$}
\RightLabel{$\&$-U}
\UnaryInfC{$\Delta;\Xi\vdash \langle\mathrm{fst}(t),\mathrm{snd}(t) \rangle\equiv t:A\& B$}
\DisplayProof
&\\
&\\
\end{tabular}
\begin{tabular}{lcr}
\AxiomC{}
\RightLabel{$0$-F}
\UnaryInfC{$\Delta ; \cdot\vdash 0\;\mathrm{type}$}
\DisplayProof
\hspace{71pt}
&
\AxiomC{$\Delta;\Xi\vdash t:0$}
\RightLabel{$0$-E}
\UnaryInfC{$\Delta;\Xi,\Xi'\vdash \mathrm{false}(t) :B$}
\DisplayProof
\hspace{71pt}
&
\AxiomC{$\Delta;\Xi\vdash t:0$}
\RightLabel{$0$-U}
\UnaryInfC{$\Delta;\Xi\vdash \mathrm{false}(t)\equiv t :0$}
\DisplayProof
\end{tabular}
\\
\\
\\
\begin{tabular}{lr}
\AxiomC{$\Delta;\cdot \vdash A\;\mathrm{type}$}
\AxiomC{$\Delta;\cdot \vdash B\;\mathrm{type}$}
\RightLabel{$\oplus$-F}
\BinaryInfC{$\Delta;\cdot \vdash A\oplus B\;\mathrm{type}$}
\DisplayProof
& \\
& \\
\AxiomC{$\Delta ;\Xi\vdash a: A$}
\RightLabel{$\oplus$-I1}
\UnaryInfC{$\Delta;\Xi\vdash \mathrm{inl}(a): A\oplus B$}
\DisplayProof
&
\hspace{64pt}
\AxiomC{$\Delta ;\Xi\vdash b: B$}
\RightLabel{$\oplus$-I2}
\UnaryInfC{$\Delta;\Xi\vdash \mathrm{inr}(b): A\oplus B$}
\DisplayProof
\\
&\\
&\\
\AxiomC{$\Delta ;\Xi,x:A\vdash c: C$}
\AxiomC{$\Delta ;\Xi,y:B\vdash d: C$}
\AxiomC{$\Delta ;\Xi'\vdash t:A\oplus B$}
\RightLabel{$\oplus$-E}
\TrinaryInfC{$\Delta;\Xi,\Xi'\vdash \mathrm{case}\; t\;\mathrm{of}\;\mathrm{inl}(x)\rightarrow c\;||\;\mathrm{inr}(y)\rightarrow d :C$}
\DisplayProof
&\\
&\\
\AxiomC{$\Delta;\Xi,\Xi'\vdash \mathrm{case}\; \mathrm{inl}(a)\;\mathrm{of}\;\mathrm{inl}(x)\rightarrow c\;||\;\mathrm{inr}(y)\rightarrow d :C$}
\RightLabel{$\oplus$-C1}
\UnaryInfC{$\Delta;\Xi,\Xi'\vdash \mathrm{case}\; \mathrm{inl}(a)\;\mathrm{of}\;\mathrm{inl}(x)\rightarrow c\;||\;\mathrm{inr}(y)\rightarrow d \equiv c[a/x]:C$}
\DisplayProof
&\\
&\\
\AxiomC{$\Delta;\Xi,\Xi'\vdash \mathrm{case}\; \mathrm{inr}(b)\;\mathrm{of}\;\mathrm{inl}(x)\rightarrow c\;||\;\mathrm{inr}(y)\rightarrow d :C$}
\RightLabel{$\oplus$-C2}
\UnaryInfC{$\Delta;\Xi,\Xi'\vdash \mathrm{case}\; \mathrm{inr}(b)\;\mathrm{of}\;\mathrm{inl}(x)\rightarrow c\;||\;\mathrm{inr}(y)\rightarrow d \equiv d[b/y]:C$}
\DisplayProof &\\
&\\
\AxiomC{$\Delta;\Xi\vdash \mathrm{case}\; t\;\mathrm{of}\;\mathrm{inl}(x)\rightarrow \mathrm{inl}(x)\;||\;\mathrm{inr}(y)\rightarrow \mathrm{inr}(y):A\oplus B $}
\RightLabel{$\oplus$-U}
\UnaryInfC{$\Delta;\Xi\vdash \mathrm{case}\; t\;\mathrm{of}\;\mathrm{inl}(x)\rightarrow \mathrm{inl}(x)\;||\;\mathrm{inr}(y)\rightarrow \mathrm{inr}(y)\equiv t:A\oplus B$}
\DisplayProof
\end{tabular}}}
\end{figure}
\clearpage
\begin{figure}
\centering
\fbox{\parbox{\textwidth}{\footnotesize
\quad\\
\begin{tabular}{lr}
\AxiomC{$\Delta;\cdot\vdash A\;\mathrm{type}$}
\RightLabel{$!$-F}
\UnaryInfC{$\Delta;\cdot \vdash !A\;\mathrm{type}$}
\DisplayProof
& \\
& \\
\AxiomC{$\Delta;\cdot \vdash a:A$}
\RightLabel{$!$-I}
\UnaryInfC{$\Delta;\cdot\vdash !a:!A$}
\DisplayProof
&\hspace{128pt}
\AxiomC{$\Delta;\Xi\vdash t:!A$}
\AxiomC{$\Delta,x:A;\Xi'\vdash b:B$}
\RightLabel{$!$-E}
\BinaryInfC{$\Delta;\Xi,\Xi'\vdash \mathrm{let}\; t\;\mathrm{be}\;!x\; \mathrm{in}\; b:B$}
\DisplayProof
\\
&\\
&\\
\AxiomC{$\Delta;\Xi\vdash\mathrm{let}\; !a\;\mathrm{be}\;!x\; \mathrm{in}\; b:B$}
\RightLabel{$!$-C}
\UnaryInfC{$\Delta;\Xi\vdash \mathrm{let}\; !a\;\mathrm{be}\;!x\; \mathrm{in}\; b\equiv b[{a}/x]:B$}
\DisplayProof &
\AxiomC{$\Delta;\Xi\vdash \mathrm{let}\;t\;\mathrm{be}\;!x\; \mathrm{in}\; !x:!A$}
\RightLabel{$!$-U}
\UnaryInfC{$\Delta;\Xi\vdash \mathrm{let}\;t\;\mathrm{be}\;!x\; \mathrm{in}\; !x \equiv t:!A$}
\DisplayProof 
\end{tabular}
\normalsize
}}
\caption{Rules for the usual linear type formers in each context: $I$-, $\otimes$-, $\multimap$-, $\top$-, $\&$-, $0$-, $\oplus$-, and $!$-types.}
\end{figure}
Finally, we add rules for all possible commuting conversions, which from a syntactic point of view restore the subformula property and from a semantic point of view say that our rules are natural transformations (between hom-functors), which simplifies the categorical semantics significantly. We represent these schematically, following \cite{barber1996dual}. That is, if $C[-]$ is a linear program context (rather than a typing context), then we impose the following equations (abusing notation and dealing with all the $\lbi{}{}{}$-constructors in one go).
\begin{figure}[h]
\centering
\fbox{\parbox{\textwidth}{\footnotesize
\quad\\
\AxiomC{$\Delta ;\Xi \vdash C[\lbi{a}{b}{c}]:D$}
\UnaryInfC{$\Delta ;\Xi \vdash C[\lbi{a}{b}{c}]\equiv \lbi{a}{b}{C[c]}:D$}
\DisplayProof
\quad\\
\\
if $C[-]$ does not bind any free variables in $a$ or $b$;
\quad \\
\\
\AxiomC{$\Delta;\Xi\vdash C[\mathrm{false}(t)]:D$}
\UnaryInfC{$\Delta;\Xi\vdash C[\mathrm{false}(t)]\equiv\mathrm{false}(t):D$}
\DisplayProof
\quad\\
\\
if $C[-]$ does not bind any free variables in $t$;
\quad \\
\\
\AxiomC{$\Delta;\Xi\vdash C[\mathrm{case}\;t\;\mathrm{of}\;\mathrm{inl}(x)\rightarrow c\; ||\; \mathrm{inr}(y)\rightarrow d] :D$}
\UnaryInfC{$\Delta;\Xi\vdash C[\mathrm{case}\;t\;\mathrm{of}\;\mathrm{inl}(x)\rightarrow c\; ||\; \mathrm{inr}(y)\rightarrow d]\equiv \mathrm{case}\;t\;\mathrm{of}\;\mathrm{inl}(x)\rightarrow C[c]\; ||\; \mathrm{inr}(y)\rightarrow C[d] :D$}
\DisplayProof
\quad\\
\\
if $C[-]$ does not bind any free variables in $t$ or $x$ or $y$.
\normalsize
}}
\caption{Commuting conversions.}
\end{figure}

\begin{remark}Note that all type formers that are defined context-wise ($I$, $\otimes$, $\multimap$, $\top$, $\&$, $0$, $\oplus$, and $!$) are automatically preserved under the substitutions from Int-Ty-Subst (up to canonical isomorphism\footnote{By an isomorphism of types $\Delta;\cdot\vdash A\;\mathrm{type}$ and $\Delta;\cdot\vdash B\;\mathrm{type}$ in context $\Delta$, we mean here a pair of terms $\Delta;x:A\vdash f:B$ and $\Delta;y:B\vdash g:A$ together with a pair of judgemental equalities $\Delta;x:A\vdash g[f/y]\equiv x:A$ and $\Delta;y:B\vdash f[g/x]\equiv y:B$.}), in the sense that $F(A_1,\ldots, A_n)[{a}/x]$ is isomorphic to $F(A_1[{a}/x],\ldots,A_n[{a}/x])$ for an $n$-ary type former $F$. Similarly, for $T=\Sigma$ or $\Pi$, we have that $(T_{!y:!B}C)[{a}/x]$ is isomorphic to $T_{!y:!B[{a}/x]}C[{a}/x]$ and $(Id_{!B}(b,b'))[a/x]$ is isomorphic to $Id_{!B[a/x]}(b[a/x],b'[a/x])$. (This gives us Beck-Chevalley conditions in the categorical semantics.) These are the remaining naturality conditions for the rules.\end{remark}

\begin{remark}
Note that the usual formulation of universes for DTT transfers naturally to ILDTT, giving us a notion of universes for linear types. This allows us to write rules for forming types as rules for forming terms, as usual. We do not take this approach; we define the various type formers in the setting without universes, as this will give a cleaner categorical semantics.
\end{remark}
\subsubsection*{Some Basic Results} As the focus of this paper is the syntax-semantics correspondence, we will only briefly state a few syntactic results. For some standard metatheoretic properties of the $\multimap,\Pi,\top,\&$-fragment of our syntax, we refer the reader to \cite{cervesato1996linear}. Standard techniques and minor adaptations of the system should suffice to extend the results to all of ILDTT.
\begin{theorem}[Consistency] ILDTT with all its type formers is consistent.\end{theorem}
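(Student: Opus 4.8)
The plan is to establish consistency by constructing a concrete model and invoking the soundness of the categorical semantics (which, although it appears later in the paper, I am entitled to assume). Consistency here should mean that not every judgement is derivable; concretely, that there is no closed term of the empty type $0$ in the empty context, i.e. $\cdot;\cdot\vdash t:0$ is not derivable. The cleanest strategy is to exhibit a single model of ILDTT with all its type formers in which the interpretation of $0$ is a non-terminal object (or, more simply, in which the hom-set interpreting closed terms of $0$ is empty), so that soundness forces the non-derivability of any such term.

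First I would exhibit the simplest possible model. The trivial base category with one object and the one-object (or $\mathbf{Set}$-like) symmetric monoidal fibre is too degenerate to separate $0$ from $I$, so instead I would take a concrete families model of the kind developed in Section~\ref{sec:dismod}: the base $\mathcal{C}$ a suitable cartesian category (e.g. $\mathbf{Set}$ or a category of contexts) and the indexed symmetric monoidal category given by families valued in a fixed nondegenerate symmetric monoidal category $\mathcal{V}$, for instance $(\mathbf{Vect}_k,\otimes)$ or $(\mathbf{Set},\times)$ viewed multiplicatively, with $!$ interpreted via the canonical comonad. One then checks that all the optional type formers ($I,\otimes,\multimap,\top,\&,0,\oplus,!,\Sigma,\Pi,\mathrm{Id}$) are interpreted by the corresponding categorical structure present in this model, and crucially that the interpretation of $0$ is an initial object of each fibre which is \emph{not} isomorphic to the monoidal unit $I$. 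In such a $\mathcal{V}$ there are no morphisms $I\to 0$, so the set interpreting $\cdot;\cdot\vdash t:0$ is empty.

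The key steps, in order, are: (i) pin down the precise statement of consistency (non-derivability of $\cdot;\cdot\vdash t:0$); (ii) invoke soundness of the interpretation of ILDTT in strict indexed symmetric monoidal categories with comprehension, so that any derivable term yields an actual morphism in the model; (iii) verify that the chosen families model genuinely supports all the type formers with their -F, -I, -E, -C, and -U rules, paying particular attention to $!$ (the linear-non-linear adjunction), the comprehension schema, and the multiplicative $\Sigma$- and $\Pi$-types; and (iv) observe that in this model $\mathcal{V}(I,0)=\emptyset$, whence by contraposition of soundness no derivation of $\cdot;\cdot\vdash t:0$ can exist. An entirely analogous argument shows more generally that $I$ and $0$ are not judgementally equal types, ruling out the collapse of the theory.

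The main obstacle I expect is step (iii): carefully checking that a single model simultaneously validates \emph{all} the type formers together with their uniqueness (-U) rules, rather than just the $\multimap,\Pi,\top,\&$-fragment treated in \cite{cervesato1996linear}. The $!$-modality and the comprehension structure interact subtly, and one must confirm that the chosen $\mathcal{V}$ has enough structure (products, coproducts, a unit distinct from the initial object, exponentials for $\multimap$, and a suitable comonad for $!$) while remaining nondegenerate; degeneracy anywhere (e.g. $I\cong 0$ in $\mathcal{V}$) would silently destroy the argument. Since the full soundness theorem and the detailed analysis of the families models are developed later in the paper, the honest version of this proof defers those verifications, and this is presumably why the authors state only that ``standard techniques and some small adaptations'' suffice; the real content is the construction of one nondegenerate model, which the later sections supply.
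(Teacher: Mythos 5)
Your overall strategy is exactly the paper's: its entire proof of consistency is a pointer to the families models of section \ref{sec:dismod}, i.e.\ soundness plus a concrete model. However, there is a genuine flaw in your step (iv). You claim that if the interpretation of $0$ is an initial object not isomorphic to $I$, then there are no morphisms $I\to 0$, and you offer $(\mathrm{Vect}_k,\otimes)$ as a witness. That inference is invalid, and $\mathrm{Vect}_k$ is itself the counterexample: the zero vector space is a \emph{zero} object (both initial and terminal), so $\mathrm{Vect}_k(k,0)=\{0\}\neq\emptyset$ even though $k\not\cong 0$. Consequently, in the families model over $\mathrm{Vect}_k$ the empty type \emph{does} have a closed term, and the same holds for both concrete examples the paper emphasises ($\mathrm{Vect}_F$ and $\mathrm{Set}_*$, whose initial objects are likewise zero objects). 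None of these ``genuinely linear'' models can establish consistency in your sense of the non-derivability of $\cdot;\cdot\vdash t:0$.

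The proof is repairable in either of two ways. Either keep your definition of consistency and choose $\mathcal{V}$ with $\mathcal{V}(I,0)=\emptyset$ --- your other suggestion, $(\mathrm{Set},\times,1)$, does work, since $\mathrm{Set}(1,\emptyset)=\emptyset$ and $\mathrm{Fam}(\mathrm{Set})$ supports all the type formers by the theorems of section \ref{sec:dismod} (coproducts distribute over products, $\mathrm{Set}$ is cartesian closed, and $\emptyset$ is initial) --- or weaken consistency to the statement that not every judgement is derivable (for instance $\cdot;\cdot\vdash I\equiv 0\;\mathrm{type}$, or $\mathrm{inl}(*)\equiv\mathrm{inr}(*):I\oplus I$, is not derivable), which \emph{any} of the models witnesses, because the relevant interpretations are non-isomorphic (respectively distinct morphisms). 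The worry you half-articulate --- that ``degeneracy anywhere would silently destroy the argument'' --- is the right instinct, but the relevant nondegeneracy condition is not $I\not\cong 0$; it is the much stronger requirement $\mathcal{V}(I,0)=\emptyset$, which fails automatically in any pointed (in particular any additive) category, i.e.\ in precisely the models that make the theory interestingly linear.
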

\begin{proof}We discuss a class of models in section \ref{sec:dismod}.\end{proof}

To give the reader some intuition for these linear $\Pi$- and $\Sigma$-types, we suggest the following two interpretations.

\begin{theorem}[$\Pi$ and $\Sigma$ as Dependent $!(-)\multimap(-)$ and $!(-)\otimes(-)$] Suppose we have $!$-types. Let $\Delta,x:A;\cdot \vdash B\;\mathrm{type}$, where $x$ does not occur freely in $B$. Then, for the purposes of the type theory,
\begin{enumerate} 
\item $\Pi_{!x:{!A}}B$ is isomorphic to $!A\multimap B$, if we have $\Pi$-types and $\multimap$-types;
\item $\Sigma_{!x:{!A}}B$ is isomorphic to $!A\otimes B$, if we have $\Sigma$-types and $\otimes$-types.
\end{enumerate}
\end{theorem}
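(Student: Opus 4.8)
The plan is to prove each claim directly from the definition of isomorphism of types given in the footnote to Remark 1: for each case I will write down an explicit pair of terms $f$ and $g$ in a single linear variable, check they are well-typed using the -I and -E rules just introduced, and then verify the two round-trip equalities using the -C and -U rules together with the commuting conversions. The two parts are formally parallel — $\multimap$ is to $\Pi$ as $\otimes$ is to $\Sigma$, and $!$ mediates both — so I would carry out the $\Pi$ case in full and then transcribe the argument for $\Sigma$.

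For (1), writing $p$ for a linear variable of type $\Pi_{!x:!A}B$ and $q$ for one of type $!A\multimap B$ (renaming bound variables to avoid capture), I would set
$$f:=\lambda_{u:!A}(\lbi{u}{!x}{p(!x)})\txt{and}g:=\lambda_{!x:!A}(q(!x)).$$
Here $f$ is well-typed by $!$-E (destructing $u:!A$ to obtain $x:A$ intuitionistically), $\Pi$-E (applying $p$ to $!x$, using $B[x/x]=B$ since $x$ is not free in $B$), and $\multimap$-I; dually $g$ uses $!$-I, $\multimap$-E and $\Pi$-I. For (2), with $t:\Sigma_{!x:!A}B$ and $s:!A\otimes B$, I would set
$$f:=\lbi{t}{!x\otimes y}{!x\otimes y}\txt{and}g:=\lbi{s}{u\otimes y}{\lbi{u}{!x}{!x\otimes y}},$$
where the outer binder of $f$ is $\Sigma$-E and its body is $\otimes$-I, while $g$ uses $\otimes$-E, then $!$-E, then $\Sigma$-I.

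The verifications split into an easy and a hard round trip in each part. The easy ones ($g[f/q]\equiv p$ and $g[f/s]\equiv t$) reduce by a $\multimap$-C (resp. a commuting conversion pushing the outer eliminator inward, then $\otimes$-C), then a $!$-C to discharge the arising $\lbi{!x}{!x'}{-}$, and finally collapse under $\Pi$-U (resp. $\Sigma$-U). The hard round trips ($f[g/p]\equiv q$ and $f[g/t]\equiv s$) first reduce by $\Pi$-C (resp. $\Sigma$-C, after commuting conversions) to leave a stranded subterm of the shape $\lbi{u}{!x}{C[!x]}$, with $C[-]=q(-)$ in the first case and $C[-]=(-)\otimes y$ in the second. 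The main obstacle is to show that such a term equals $C[u]$: I expect to do this with the commuting conversion $C[\lbi{u}{!x}{!x}]\equiv\lbi{u}{!x}{C[!x]}$ combined with $!$-U, which gives $\lbi{u}{!x}{!x}\equiv u$ and hence $\lbi{u}{!x}{C[!x]}\equiv C[u]$. A final $\multimap$-U (resp. $\otimes$-U) then yields $q$ (resp. $s$). This reabsorption of a dangling $!$-elimination is precisely the step for which the commuting conversions of the previous figure are needed, and it is where I would concentrate the care.
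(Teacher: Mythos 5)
Your proposal is correct and takes essentially the same route as the paper's own proof: your terms $f$ and $g$ are, up to renaming of bound and free variables, exactly the ones constructed there, and your verification plan invokes the same rules in the same order ($\multimap$-C resp.\ $\otimes$-C, then $!$-C, then $\Pi$-U resp.\ $\Sigma$-U for the easy round trips; $\Pi$-C resp.\ $\Sigma$-C, then commuting conversions with $!$-U, then $\multimap$-U resp.\ $\otimes$-U for the hard ones). In particular, you correctly isolate the one genuinely delicate step — reabsorbing the dangling $!$-elimination via the commuting conversion $C[\lbi{u}{!x}{!x}]\equiv\lbi{u}{!x}{C[!x]}$ combined with $!$-U — which is exactly where the paper's proof also relies on the commuting conversions.
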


\begin{proof}\begin{enumerate}
\item We will construct terms
$$\Delta;y:\Pi_{!x:{!A}}B\vdash f:!A\multimap B\txt{and} \Delta;y':!A\multimap B\vdash g:\Pi_{!x:!{A}}B
$$
such that
$$\Delta;y:\Pi_{!x:{!A}}B\vdash g[f/y']\equiv y:\Pi_{!x:{!A}}B\txt{and}\Delta;y':!A\multimap B\vdash f[g/y]\equiv y':!A\multimap B.$$
First, we construct $f$.\\
\\
\AxiomC{}
\RightLabel{Int-Var}
\UnaryInfC{$\Delta,x:A;\cdot \vdash x:A$}
\AxiomC{}
\RightLabel{Lin-Var}
\UnaryInfC{$\Delta,x:A;y:\Pi_{!x:{!A}}B\vdash y:\Pi_{!x:{!A}}B$}
\RightLabel{$\Pi$-E}
\BinaryInfC{$\Delta,x:A;y:\Pi_{!x:{!A}}B\vdash y(!x):B$}
\AxiomC{}
\RightLabel{Lin-Var}
\UnaryInfC{$\Delta;x':!A\vdash x':!A$}
\RightLabel{$!$-E}
\BinaryInfC{$\Delta;y:\Pi_{!x:{!A}}B,x':!A\vdash \mathrm{let}\; x'\;\mathrm{be}\;!x\;\mathrm{in}\;y(!x) :B$}
\RightLabel{$\multimap$-I}
\UnaryInfC{$\Delta;y:\Pi_{!x:!{A}}B\vdash f:!A\multimap B$}
\DisplayProof
\quad\\
\\
Then, we construct $g$.\\
\\
\AxiomC{}
\RightLabel{Int-Var}
\UnaryInfC{$\Delta,x:A;\cdot \vdash x:A$}
\RightLabel{$!$-I}
\UnaryInfC{$\Delta,x:A;\cdot \vdash !x:!A$}
\AxiomC{}
\RightLabel{Lin-Var}
\UnaryInfC{$\Delta,x:A;y':!A\multimap B\vdash y':!A\multimap B$}
\RightLabel{$\multimap$-E}
\BinaryInfC{$\Delta,x:A;y':!A\multimap B\vdash y'(!x):B$}
\RightLabel{$\Pi$-I}
\UnaryInfC{$\Delta;y':!A\multimap B\vdash g:\Pi_{!x:{!A}}B$}
\DisplayProof
\normalsize
\quad\\
\\
It is easily verified that $\multimap$-C, $!$-C, and $\Pi$-U imply the first judgemental equality:\\
 $g[f/y']\equiv \lambda_{!x:{!A}}(\lambda_{x':!A}\mathrm{let}\;x'\;\mathrm{be}\;!x\;\mathrm{in}\;y(!x))(!x)\equiv \lambda_{!x:{!A}}\mathrm{let}\;!x\;\mathrm{be}\;!x\;\mathrm{in}\;y(!x)\equiv \lambda_{!x:{!A}} y(!x)\equiv y$.\\
\\
Similarly, $\Pi$-C, commuting conversions, $!$-U, and $\multimap$-U imply the second judgemental equality:\\ $f[g/y]\equiv \lambda_{x':!A}\mathrm{let}\;x'\;\mathrm{be}\;!x\;\mathrm{in}\;(\lambda_{!x:{!A}}y'(!x))(!x)\equiv \lambda_{x':!A}\mathrm{let}\;x'\;\mathrm{be}\;!x\;\mathrm{in}\;y'(!x)\equiv \lambda_{x':!A}y'(\mathrm{let}\;x'\;\mathrm{be}\;!x\;\mathrm{in}\;!x)\\ \equiv \lambda_{x':!A}y'(x')\equiv y'$.

\item We will construct terms
$$\Delta;y:\Sigma_{!x:{!A}}B\vdash f:!A\otimes B\txt{and} \Delta;y':!A\otimes B\vdash g:\Sigma_{!x:{!A}}B
$$
such that
$$\Delta;y:\Sigma_{!x:{!A}}B\vdash g[f/y']\equiv y:\Sigma_{!x:{!A}}B\txt{and}\Delta;y':!A\otimes B\vdash f[g/y]\equiv y':!A\otimes B.$$
First, we construct $f$.\\
\\
\scriptsize
\AxiomC{}
\RightLabel{Lin-Var}
\UnaryInfC{$\Delta;y:\Sigma_{!x:{!A}}B\vdash y:\Sigma_{!x:!{A}}B$}
\AxiomC{}
\RightLabel{Lin-Var}
\UnaryInfC{$\Delta;x':!A\vdash x':!A$}
\AxiomC{}
\RightLabel{Lin-Var}
\UnaryInfC{$\Delta;z:B\vdash z:B$}
\RightLabel{$\otimes$-I}
\BinaryInfC{$\Delta;x':!A,z:B\vdash x'\otimes z:!A\otimes B$}
\RightLabel{Int-Weak}
\UnaryInfC{$\Delta,x:A;x':!A,z:B\vdash x'\otimes z:!A\otimes B$}
\AxiomC{}
\RightLabel{Int-Var}
\UnaryInfC{$\Delta,x:A;\cdot \vdash x:A$}
\RightLabel{$!$-I}
\UnaryInfC{$\Delta,x:A;\cdot \vdash !x:!A$}
\RightLabel{Lin-Subst}
\BinaryInfC{$\Delta,x:A;z:B\vdash !x\otimes z:!A\otimes B$}
\RightLabel{$\Sigma$-E}
\BinaryInfC{$\Delta;y:\Sigma_{!x:{!A}}B\vdash f:!A\otimes B$}
\DisplayProof
\normalsize
\quad\\
\\
Then, we construct $g$.\\
\\
\tiny
\AxiomC{}
\RightLabel{Lin-Var}
\UnaryInfC{$\Delta;y':!A\otimes B\vdash y':!A\otimes B$}
\AxiomC{}
\RightLabel{Int-Var}
\UnaryInfC{$\Delta,x:A;\cdot\vdash x:A$}
\AxiomC{}
\RightLabel{Lin-Var}
\UnaryInfC{$\Delta,x:A;y:B\vdash y:B$}
\RightLabel{$\Sigma$-I}
\BinaryInfC{$\Delta,x:A;y:B\vdash  !{x} \otimes y :\Sigma_{!x:{!A}}B$}
\AxiomC{}
\RightLabel{Lin-Var}
\UnaryInfC{$\Delta;x':!A\vdash x':!A$}
\RightLabel{$!$-E}
\BinaryInfC{$\Delta;x':!A,y: B\vdash \mathrm{let}\;x'\;\mathrm{be}\;!x\;\mathrm{in}\; ! {x} \otimes y:\Sigma_{!x:{!A}}B$}
\RightLabel{$\otimes$-E}
\BinaryInfC{$\Delta;y':!A\otimes B\vdash g:\Sigma_{!x:{!A}}B$}
\DisplayProof
\normalsize
\quad\\
\\
Here, the first judgemental equality follows from commuting conversions, $\otimes$-C, $!$-C, and $\Sigma$-U:\\
$g[f/y']\equiv\lbi{(\lbi{y}{!{x} \otimes z}{!x\otimes z})}{x'\otimes y}{(\lbi{x'}{!x}{! {x} \otimes y})}\equiv \lbi{y}{!x \otimes z}{\lbi{!x\otimes z}{x'\otimes y}{(\lbi{x'}{!x}{ !{x} \otimes y})}}\equiv  \lbi{y}{!{x} \otimes z}{(\lbi{x'}{!x}{ !{x} \otimes y})[!x/x'][z/y]} \equiv \lbi{y}{!{x} \otimes z}{(\lbi{!x}{!x}{ !{x} \otimes z})} 
\equiv \lbi{y}{!{x} \otimes z}{ !{x} \otimes z} \equiv y$.

The second judgemental equality follows from commuting conversions, $\Sigma$-C, $!$-U, and $\otimes$-U:\\
$f[g/y]\equiv \lbi{(\lbi{y'}{x'\otimes y}{(\lbi{x'}{!x}{ !{x} \otimes y})})}{ !{x} \otimes z}{!x\otimes z}\equiv \lbi{y'}{x'\otimes y}{\lbi{x'}{!x}{\lbi{ !{x} \otimes y}{ !{x} \otimes z}{!x\otimes z}}}\equiv \lbi{y'}{x'\otimes y}{\lbi{x'}{!x}{(!x\otimes y)}}\equiv \lbi{y'}{x'\otimes y}{(\lbi{x'}{!x}{!x})\otimes y}\equiv \lbi{y'}{x'\otimes y}{x'\otimes y}\equiv y'$.
\end{enumerate}
\end{proof}

In particular, we have the following stronger version of a special case.
\begin{theorem}[$!$ as $\Sigma I$]\label{thm:!fromsigma}
Suppose we have $\Sigma$- and $I$-types. Let $\Delta;\cdot \vdash A\;\mathrm{type}$. Then, $\Sigma_{!x:{!A}}I$ satisfies the rules for $!A$. Conversely, if we have $!$- and $I$-types, then $!A$ satisfies the rules for $\Sigma_{!x:{!A}}I$. 
\end{theorem}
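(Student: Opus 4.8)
The plan is to prove both directions by giving explicit translations of the term formers of one type into those of the other, and then checking that each introduction, elimination, computation ($-$C) and uniqueness ($-$U) rule is validated as a judgemental equality, using the commuting conversions together with the unit laws $I$-C and $I$-U as the glue. Since $I$ does not depend on $x$, the type $\Sigma_{!x:!A}I$ is well-formed by $I$-F (in the context extended by $x:A$) followed by $\Sigma$-F, and dually $!A$ is well-formed by $!$-F from $\Delta;\cdot\vdash A\;\mathrm{type}$; this disposes of both formation rules at once.

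For the direction ``$\Sigma_{!x:!A}I$ satisfies the rules for $!A$'', I would define the $!$-introduction $!a$ to be $!a\otimes *$ (via $\Sigma$-I with second component $*:I$, noting $I[a/x]=I$) and the $!$-elimination $\lbi{t}{!x}{b}$ to be $\lbi{t}{!x\otimes y}{(\lbi{y}{*}{b})}$ (via $\Sigma$-E with motive $C:=B$, using $I$-E to absorb the freshly bound linear variable $y:I$). The typing of $!$-E then follows by chaining $I$-E and $\Sigma$-E. For $!$-C I would compute $\lbi{(!a\otimes *)}{!x\otimes y}{(\lbi{y}{*}{b})}$, applying $\Sigma$-C to reduce it to $(\lbi{y}{*}{b})[a/x,*/y]=\lbi{*}{*}{b[a/x]}$ and then $I$-C to reach $b[a/x]$. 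For $!$-U I would first establish, in the context containing $y:I$, that $\lbi{y}{*}{(!x\otimes *)}\equiv !x\otimes y$: indeed $I$-U gives $y\equiv\lbi{y}{*}{*}$, congruence gives $!x\otimes y\equiv !x\otimes(\lbi{y}{*}{*})$, and a commuting conversion pulls $\lbi{y}{*}{-}$ outside the pair; substituting this equality under the $\Sigma$-E binder and applying $\Sigma$-U then yields $t$.

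For the converse, working with the constant family $I$, I would define the $\Sigma$-introduction $!a\otimes b$ (with $b:I$) to be $\lbi{b}{*}{!a}$ (via $I$-E absorbing $b$, where $!a:!A$ comes from $!$-I) and the $\Sigma$-elimination $\lbi{t}{!x\otimes y}{c}$ to be $\lbi{t}{!x}{c[*/y]}$ (via $!$-E, after substituting the canonical $*:I$ for the linear variable $y$ by Lin-Tm-Subst). Again the typing is a routine chaining of the two elimination rules with motive $C$. The rule $\Sigma$-C unwinds, by a commuting conversion and $!$-C, to $\lbi{b}{*}{c[a/x,*/y]}$, and $\Sigma$-U unwinds by $I$-C and $!$-U exactly as in the first direction.

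The main obstacle, and the only nonroutine point, is the auxiliary fact that for $b:I$ and a term $c'$ containing the linear variable $y:I$ exactly once one has $c'[b/y]\equiv\lbi{b}{*}{c'[*/y]}$; this is precisely what is needed to match the reduced left-hand side of $\Sigma$-C against its stated right-hand side $c[a/x,b/y]$. I would derive it by writing $c'=C'[y]$ for a linear program context $C'[-]$, using $I$-U to rewrite $b$ as $\lbi{b}{*}{*}$ inside $C'$, and then applying the commuting conversion for $I$-E to pull $\lbi{b}{*}{-}$ out of $C'$, leaving $\lbi{b}{*}{C'[*]}=\lbi{b}{*}{c'[*/y]}$; the side condition that $C'[-]$ captures no free variable of $b$ holds because the linear regions of $b$ and of $c'$ are disjoint. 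Granting this \emph{unit-absorption} equality, all remaining verifications are bookkeeping with substitution and with congruence of judgemental equality under the various binders.
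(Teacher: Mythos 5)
Your proposal is correct and follows essentially the same route as the paper's proof: the same encodings ($!a$ as $!a\otimes *$ via $\Sigma$-I/$I$-I, and $\mathrm{let}\;t\;\mathrm{be}\;!x\;\mathrm{in}\;b$ as a $\Sigma$-E wrapped around an $I$-E), with $!$-C following from $\Sigma$-C and $I$-C, and $!$-U from a commuting conversion, $I$-U, and $\Sigma$-U. The only difference is that the paper dismisses the converse direction as ``a similarly trivial argument,'' whereas you work it out explicitly; your unit-absorption equality $c'[b/y]\equiv\lbi{b}{*}{c'[*/y]}$, derived from $I$-U plus the commuting conversions, is precisely the detail needed to match the reduct of $\Sigma$-C against $c[a/x,b/y]$ in that direction, and your derivation of it is sound.
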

\begin{proof}We obtain the $!$-I rule as follows.\\
\\
\AxiomC{$\Delta;\cdot\vdash a:A$}
\AxiomC{}
\RightLabel{$I$-I}
\UnaryInfC{$\Delta,x:A; \cdot \vdash *:I$}
\RightLabel{$\Sigma$-I}
\BinaryInfC{$\Delta;\cdot\vdash  !{a} \otimes *:\Sigma_{!x:{!A}}I$}
\DisplayProof
\quad\\
\\
We obtain the $!$-E rule as follows.\\
\\
\AxiomC{$\Delta;\Xi\vdash t:\Sigma_{!x:{!A}}I$}
\AxiomC{$\Delta,x:A;\Xi'\vdash c:C$}
\AxiomC{}
\RightLabel{Lin-Var}
\UnaryInfC{$\Delta;y:I\vdash y:I$}
\RightLabel{$I$-E}
\BinaryInfC{$\Delta,x:A;\Xi',y:I\vdash \mathrm{let}\; y\;\mathrm{be}\; *\;\mathrm{in}\;c:C$}
\RightLabel{$\Sigma$-E}
\BinaryInfC{$\Delta;\Xi,\Xi'\vdash \mathrm{let}\; t\;\mathrm{be}\; !{x} \otimes y\;\mathrm{in}\;\mathrm{let}\; y\;\mathrm{be}\; *\;\mathrm{in}\;c:C$.}
\DisplayProof
\quad\\
\\
It is easily seen that $\Sigma$-C and $I$-C imply $!$-C ($\mathrm{let}\;!{a} \otimes *\;\mathrm{be}\;{!x} \otimes y\;\mathrm{in}\; \mathrm{let}\; y\;\mathrm{be}\; *\;\mathrm{in}\;c\equiv(\mathrm{let}\; y\;\mathrm{be}\; *\;\mathrm{in}\;c)[{a}/x][*/y]\equiv \mathrm{let}\; *\;\mathrm{be}\; *\;\mathrm{in}\;c[{a}/x]  \equiv c[{a}/x]$) and that $I$-U and $\Sigma$-U (and commuting conversions) imply $!$-U ($\mathrm{let}\;t\;\mathrm{be}\;!{x} \otimes y\;\mathrm{in}\;\mathrm{let}\; y\;\mathrm{be}\; *\;\mathrm{in}\;!{x} \otimes *\equiv \mathrm{let}\;t\;\mathrm{be}\;!{x} \otimes y\;\mathrm{in}\;!{x} \otimes \mathrm{let}\; y\;\mathrm{be}\; *\;\mathrm{in}\;* \equiv \mathrm{let}\;t\;\mathrm{be}\;!{x} \otimes y\;\mathrm{in}\;!{x} \otimes y\equiv t$).\\
\\
The converse statement follows by a similar argument, noting that $I[{a}/x]$ is isomorphic to $I$.
\end{proof}

A second interpretation is that $\Pi$ and $\Sigma$ generalise $\&$ and $\oplus$. Indeed, the idea is that these (or their infinitary equivalents) are what they reduce to when taken over discrete types. The subtlety in this result is the definition of a discrete type. The same phenomenon is observed in a different context in section \ref{sec:dismod}.

For our purposes, a discrete type is a strong sum of $I$ (a sum with a dependent -E-rule). For simplicity, let us limit ourselves to the binary case. For us, the discrete type with two elements will be $2=I\oplus I$, where $\oplus$ has a strong/dependent -E-rule (note that this is not our $\oplus$-E). Explicitly, $2$ is a type with the following rules:
\begin{figure}[h]
\quad\\
\fbox{\parbox{\textwidth}{\footnotesize
\quad\\
\begin{tabular}{lcr}
\AxiomC{}
\RightLabel{$2$-F}
\UnaryInfC{$\Delta;\cdot\vdash 2\;\mathrm{type}$}
\DisplayProof
\hspace{110pt}
&
\AxiomC{}
\RightLabel{$2$-I1}
\UnaryInfC{$\Delta;\cdot\vdash \mathrm{tt}:2$}
\DisplayProof
\hspace{110pt}
&
\AxiomC{}
\RightLabel{$2$-I2}
\UnaryInfC{$\Delta;\cdot\vdash \mathrm{ff}:2$}
\DisplayProof
\end{tabular}
\\
\\
\\
\begin{tabular}{c}
\AxiomC{$\Delta,x:2;\cdot \vdash A\;\mathrm{type}$}
\AxiomC{$\Delta;\cdot\vdash t:2$}
\AxiomC{$\Delta;\Xi\vdash a_{\mathrm{tt}}:A[{\mathrm{tt}}/x]$}
\AxiomC{$\Delta;\Xi\vdash a_{\mathrm{ff}}:A[{\mathrm{ff}}/x]$}
\RightLabel{$2$-E}
\QuaternaryInfC{$\Delta;\Xi\vdash \mathrm{if}\; t\;\mathrm{then}\; a_{\mathrm{tt}}\;\mathrm{else}\;a_{\mathrm{ff}}:A[{t}/x]$}
\DisplayProof
\normalsize
\end{tabular}
\\
\\
\\
\begin{tabular}{lr}
\AxiomC{$\Delta;\Xi\vdash \mathrm{if}\; \mathrm{tt}\;\mathrm{then}\; a_{\mathrm{tt}}\;\mathrm{else}\;a_{\mathrm{ff}}:A[\mathrm{tt}/x]$}
\RightLabel{$2$-C1}
\UnaryInfC{$\Delta;\Xi\vdash \mathrm{if}\; \mathrm{tt}\;\mathrm{then}\; a_{\mathrm{tt}}\;\mathrm{else}\;a_{\mathrm{ff}}\equiv a_{\mathrm{tt}}:A[\mathrm{tt}/x]$}
\DisplayProof
&\hspace{68pt}
\AxiomC{$\Delta;\Xi\vdash \mathrm{if}\; \mathrm{ff}\;\mathrm{then}\; a_{\mathrm{tt}}\;\mathrm{else}\;a_{\mathrm{ff}}:A[\mathrm{ff}/x]$}
\RightLabel{$2$-C2}
\UnaryInfC{$\Delta;\Xi\vdash \mathrm{if}\; \mathrm{ff}\;\mathrm{then}\; a_{\mathrm{tt}}\;\mathrm{else}\;a_{\mathrm{ff}}\equiv a_{\mathrm{ff}}:A[\mathrm{ff}/x]$}
\DisplayProof\\
&\\

\AxiomC{$\Delta;\Xi\vdash \mathrm{if}\; t\;\mathrm{then}\; \mathrm{tt}\;\mathrm{else}\;\mathrm{ff}:2$}
\RightLabel{$2$-U}
\UnaryInfC{$\Delta;\Xi\vdash \mathrm{if}\; t\;\mathrm{then}\; \mathrm{tt}\;\mathrm{else}\;\mathrm{ff}\equiv t:2$}
\DisplayProof
\end{tabular}
}}
\caption{Rules for a discrete type $2$.}
\end{figure}
\begin{theorem}[$\Pi$ and $\Sigma$ as Infinitary $\&$ and $\oplus$]\label{thm:pisigmainf} If we have a discrete type $2$ and a type family $\Delta,x: 2;\cdot\vdash A\;\mathrm{type}$, then
\begin{enumerate}
\item $\Pi_{!x:{!2}}A$ satisfies the rules for $A[{\mathrm{tt}}/x]\& A[{\mathrm{ff}}/x]$;
\item $\Sigma_{!x:{!2}}A$ satisfies the rules for $A[{\mathrm{tt}}/x]\oplus A[{\mathrm{ff}}/x]$.
\end{enumerate}
\end{theorem}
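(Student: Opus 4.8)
The plan is to match each rule of $\&$ (part 1) and of $\oplus$ (part 2) against a rule I can derive from the $\Pi$- resp.\ $\Sigma$-rules together with the rules for the discrete type $2$, writing $B:=A[\mathrm{tt}/x]$ and $C:=A[\mathrm{ff}/x]$ for the two fibres. For part 1 I would read the pair $\langle a,b\rangle$ as $\lambda_{!x:!2}(\mathrm{if}\; x\;\mathrm{then}\; a\;\mathrm{else}\; b)$: given $\Delta;\Xi\vdash a:B$ and $\Delta;\Xi\vdash b:C$, weakening by $x:2$ and applying $2$-E (with motive $A$, scrutinee $x$) yields $\Delta,x:2;\Xi\vdash \mathrm{if}\; x\;\mathrm{then}\; a\;\mathrm{else}\; b:A$, to which $\Pi$-I applies; note that both branches of $2$-E share the single linear region $\Xi$, which is exactly the additive character of $\&$-I. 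The projections are then $\mathrm{fst}(f):=f(!\mathrm{tt})$ and $\mathrm{snd}(f):=f(!\mathrm{ff})$, via $\Pi$-E and $2$-I1/$2$-I2. For part 2 I would read $\mathrm{inl}(a):=\,!\mathrm{tt}\otimes a$ and $\mathrm{inr}(b):=\,!\mathrm{ff}\otimes b$ (instances of $\Sigma$-I), and encode the case split by first destructing with $\Sigma$-E and then branching on the first component, as $\mathrm{let}\; t\;\mathrm{be}\;!x\otimes y\;\mathrm{in}\;(\mathrm{if}\; x\;\mathrm{then}\;\lambda_{u}c\;\mathrm{else}\;\lambda_{v}d)(y)$. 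The $\multimap$-abstractions here absorb the linear variable $y:A$ so that it can be carried through the intuitionistic case analysis: this detour is forced, because the branches of $2$-E retain the free intuitionistic variable $x$ and therefore cannot witness any refinement of the type $A$ of $y$.

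With these definitions in hand, the formation rules are immediate from $\Pi$-F resp.\ $\Sigma$-F, whose hypothesis $\Delta,x:2;\cdot\vdash A\;\mathrm{type}$ is exactly what is assumed; the introduction and elimination rules are the derivations just described. The computation rules are mechanical: the $\&$-C rules follow by a $\Pi$-C step followed by $2$-C1 resp.\ $2$-C2, using that $a,b$ do not contain $x$ free, so the substitution of $\mathrm{tt}$/$\mathrm{ff}$ touches only the scrutinee; and the $\oplus$-C rules follow by $\Sigma$-C, then $2$-C1 resp.\ $2$-C2, then $\multimap$-C, the three $\beta$-reductions firing in sequence.

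The main obstacle is the uniqueness rules $\&$-U and $\oplus$-U, and this is precisely the subtlety flagged in the remark that discreteness is delicate. Using $\Pi$-U, the rule $\&$-U reduces to establishing, in context $\Delta,x:2;\Xi$, the judgemental equality $\mathrm{if}\; x\;\mathrm{then}\; f(!\mathrm{tt})\;\mathrm{else}\; f(!\mathrm{ff})\equiv f(!x)$; and using $\Sigma$-U together with the commuting conversion for $\mathrm{if}$ (of the same schematic shape as those for $\mathrm{case}$) and $\multimap$-C, the rule $\oplus$-U reduces to $\mathrm{if}\; x\;\mathrm{then}\;!\mathrm{tt}\otimes y\;\mathrm{else}\;!\mathrm{ff}\otimes y\equiv\,!x\otimes y$. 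Both are instances of one general uniqueness principle for $2$: for every family $\Delta,x:2;\cdot\vdash M\;\mathrm{type}$ and every $\Delta,x:2;\Xi\vdash s:M$ one has $s\equiv \mathrm{if}\; x\;\mathrm{then}\; s[\mathrm{tt}/x]\;\mathrm{else}\; s[\mathrm{ff}/x]$.

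The difficulty, which I expect to be the crux, is that the $2$-C rules only give agreement of the two sides of this principle after substituting $\mathrm{tt}$ or $\mathrm{ff}$ for $x$, whereas what is needed is agreement at the generic variable $x$. Bridging this gap is exactly the work of the discrete type's uniqueness rule $2$-U, read in its dependent form, of which the literal instance $\mathrm{if}\; x\;\mathrm{then}\;\mathrm{tt}\;\mathrm{else}\;\mathrm{ff}\equiv x$ is the case $M:=2$, $s:=x$. I would therefore isolate this dependent $\eta$-principle for $2$ as a lemma, verify that my encodings of $\langle-,-\rangle$ and of $\mathrm{case}$ make $\&$-U and $\oplus$-U literal instances of it, and leave the remaining steps as routine $\beta\eta$-bookkeeping. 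It is this strong, induction-style uniqueness rule that distinguishes the discrete $2=I\oplus I$ from the bare coproduct $\oplus$, whose own $\oplus$-U is too weak to drive the argument.
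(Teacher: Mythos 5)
Your encodings coincide exactly with the paper's: pairing is $\lambda_{!x:!2}\,\mathrm{if}\;x\;\mathrm{then}\;a\;\mathrm{else}\;b$, projections are application at $!\mathrm{tt}$ and $!\mathrm{ff}$, injections are $!\mathrm{tt}\otimes a$ and $!\mathrm{ff}\otimes b$, and the C-rules are discharged the same way ($\Pi$-C/$\Sigma$-C followed by $2$-C). The one structural divergence is $\oplus$-E: the paper does \emph{not} take your $\multimap$ detour, but substitutes the linear variable directly into the branches (its unlabelled step from $\Delta;\Xi,z:A[\mathrm{tt}/x]\vdash c:C$ to $\Delta,x:2;\Xi,y:A\vdash c[y/z]:C$), which implicitly reads $2$-E as refining the types of linear variables branch-wise. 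Your diagnosis that the printed $2$-E cannot justify that step is correct, so your $\multimap$-abstraction is a defensible repair; the costs are that it imports $\multimap$-types as an unstated hypothesis, and that it does not actually immunise you against the same refinement issue later, since pushing $(-)(y)$ into the branches when proving $\oplus$-U again produces $(\lambda_{u}c)(y)$ with $y:A$ against $u:A[\mathrm{tt}/x]$.

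The genuine gap is at the U-rules. The ``dependent $\eta$-principle'' you isolate --- $s\equiv \mathrm{if}\;x\;\mathrm{then}\;s[\mathrm{tt}/x]\;\mathrm{else}\;s[\mathrm{ff}/x]$ for every family $M$ and every term $s$ --- is not something the stated rules provide: $2$-U as printed is literally only its instance $M:=2$, $s:=x$, and no rule of the system promotes that single instance to arbitrary $s$ (in particular not to terms in which $x$ occurs several times, occurs inside types, or occurs under $!$), so ``$2$-U read in its dependent form'' is an assumption, not a lemma. What the paper actually does is derive exactly the two instances needed, each by congruence with $2$-U followed by a commuting conversion: $t(!x)\equiv t(!(\mathrm{if}\;x\;\mathrm{then}\;\mathrm{tt}\;\mathrm{else}\;\mathrm{ff}))\equiv \mathrm{if}\;x\;\mathrm{then}\;t(!\mathrm{tt})\;\mathrm{else}\;t(!\mathrm{ff})$, using the context $t(!-)$, and $!x\otimes y\equiv\, !(\mathrm{if}\;x\;\mathrm{then}\;\mathrm{tt}\;\mathrm{else}\;\mathrm{ff})\otimes y\equiv \mathrm{if}\;x\;\mathrm{then}\;!\mathrm{tt}\otimes y\;\mathrm{else}\;!\mathrm{ff}\otimes y$, using the context $!(-)\otimes y$; then $\Pi$-U resp.\ $\Sigma$-U finish. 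To close your proof, replace the appeal to the general lemma by this instance-wise argument. (Be aware that these conversions move the $\mathrm{if}$ through $!(-)$, a non-linear position, which is already a liberal reading of the paper's commuting-conversion schema --- a looseness your argument would then share with, rather than add to, the paper's.)
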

\begin{proof}\begin{enumerate}
\item We obtain $\&$-I as follows.

\footnotesize
\AxiomC{$\Delta,x:2;\Xi\vdash a:A[{\mathrm{tt}}/x]$}
\AxiomC{$\Delta,x:2;\Xi\vdash b:A[{\mathrm{ff}}/x]$}
\AxiomC{}
\RightLabel{Int-Var}
\UnaryInfC{$\Delta,x:2;\cdot\vdash x:2$}
\AxiomC{}
\RightLabel{Assumption}
\UnaryInfC{$\Delta,x:2;\cdot \vdash A\;\mathrm{type}$}
\RightLabel{2-E-dep}
\QuaternaryInfC{$\Delta,x:2;\Xi\vdash \mathrm{if}\; x\;\mathrm{then}\; a\;\mathrm{else}\; b:A$}
\RightLabel{$\Pi$-I}
\UnaryInfC{$\Delta;\Xi\vdash \lambda_{!x:{!2}}\mathrm{if}\; x\;\mathrm{then}\; a\;\mathrm{else}\; b:\Pi_{!x:{!2}}A$}
\DisplayProof
\normalsize

Moreover, we obtain $\&$-E1 as follows (similarly, we obtain $\&$-E2).

\AxiomC{$\Delta;\Xi\vdash t:\Pi_{!x:!2}A$}
\AxiomC{}
\RightLabel{2-I1}
\UnaryInfC{$\Delta;\cdot \vdash \mathrm{tt}:2$}
\RightLabel{$\Pi$-E}
\BinaryInfC{$\Delta;\Xi\vdash t(!\mathrm{tt}):A[{\mathrm{tt}}/x]$}
\DisplayProof

The $\&$-C-rules follow from $\Pi$-C and $2$-C, e.g.\\ $\mathrm{fst}\langle a,b\rangle :\equiv (\lambda_{!x:{!2}}\mathrm{if}\; x\;\mathrm{then}\; a\;\mathrm{else}\; b)(!\mathrm{tt})
\equiv \mathrm{if}\; \mathrm{tt}\;\mathrm{then}\; a\;\mathrm{else}\; b\equiv a$.

The $\&$-U-rules follow from $\Pi$-U and $2$-U (and commuting conversions):\\ $\langle \mathrm{fst}(t),\mathrm{snd}(t)\rangle:\equiv \lambda_{!x:!{2}}\mathrm{if}\; x\;\mathrm{then}\; t(!\mathrm{tt})\;\mathrm{else}\; t(!\mathrm{ff})\equiv \lambda_{!x:{!2}}t(!x)\equiv t$.

\item We obtain $\oplus$-I1 as follows (and similarly, we obtain $\oplus$-I2):

\AxiomC{}
\RightLabel{2-I1}
\UnaryInfC{$\Delta;\cdot \vdash \mathrm{tt}:2$}
\AxiomC{$\Delta;\Xi\vdash a:A[{\mathrm{tt}}/x]$}
\RightLabel{$\Sigma$-I}
\BinaryInfC{$\Delta;\Xi\vdash  {!\mathrm{tt}} \otimes a:\Sigma_{!x:{!2}}A$}
\DisplayProof

Moreover, we obtain $\oplus$-E as follows.\\
\\
\tiny

\AxiomC{$\Delta;\Xi'\vdash t:\Sigma_{!x:{!2}}A$}
\AxiomC{$\Delta;\Xi,z:A[{\mathrm{tt}}/x]\vdash c:C$}
\RightLabel{}
\UnaryInfC{$\Delta,x:2;\Xi,y:A\vdash c[y/z]:C$}
\AxiomC{$\Delta;\Xi,w:A[{\mathrm{ff}}/x]\vdash d:C$}
\RightLabel{}
\UnaryInfC{$\Delta,x:2;\Xi,y:A\vdash d[y/w]:C$}
\AxiomC{}
\RightLabel{}
\UnaryInfC{$\Delta,x:2;\cdot\vdash x:2$}
\AxiomC{}
\RightLabel{Assump}
\UnaryInfC{$\Delta,x:2;\cdot\vdash A\;\mathrm{type}$}
\RightLabel{2-E-dep}
\QuaternaryInfC{$\Delta,x:2;\Xi,y:A\vdash \mathrm{if}\;x\;\mathrm{then}\;c[y/z]\;\mathrm{else}\;d[y/w]:C$}
\RightLabel{$\Sigma$-E}
\BinaryInfC{$\Delta;\Xi,\Xi'\vdash \mathrm{let}\;t \;\mathrm{be}\; !{x} \otimes y \;\mathrm{in}\; \mathrm{if}\;x\;\mathrm{then}\;c[y/z]\;\mathrm{else}\;d[y/w] :C$}
\DisplayProof
\normalsize

The $\oplus$-C-rules follow from $\Sigma$-C and 2-C, e.g.
\begin{align*} \mathrm{case}\; \mathrm{inl}(a)\;\mathrm{of}\;\mathrm{inl}(z)\rightarrow c||\;\mathrm{inr}(w)\rightarrow d: & \equiv \mathrm{let}\; {!\mathrm{tt}} \otimes a \;\mathrm{be}\; !{x} \otimes y \;\mathrm{in}\; \mathrm{if}\;x\;\mathrm{then}\;c[y/z]\;\mathrm{else}\;d[y/w]\\
&\equiv c[a/z].
\end{align*}

The $\oplus$-U-rules follow from $\Sigma$-U and 2-U (and commuting conversions):
\begin{align*} \mathrm{case}\; t\;\mathrm{of}\;\mathrm{inl}(z)\rightarrow \mathrm{inl}(z)||\;\mathrm{inr}(w)\rightarrow \mathrm{inr}(w): & \equiv \mathrm{let}\;t \;\mathrm{be}\; !{x} \otimes y \;\mathrm{in}\; \mathrm{if}\;x\;\mathrm{then}\;\mathrm{inl}(z)[y/z]\;\mathrm{else}\;\mathrm{inr}(w)[y/w]\\
&\equiv \mathrm{let}\;t \;\mathrm{be}\;! {x} \otimes y \;\mathrm{in}\; \mathrm{if}\;x\;\mathrm{then}\; {!\mathrm{tt}} \otimes z[y/z]\;\mathrm{else}\; {!\mathrm{ff}} \otimes w[y/w]\\
&\equiv \mathrm{let}\;t \;\mathrm{be}\; !{x} \otimes y \;\mathrm{in}\; \mathrm{if}\;x\;\mathrm{then}\; {!\mathrm{tt}} \otimes y\;\mathrm{else}\; {!\mathrm{ff}} \otimes y\\
&\equiv \mathrm{let}\;t \;\mathrm{be}\; !{x} \otimes y \;\mathrm{in}\; {!(\mathrm{if}\;x\;\mathrm{then}\; \mathrm{tt}\;\mathrm{else}\;\mathrm{ff})} \otimes y\\
&\equiv \mathrm{let}\;t \;\mathrm{be}\;! {x} \otimes y \;\mathrm{in}\;  !{x} \otimes y\\
&\equiv t.
\end{align*}

\end{enumerate}
\end{proof}
We see that we can view $\Pi$ and $\Sigma$ as generalisations of $\&$ and $\oplus$, respectively.\clearpage

\section{Semantics of ILDTT}
\label{sec:sem}
The idea behind the categorical semantics we present for the structural core of our syntax (with $I$- and $\otimes$-types) is to take our suggested categorical semantics for the structural core of DTT (with $\top$- and $\wedge$-types) and relax the assumption that its fibres are Cartesian, requiring only that they be (possibly non-Cartesian) symmetric monoidal. This exactly reflects the relation between the conventional semantics of non-dependent intuitionistic and linear type systems. The structure we obtain is that of a strict indexed symmetric monoidal\footnote{As noted above, we can easily obtain a sound and complete semantics for only the structural core, possibly without $I$- and $\otimes$-types, by considering strict indexed symmetric multicategories with comprehension.} category with comprehension.

The $\Sigma$- and $\Pi$-types arise as left and right adjoints of substitution functors along projections in the base category and the $\mathrm{Id}$-types arise as left adjoints to substitution along diagonals, all satisfying Beck-Chevalley (and Frobenius) conditions, as is the case in the semantics for DTT. The $!$-types amount to having a left adjoint to the comprehension (which can be made a functor), giving a linear-non-linear adjunction as in the conventional semantics for linear logic. Finally, additive connectives arise as compatible Cartesian and distributive co-Cartesian structures on the fibres, as would be expected from the semantics of linear logic.

\subsection{Tautological models of ILDTT}
First, we translate the structural core of our syntax to the tautological notion of model. We later prove that this is equivalent to the more intuitive notion of categorical model referred to above.

\begin{definition*}[Tautological model of ILDTT] By a (tautological) model $\widetilde{\mathbb{T}}$ of ILDTT, we mean the following.
\begin{enumerate}
\item We have a set $\mathrm{ICtxt}$, whose elements will be interpreted as (dependent) contexts consisting of intuitionistic types. Then, for all $\Delta\in\mathrm{ICtxt}$, we have a set $\mathrm{LType}(\Delta)$ of linear types and a set $\mathrm{LCtxt}(\Delta)$ of linear contexts (multisets of linear types) in the context $\Delta$. For each $\Delta$, $\Xi\in\mathrm{LCtxt}(\Delta)$, and each $A\in \mathrm{LType}(\Delta)$, we have a set $\mathrm{LTerm}(\Delta,\Xi,A)$ of (linear) terms of $A$ in context $\Delta;\Xi$. On all these sets, judgemental equality is interpreted by equality of elements, taking into account $\alpha$-conversion for terms. (This means that, if we construct a model from the syntax, we divide out judgemental equality on the syntactic objects.) This handles the -Eq rules, the rules expressing that judgemental equality is an equivalence relation, and the rules relating typing and judgemental equality. 

\item C-Emp says that $\mathrm{ICtxt}$ has a distinguished element $\cdot$ and that $\mathrm{LCtxt}(\cdot)$ has a distinguished element $\cdot$.

\item Int-C-Ext says that for all $\Delta\in\mathrm{ICtxt}$ and $A\in\mathrm{LType}(\Delta)$, we can form a context $\Delta.{A}\in\mathrm{ICtxt}$, and that we have a function $\mathrm{LCtxt}(\Delta)\ra{}\mathrm{LCtxt}(\Delta.{A})$, introducing fake dependencies of the linear types on ${A}$. Int-Exch says that for two types in the same context, the order in which we append them to a context does not matter.

\item Lin-C-Ext says that for all $\Xi\in\mathrm{LCtxt}(\Delta)$ and $A\in\mathrm{LType}(\Delta)$, we can form a context $\Xi.A\in\mathrm{LCtxt}(\Delta)$. Lin-Exch says that the order in which we do this does not matter.

\item Int-Var says that for a context $\Delta.{A}.\Delta'$, we have a term $\mathrm{der}_{\Delta,A,\Delta'}\in \mathrm{LTerm}(\Delta.{A}.\Delta',\cdot,A)$, which---this is implicitly present in the syntax---acts as a diagonal morphism through the substitution operations of Int-Ty-Subst and Int-Tm-Subst, equating the values of two variables of type $A$.

\item Lin-Var says that for a context $\Delta$ and a type $A\in\mathrm{LType}(\Delta)$, we have a term $\mathrm{id}_A\in\mathrm{LTerm}(\Delta,A,A)$, which---this is implicitly present in the syntax---acts as the identity for the substitution operation of Lin-Tm-Subst.

\item Int-Weak says that we have functions (abusing notation) $\mathrm{LType}(\Delta.\Delta')\ra{\mathrm{weak}_{\Delta,{A},\Delta'}}\mathrm{LType}(\Delta.{A}.\Delta')$ and
$\mathrm{LTerm}(\Delta.\Delta',\Xi,B)\ra{\mathrm{weak}_{\Delta,{A},\Delta'}}\mathrm{LTerm}(\Delta.{A}.\Delta',\Xi,B)$. We think of this as projecting away a variable $x:A$ to introduce a fake dependency.

\item Int-Ty-Subst says that for $B\in\mathrm{LType}(\Delta.{A}.\Delta')$ and $a\in\mathrm{LTerm}(\Delta,\cdot,A)$, we have a context $\Delta.\Delta'[{a}/x]\in\mathrm{ICtxt}$ and a $B[{a}/x]\in \mathrm{LType}(\Delta.\Delta'[{a}/x])$.

\item Int-Tm-Subst says that for $b\in\mathrm{LTerm}(\Delta.{A}.\Delta',\Xi,B)$ and $a\in\mathrm{LTerm}(\Delta,\cdot,A)$, we have $b[{a}/x]\in \mathrm{LTerm}(\Delta.\Delta'[{a}/x],\Xi[{a}/x],B[{a}/x])$.

\item Lin-Tm-Subst says that for $b\in\mathrm{LTerm}(\Delta,\Xi.A,B)$ and $a\in\mathrm{LTerm}(\Delta,\Xi',A)$, we have $b[a/x]\in\mathrm{LTerm}(\Delta,\Xi.\Xi',B)$.
\end{enumerate}
For these last three substitution operations, it is implicit in the syntax that they are associative.

Finally, the remarks in Int-Var and Int-Weak about diagonals and projections in formal terms mean that the morphisms coming from these rules work together to form a generalised comonoid.
\end{definition*}
It is tautological that there is a one-to-one correspondence between theories $\mathbb{T}$ in ILDTT and models $\widetilde{\mathbb{T}}$ of this sort.\\
\\
We now define what it means for a model to support various type formers.
\begin{definition*}[Semantic $I$- and $\otimes$-types] We say a model $\widetilde{\mathbb{T}}$ supports $I$-types if for all $\Delta\in\mathrm{ICtxt}$, we have an $I\in\mathrm{LType}(\Delta)$ and $*\in\mathrm{LTerm}(\Delta,\cdot, I)$ and whenever $t\in\mathrm{LTerm}(\Delta,\Xi',I)$ and $a\in\mathrm{LTerm}(\Delta,\Xi,A)$, we have $\mathrm{let}\; t\;\mathrm{be}\;*\;\mathrm{in}\; a\in\mathrm{LTerm}(\Delta,\Xi.\Xi',A)$, such that $\mathrm{let}\; *\;\mathrm{be}\;*\;\mathrm{in}\;a=a$ and $\mathrm{let}\;t\;\mathrm{be}\;*\;\mathrm{in}\;*=t$.

Similarly, we say it supports $\otimes$-types if for all $A,B\in\mathrm{LType}(\Delta)$, we have an $A\otimes B\in\mathrm{LType}(\Delta)$, for all $a\in \mathrm{LTerm}(\Delta,\Xi,A), b\in\mathrm{LTerm}(\Delta,\Xi',B)$, we have $a\otimes b\in\mathrm{LTerm}(\Delta,\Xi.\Xi',A\otimes B)$, and if $t\in \mathrm{LTerm}(\Delta,\Xi,A\otimes B)$ and $c\in\mathrm{LTerm}(\Delta,\Xi'.A.B,C)$, we have $\mathrm{let}\; t\;\mathrm{be}\; x\otimes y\;\mathrm{in}\;c\in\mathrm{LTerm}(\Delta,\Xi.\Xi',C)$, such that $\mathrm{let}\;a\otimes b\;\mathrm{be}\;x\otimes y\;\mathrm{in}\;c=c$ and $\mathrm{let}\;t\;\mathrm{be}\;x\otimes y\;\mathrm{in}\;x\otimes y=t$.

Note that this defines a function $\mathrm{LCtxt}(\Delta)\ra{\bigotimes} \mathrm{LType}$. The C-rule precisely says that from the point of view of the (terms of the) type theory this map is an injection, while the U-rule says it is a surjection\footnote{The precise statement that we are alluding to here would be that the multicategory of linear contexts is equivalent to the (monoidal) multicategory of linear types. More precisely, $\bigotimes$ is only part of an equivalence of categories rather than an isomorphism, i.e. it is injective on objects up to isomorphism rather than on the nose.}. We conclude that in the presence of $I$- and $\otimes$-types, we can faithfully describe the type theory without mentioning linear contexts, replacing them by the linear type that is their $\otimes$-product.
\end{definition*}
We will henceforth assume that our type theory has $I$- and $\otimes$-types, as this simplifies the categorical semantics\footnote{To be precise, it allows us to give a categorical semantics in terms of monoidal categories rather than multicategories.} and is appropriate for the examples we are interested in.

For the other type formers, one can give a similar, almost tautological, translation from the syntax into a tautological model. We leave this to the reader, and discuss the semantic counterparts of various type formers in the categorical semantics presented next.

\subsection{Categorical Semantics of ILDTT}
\subsubsection*{Strict Indexed Symmetric Monoidal Categories with Comprehension}
We now introduce a notion of categorical model for which soundness and completeness results hold with respect to the syntax of ILDTT in the presence of $I$- and $\otimes$-types\footnote{If we are interested in the case without $I$- and $\otimes$-types, the semantics easily generalises to strict indexed symmetric multicategories with comprehension.}. This notion of model will prove to be particularly useful when thinking about various (extensional) type formers.

\begin{definition}By a \emph{strict indexed symmetric monoidal category with comprehension}, we mean the following data.
\begin{enumerate}
\item A category $\mathcal{C}$ with a terminal object $\cdot$.
\item A strict indexed symmetric monoidal category $\mathcal{L}$ over $\mathcal{C}$, i.e. a contravariant functor $\mathcal{L}$ into the category $\mathrm{SMCat}$ of (small) symmetric monoidal categories and strong monoidal functors $\mathcal{C}^{op}\ra{\mathcal{L}}\mathrm{SMCat}.$
We will also write $-\{f\}:=\mathcal{L}(f)$ for the action of $\mathcal{L}$ on a morphism $f$ of $\mathcal{C}$.
\item A \emph{comprehension schema}, i.e. for each $\Delta\in\mathrm{ob}(\mathcal{C})$ and $A\in\mathrm{ob}(\mathcal{L}(\Delta))$ a representation for the functor $$x\mapsto\mathcal{L}(\mathrm{dom}(x))(I,A\{x\}):(\mathcal{C}/\Delta)^{op}\ra{}\mathrm{Set}.$$ We will write its representing object\footnote{More precisely, $\Delta.MA\ra{\mathbf{p}_{\Delta,MA}}\Delta$ would be a better notation, where we think of $L\dashv M$ as an adjunction inducing $!$, but it would be very verbose.} $\Delta.{A}\ra{\mathbf{p}_{\Delta,{A}}}\Delta\in\mathrm{ob}(\mathcal{C}/\Delta)$ and universal element $\mathbf{v}_{\Delta,{A}}\in\mathcal{L}(\Delta.{A})(I,A\{\mathbf{p}_{\Delta,{A}}\})$. We will write $a  \mapsto  \langle f,a\rangle$ for the isomorphism $\mathcal{L}(\Delta')(I,A\{f\}) \cong  \mathcal{C}/\Delta(f,\mathbf{p}_{\Delta,{A}})$.
\end{enumerate}
\end{definition}
Again, the comprehension schema means that the morphisms in our category of contexts $\mathcal{C}$ into a context built by adjoining types arise as lists of closed linear terms. Here, the crucial point is the identification of intuitionistic terms with linear terms without linear assumptions: they can be freely copied and discarded.
\begin{remark}Note that this notion of model reduces to a standard notion of model for intuitionistic dependent type theory when the monoidal structures on the fibre categories are Cartesian: a strict indexed Cartesian monoidal category with comprehension. Indeed, these are easily seen to be equivalent to, for instance, the split comprehension categories of \cite{jacobs1993comprehension} with terminal object and Cartesian products. However, it turns out that we have to impose the extra requirement of fullness on the comprehension category to get an exact match with the syntax. The corresponding condition in our framework is to ask for the comprehension functor to be full and faithful. See \cite{jacobs1993comprehension} for more discussion.
\end{remark}
\begin{theorem}[Comprehension functor]\label{thm:comprfunc} A comprehension schema $(\mathbf{p},\mathbf{v})$ on a strict indexed symmetric monoidal category $(\mathcal{C},\mathcal{L})$ defines a morphism $\mathcal{L}\ra{M}\mathcal{I}$ of indexed categories, which laxly sends the monoidal structure of $\mathcal{L}$ to products in $\mathcal{I}$ (where they exist). Here, $\mathcal{I}$ is the full subindexed\footnote{Here, we use the axiom of choice to make a choice of pullback and make $\mathcal{I}$ into an indexed category (or cloven fibration). Alternatively, we can avoid the axiom of choice and treat it as a more general fibration.} category of $\mathcal{C}/-$ on the objects of the form $\mathbf{p}_{\Delta,{A}}$.\end{theorem}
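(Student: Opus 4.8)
The plan is to transport the construction of the comprehension functor from the Cartesian case (given in the Definition of a strict indexed Cartesian monoidal category with comprehension) to the present setting, the single conceptual engine being the representability isomorphism $\langle -,-\rangle\colon \mathcal{L}(\Gamma)(I,A\{g\})\cong\mathcal{C}/\Delta(g,\mathbf{p}_{\Delta,A})$ together with the defining property of the universal element: for any object $g\colon\Gamma\to\Delta$ of $\mathcal{C}/\Delta$ and any $a\in\mathcal{L}(\Gamma)(I,A\{g\})$, the induced morphism $\langle g,a\rangle$ satisfies $\mathbf{v}_{\Delta,A}\{\langle g,a\rangle\}=a$. Concretely, I would define the fibre functor $M_\Delta\colon\mathcal{L}(\Delta)\to\mathcal{I}(\Delta)\subseteq\mathcal{C}/\Delta$ on objects by $M_\Delta(A):=\mathbf{p}_{\Delta,A}$ and on a morphism $f\colon A\to B$ of $\mathcal{L}(\Delta)$ by $M_\Delta(f):=\langle\mathbf{p}_{\Delta,A},\,f\{\mathbf{p}_{\Delta,A}\}\circ\mathbf{v}_{\Delta,A}\rangle$, exactly as in the Cartesian definition but now with $\mathbf{v}_{\Delta,A}$ a map out of $I$ rather than out of $1$.

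First I would check that each $M_\Delta$ is a functor. Preservation of identities is immediate, since $M_\Delta(\mathrm{id}_A)$ corresponds to $\mathbf{v}_{\Delta,A}$, which in turn corresponds to $\mathrm{id}_{\mathbf{p}_{\Delta,A}}$ under $\langle-,-\rangle$. For composition, given $f\colon A\to B$ and $g\colon B\to C$, I would compute the element of $\mathcal{L}(\Delta.A)(I,C\{\mathbf{p}_{\Delta,A}\})$ corresponding to $M_\Delta(g)\circ M_\Delta(f)$ using naturality of $\langle-,-\rangle$ in the $\mathcal{C}/\Delta$-variable: this element is the reindexing along $M_\Delta(f)$ of the element $g\{\mathbf{p}_{\Delta,B}\}\circ\mathbf{v}_{\Delta,B}$ defining $M_\Delta(g)$. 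Using the universal-element property to rewrite $\mathbf{v}_{\Delta,B}\{M_\Delta(f)\}=f\{\mathbf{p}_{\Delta,A}\}\circ\mathbf{v}_{\Delta,A}$, together with $\mathbf{p}_{\Delta,B}\circ M_\Delta(f)=\mathbf{p}_{\Delta,A}$ and functoriality of $-\{-\}$, this collapses to $(g\circ f)\{\mathbf{p}_{\Delta,A}\}\circ\mathbf{v}_{\Delta,A}$, which is exactly the element defining $M_\Delta(g\circ f)$.

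Next I would assemble the $M_\Delta$ into a morphism of indexed categories. The key fact is stability of comprehension under substitution: for $f\colon\Delta'\to\Delta$ in $\mathcal{C}$, the pullback $f^*\mathbf{p}_{\Delta,A}$ and $\mathbf{p}_{\Delta',A\{f\}}$ represent the same presheaf on $(\mathcal{C}/\Delta')^{\mathrm{op}}$, namely $g\mapsto\mathcal{L}(\mathrm{dom}(g))(I,A\{f\circ g\})$, and hence are canonically isomorphic over $\Delta'$. This supplies the coherence isomorphisms $M_{\Delta'}(A\{f\})\cong f^*M_\Delta(A)$; I would then verify naturality in $A$ and the coherence conditions in $f$, which again reduce to naturality of $\langle-,-\rangle$. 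Here one must take care that $\mathcal{L}$ is strict indexed whereas $\mathcal{C}/-$ is only pseudofunctorial (pullbacks chosen by the axiom of choice), so $M$ is a morphism of indexed categories in the pseudo sense, with these comparison isomorphisms as data.

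Finally, for the lax monoidal claim, I would produce the structure maps by Yoneda. The product $\mathbf{p}_{\Delta,A}\times_\Delta\mathbf{p}_{\Delta,B}$ in $\mathcal{C}/\Delta$ represents $g\mapsto\mathcal{L}(\mathrm{dom}(g))(I,A\{g\})\times\mathcal{L}(\mathrm{dom}(g))(I,B\{g\})$, while $\mathbf{p}_{\Delta,A\otimes B}$ represents $g\mapsto\mathcal{L}(\mathrm{dom}(g))(I,(A\otimes B)\{g\})$; since reindexing is strong monoidal we have $(A\otimes B)\{g\}=A\{g\}\otimes B\{g\}$, so the natural transformation $(a,b)\mapsto(a\otimes b)\circ\lambda_I^{-1}$ (pairing via the unit isomorphism $I\cong I\otimes I$) induces, by representability, a comparison map $\mathbf{p}_{\Delta,A}\times_\Delta\mathbf{p}_{\Delta,B}\to\mathbf{p}_{\Delta,A\otimes B}$; the unit map $\mathrm{id}_\Delta\to\mathbf{p}_{\Delta,I}$ is the morphism corresponding to $\mathrm{id}_I$. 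The associativity and unit axioms for a lax monoidal functor then follow from the coherence of $(\mathcal{L}(\Delta),\otimes,I)$ transported through Yoneda. The main obstacle is bookkeeping rather than depth: one must keep the pseudo-coherence of $\mathcal{C}/-$ in order throughout, and one must be careful that the product used above is the ambient product in $\mathcal{C}/\Delta$, which equals the product in the full subcategory $\mathcal{I}(\Delta)$ precisely when the latter exists and lands among the $\mathbf{p}_{\Delta,C}$ --- this is exactly the force of the qualifier \emph{where they exist} in the statement.
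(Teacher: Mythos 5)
Your proposal is correct and follows essentially the same route as the paper's proof: the same definition $M_\Delta(A):=\mathbf{p}_{\Delta,A}$ and $M_\Delta(f):=\langle\mathbf{p}_{\Delta,A},\,f\{\mathbf{p}_{\Delta,A}\}\circ\mathbf{v}_{\Delta,A}\rangle$, functoriality from the uniqueness clause of representability, compatibility with substitution via the fact that $\mathbf{p}_{\Delta',A\{f\}}$ is a pullback of $\mathbf{p}_{\Delta,A}$ along $f$ (your same-presheaf argument is exactly the paper's pullback square $\mathbf{q}_{f,A}$), and the same lax structure maps $\langle\mathrm{id}_\Delta,\mathrm{id}_I\rangle$ and the pairing of the two pulled-back universal elements. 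The only cosmetic differences are that you keep the substitution compatibility pseudo with explicit comparison isomorphisms, whereas the paper chooses the comprehension pullbacks so that $\mathcal{I}(f)M_\Delta=M_{\Delta'}\mathcal{L}(f)$ holds on the nose, and that you spell out details (the composition computation, the unitor $I\cong I\otimes I$) which the paper leaves to the reader.
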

\begin{proof}First, note that a morphism $M$ of indexed symmetric monoidal categories consists of lax monoidal functors in each context $\Delta\in\mathcal{C}$ such that
\[
\begin{tikzcd}[ampersand replacement=\&, column sep=huge, row sep=large]
\mathcal{L}(\Delta) \arrow[r,"{M_\Delta}"] \arrow[d,"{\mathcal{L}(f)}"'] \arrow[dr, phantom,"{\cong}" description] \& \mathcal{I}(\Delta) \arrow[d,"{\mathcal{I}(f)=\textnormal{``pullback along }f\textnormal{''}}"]\\
\mathcal{L}(\Delta^{\prime}) \arrow[r,"{M_{\Delta^{\prime}}}"'] \& \mathcal{I}(\Delta^{\prime}).
\end{tikzcd}
\]

We define
\[
\begin{tikzcd}[ampersand replacement=\&, column sep=huge]
M_\Delta(A\ra{a}B):=\mathbf{p}_{\Delta,{A}} \arrow[r,"{\langle\mathbf{p}_{\Delta,{A}},a\{\mathbf{p}_{\Delta,{A}}\}\circ \mathbf{v}_{\Delta,{A}}\rangle}"] \& \mathbf{p}_{\Delta,{B}}.
\end{tikzcd}
\]
Functoriality follows from the uniqueness property of $\langle \mathrm{id}_\Delta,a\rangle$.

We define the lax monoidal structure
\[
\begin{tikzcd}[ampersand replacement=\&, column sep=large, row sep=small]
\mathrm{id}_\Delta \arrow[r,"{m_\Delta^I}"] \& M_\Delta(I)=\mathbf{p}_{\Delta,{I}}\\
\mathbf{p}_{\Delta,{A}}\circ \mathbf{p}_{\Delta.{A},{B\{\mathbf{p}_{\Delta,A}\}}}=M_\Delta(A)\times M_\Delta(B) \arrow[r,"{m_\Delta^{A,B}}"'] \& M_\Delta(A\otimes B)=\mathbf{p}_{\Delta,{A\otimes B}},
\end{tikzcd}
\]
where $m_\Delta^I:=\langle \mathrm{id}_\Delta,\mathrm{id}_{I} \rangle$ and $m_\Delta^{A,B}:=\langle \mathbf{p}_{\Delta,{A}}\circ\mathbf{p}_{\Delta.{A}, B\{\mathbf{p}_{\Delta,A}\}},\mathbf{v}_{\Delta,{A}}\{\mathbf{p}_{\Delta.{A}, B\{\mathbf{p}_{\Delta,A}\}}\}\otimes\mathbf{v}_{\Delta.{A},{B\{\mathbf{p}_{\Delta,A}\}}} \rangle$.

Finally, we verify that $\mathcal{I}(f)M_\Delta=M_{\Delta'}\mathcal{L}(f)$. This follows directly from the fact that the following square is a pullback square:
\[
\begin{tikzcd}[ampersand replacement=\&, column sep=large, row sep=large]
\Delta^{\prime}.{A\{f\}} \arrow[r,"{\mathbf{q}_{f,{A}}}"] \arrow[d,"{\mathbf{p}_{\Delta^{\prime},{A\{f\}}}}"'] \& \Delta.{A} \arrow[d,"{\mathbf{p}_{\Delta,{A}}}"]\\
\Delta^{\prime} \arrow[r,"{f}"'] \& \Delta,
\end{tikzcd}
\]
where $\mathbf{q}_{f,{A}}:=\langle f\mathbf{p}_{\Delta',{A\{f\}}},\mathbf{v}_{\Delta',{A\{f\}}}\rangle$. We leave this verification to the reader as an exercise. Alternatively, a proof of this fact in DTT that transfers entirely to our setting can be found in \cite{hofmann1997syntax}.
\end{proof}
\begin{remark}
Note that $\mathcal{I}$ is a display map category (or, less specifically, a full comprehension category). Hence, it is a model of intuitionistic type theory. We will see that, in many ways, we can regard it as the intuitionistic content of $\mathcal{L}$.
\end{remark}
\begin{remark}We will see that this functor gives us a unique candidate for $!$-types: $!:=LM$, where $L\dashv M$. We conclude that, in ILDTT, the $!$-modality is uniquely determined by the indexing. This is worth noting, because, in propositional linear type theory, we might have many different candidates for $!$-types.

Moreover, it explains why we do not demand that $M$ be fully faithful in the case of linear types. Indeed, although we have a map $\mathcal{L}(\Delta)(A,B)\ra{M_\Delta}\mathcal{I}(\Delta)(\mathbf{p}_{\Delta,{A}},\mathbf{p}_{\Delta,{B}})\cong \mathcal{L}(\Delta.{A})(I,B\{\mathbf{p}_{\Delta,{A}}\})$, this is not generally an isomorphism. In fact, in the presence of $!$-types, we will see that the right-hand side is precisely isomorphic to $\mathcal{L}(\Delta)(!A,B)$ and the map is precomposition with dereliction.
\end{remark}

Next, we prove that we have a sound interpretation of ILDTT in such categories.

\begin{theorem}[Soundness] A strict indexed symmetric monoidal category with comprehension $(\mathcal{C},\mathcal{L},\mathbf{p},\mathbf{v})$ defines a model $\widetilde{\mathbb{T}}^{(\mathcal{C},\mathcal{L},\mathbf{p},\mathbf{v})}$ of ILDTT with $I$- and $\otimes$-types.
\end{theorem}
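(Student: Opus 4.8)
The plan is to exhibit the tautological model $\widetilde{\mathbb{T}}^{(\mathcal{C},\mathcal{L},\mathbf{p},\mathbf{v})}$ directly from the categorical data and then verify, clause by clause, that it validates the structural rules and supports $I$- and $\otimes$-types. First I would fix the interpretation of the underlying sets: take $\mathrm{ICtxt}:=\mathrm{ob}(\mathcal{C})$ and $\mathrm{LType}(\Delta):=\mathrm{ob}(\mathcal{L}(\Delta))$, and -- exploiting the discussion preceding the theorem that lets us collapse a linear context to its $\otimes$-product -- interpret a linear context $\Xi=A_1,\ldots,A_n$ by the object $\bigotimes\Xi:=A_1\otimes\cdots\otimes A_n$ (with the empty context sent to $I$), setting $\mathrm{LTerm}(\Delta,\Xi,A):=\mathcal{L}(\Delta)(\bigotimes\Xi,A)$. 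Judgemental equality is then, as the definition of a tautological model requires, literal equality of objects and morphisms, so the equivalence-relation and conversion rules hold automatically.

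With these sets in place, each structural rule has an evident semantic counterpart that I would read off in turn. I would interpret Int-C-Ext by comprehension, $\Delta.A:=\mathrm{dom}(\mathbf{p}_{\Delta,A})$, with the fake-dependency map on linear contexts given by the strong monoidal reindexing functor $-\{\mathbf{p}_{\Delta,A}\}$; Lin-C-Ext and Lin-Exch by the symmetric monoidal product and its symmetry; Lin-Var by identity morphisms; and Int-Var by a suitable reindexing of the universal element $\mathbf{v}_{\Delta,A}$. Intuitionistic weakening and substitution are modelled by reindexing along $\mathbf{p}_{\Delta,A}$ and along the section $\langle\mathrm{id}_\Delta,a\rangle$ respectively, with the dependent tail $\Delta'$ handled by the pullback morphisms $\mathbf{q}_{f,A}$ from the proof of Theorem \ref{thm:comprfunc}; linear substitution (cut) is modelled by composition precomposed with a tensor, $b\circ(\mathrm{id}\otimes a)$. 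Associativity of these operations, together with the generalised-comonoid laws implicit in Int-Var and Int-Weak, then reduces to functoriality of $\mathcal{L}$ and the universal property of comprehension.

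For the type formers I would set $I\in\mathrm{LType}(\Delta)$ to be the monoidal unit, $*:=\mathrm{id}_I$, and $A\otimes B$ to be the monoidal product; the elimination terms $\mathrm{let}\;t\;\mathrm{be}\;*\;\mathrm{in}\;a$ and $\mathrm{let}\;t\;\mathrm{be}\;x\otimes y\;\mathrm{in}\;c$ are interpreted via the unitor and associator isomorphisms and composition, and the corresponding -C and -U equalities become precisely the coherence identities of a symmetric monoidal category.

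The main obstacle is not any single step but the verification that substitution commutes with all of this structure, i.e. the Beck--Chevalley conditions. Concretely, I must check that reindexing $-\{f\}$ sends $\bigotimes\Xi$, $I$, and $A\otimes B$ to their counterparts over the new base, which is exactly where the hypothesis that each $\mathcal{L}(f)$ is \emph{strong} monoidal is indispensable; and that weakening and substitution through a dependent tail $\Delta'$ are compatible with the interpretation, which rests on the pullback squares established in Theorem \ref{thm:comprfunc}. The remaining bookkeeping -- matching each commuting conversion against the naturality of composition and the monoidal coherence theorem -- is tedious but routine once these strong-monoidal and pullback facts are in hand, so I would give the interpretation in full and then discharge the equational checks by appeal to coherence and to the preservation properties already proved.
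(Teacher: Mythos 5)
Your proposal is correct and follows essentially the same route as the paper's own proof: the same assignments ($\mathrm{ICtxt}:=\mathrm{ob}(\mathcal{C})$, $\mathrm{LTerm}(\Delta,\Xi,A):=\mathcal{L}(\Delta)(\bigotimes\Xi,A)$, comprehension for Int-C-Ext, reindexing along $\mathbf{p}_{\Delta,A}$ and $\langle\mathrm{id}_\Delta,a\rangle$ for weakening and intuitionistic substitution, composition with a tensored morphism for linear substitution, and the monoidal unit/product with unitors and associators for the $I$- and $\otimes$-rules). The only cosmetic difference is terminological: what you call ``Beck--Chevalley conditions'' for the structural core is, in the paper, just the strictness and strong monoidality of the reindexing functors $\mathcal{L}(f)$ together with the pullback squares from the comprehension-functor theorem, Beck--Chevalley proper being reserved for the optional quantifier and identity type formers.
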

\begin{proof}
We define
\begin{enumerate}
\item $\mathrm{ICtxt}:=\mathrm{ob}(\mathcal{C})$\\
 $\mathrm{LType}(\Delta):=\mathrm{ob}(\mathcal{L}(\Delta))$\\
 $\mathrm{LCtxt}(\Delta):=\mathrm{free-comm.-monoid}(\mathrm{LType}(\Delta))$ (where we will write $0$ and $+$ for the operations)\\
 $\mathrm{LTerm}(\Delta,\Xi,A):=\mathcal{L}(\Delta)(\bigotimes\Xi,A)$\\
\item C-Emp: $\cdot_\mathrm{ICtxt}:=\cdot_{\mathcal{C}}$ and $\cdot_{\mathrm{LCtxt}(\Delta)}:=0_{\mathrm{LCtxt}(\Delta)}$.\\
\item Int-C-Ext: $\Delta._\mathrm{ICtxt}{A}:=\Delta._{\mathcal{C}}{A}$ and $\mathrm{LType}(\Delta)\ra{}\mathrm{LType}(\Delta.{A}):=-\{\mathbf{p}_{\Delta,{A}}\}$ inducing the obvious function $\mathrm{LCtxt}(\Delta)\ra{}\mathrm{LCtxt}(\Delta.{A})$. We have seen how
\[
\begin{tikzcd}[ampersand replacement=\&, column sep=large, row sep=large]
\Delta.A.B\{\mathbf{p}_{\Delta,A}\} \arrow[r,"{\mathbf{q}_{\mathbf{p}_{\Delta,A},B}}"] \arrow[d,"{\mathbf{p}_{\Delta.A,B\{\mathbf{p}_{\Delta,A}\}}}"'] \& \Delta.B \arrow[d,"{\mathbf{p}_{\Delta,B}}"]\\
\Delta.A \arrow[r,"{\mathbf{p}_{\Delta,A}}"'] \& \Delta
\end{tikzcd}
\]
is a product in $\mathcal{C}/\Delta$ which interprets the double context extension $\Delta.A.B$ where $A,B\in \mathrm{LType}(\Delta)$. Being a Cartesian monoidal structure, this is, in particular, symmetric, and so validates Int-Exch.\\
\item Lin-C-Ext: $\Xi.A:=\Xi+ A$\\
\item Int-Var: $\mathrm{der}_{\Delta,A,\Delta'}\in\mathrm{LTerm}(\Delta.{A}.\Delta',\cdot,A)$ is defined as $$\mathbf{v}_{\Delta,{A}}\{\mathbf{p}_{\Delta.{A},\Delta'}\}:I\ra{}A\{\mathbf{p}_{\Delta,{A}}\circ\mathbf{p}_{\Delta.{A},\Delta'}\} \in \mathcal{L}(\Delta.{A}.\Delta')$$ Note that $\mathrm{der}_{\Delta,A,\Delta'}$ defines a morphism\\ $\Delta.{A}.\Delta'\ra{\mathrm{diag}_{\Delta,{A},\Delta'}}\Delta.{A}.\Delta'.{A}\{\mathbf{p}_{\Delta,{A}}\circ\mathbf{p}_{\Delta.{A},\Delta'}\}:=\langle \mathrm{id}_{\Delta.{A}.\Delta'},\mathrm{der}_{\Delta,A,\Delta'}\rangle $.\\ We will later show that this behaves as a diagonal morphism on ${A}$.

\item $\mathrm{id}_A\in \mathrm{LTerm}(\Delta,A,A)$ is taken to be $\mathrm{id}_A\in\mathcal{L}(\Delta)(A,A)$. Note that this is indeed the neutral element for our semantic linear term substitution operation that we will define shortly.

\item The required morphisms in Int-Weak are interpreted as follows. Suppose we are given $A,\Delta'\in\mathrm{ob}(\mathcal{L}(\Delta))$. We will define a functor
\[
\begin{tikzcd}[ampersand replacement=\&, column sep=huge]
\mathcal{L}(\Delta.\Delta^{\prime}) \arrow[r,"{\mathcal{L}(\langle f,a\rangle)}"] \& \mathcal{L}(\Delta.{A}.\Delta^{\prime}\{\mathbf{p}_{\Delta,{A}}\}),
\end{tikzcd}
\]
where $f$ and $a$ are defined as follows.
\[
\begin{tikzcd}[ampersand replacement=\&, column sep=huge]
\Delta.{A}.\Delta^{\prime}\{\mathbf{p}_{\Delta,{A}}\} \arrow[r,"{f:=\mathbf{p}_{\Delta,{A}}\circ\mathbf{p}_{\Delta.{A},\Delta^{\prime}\{\mathbf{p}_{\Delta,{A}}\}}}"] \& \Delta
\end{tikzcd}
\]
and 
\[
\begin{tikzcd}[ampersand replacement=\&, column sep=large]
I \arrow[r,"{a=\mathbf{v}_{\Delta.{A},\Delta^{\prime}\{\mathbf{p}_{\Delta,{A}}\}}}"] \& \Delta^{\prime}\{f\}=\Delta^{\prime}\{\mathbf{p}_{\Delta,{A}}\circ\mathbf{p}_{\Delta.{A},\Delta^{\prime}\{\mathbf{p}_{\Delta,{A}}\}}\} \;\in \mathcal{L}(\Delta.{A}.\Delta^{\prime}\{\mathbf{p}_{\Delta,{A}}\}).
\end{tikzcd}
\]
\item[8.\&9.] Int-Ty-Subst and Int-Tm-Subst, along a term $\Delta;\cdot \vdash a:A$, are interpreted by the functors $\mathcal{L}(\langle \mathrm{id}_\Delta,a\rangle)=-\{\langle \mathrm{id}_\Delta,a\rangle\}$. Indeed, let $B\in\mathcal{L}(\Delta.{A}.\Delta')$ and $a\in\mathcal{L}(\Delta)(I,A)$. Then, we define the context $\Delta.\Delta'[{a}/x]$  as  $\Delta.(\Delta'\{\langle \mathrm{id}_\Delta,a\rangle\})$ and the type $B[{a}/x]$ as $B\{\langle f,a'\rangle\}$, where
\[
\begin{tikzcd}[ampersand replacement=\&, column sep=huge]
\Delta.\Delta^{\prime}\{\langle \mathrm{id}_\Delta,a\rangle\} \arrow[r,"{\langle f,a^{\prime}\rangle}"] \& \Delta.{A}.\Delta^{\prime}
\end{tikzcd}
\]
is defined from
\[
\begin{tikzcd}[ampersand replacement=\&, column sep=large, row sep=large]
\Delta.\Delta^{\prime}\{\langle \mathrm{id}_\Delta,a\rangle\} \arrow[rr,"{\mathbf{p}_{\Delta,\Delta^{\prime}\{\langle \mathrm{id}_\Delta,a\rangle\}}}"] \arrow[drr,"{f}"'] \&\& \Delta \arrow[d,"{\langle \mathrm{id}_\Delta,a\rangle}"]\\
\&\& \Delta.{A}
\end{tikzcd}
\]
and
\[
\begin{tikzcd}[ampersand replacement=\&, column sep=large]
I \arrow[r,"{a^{\prime}:=\mathbf{v}_{\Delta,\Delta^{\prime}\{\langle \mathrm{id}_\Delta,a\rangle\}}}"] \& \Delta^{\prime}\{f\}=(\Delta^{\prime}\{\langle \mathrm{id}_\Delta,a\rangle\})\{\mathbf{p}_{\Delta,\Delta^{\prime}\{\langle \mathrm{id}_\Delta,a\rangle\}}\}.
\end{tikzcd}
\]
\item[10.] Lin-Tm-Subst is interpreted by composition in $\mathcal{L}(\Delta)$. To be precise, given $b\in \mathcal{L}(\Delta)(\bigotimes\Xi\otimes A,B)$ and $a\in \mathcal{L}(\Delta)(\bigotimes\Xi',A)$, we define $b[a/x]\in \mathcal{L}(\Delta)(\bigotimes\Xi\otimes \bigotimes \Xi',B)$ as $b\circ \mathrm{id}_{\bigotimes\Xi}\otimes a$.

\end{enumerate}
Note that all our substitution rules are interpreted by functors and are therefore associative.

The fact that Int-Var and Int-Weak define compatible (generalised) diagonals and projections is reflected in the fact that $\mathrm{diag}_{}$ and $\mathbf{p}_{}$ obey generalised comonoid laws: 
\[
\begin{tikzcd}[ampersand replacement=\&, column sep=large, row sep=large]
\Delta.{A}.\Delta^{\prime} \arrow[rr,"{\mathrm{diag}_{\Delta,A,\Delta^{\prime}}}"] \arrow[drr,"{\mathrm{id}_{\Delta.{A}.\Delta^{\prime}}}"'] \&\& \Delta.{A}.\Delta^{\prime}.{A}\{\mathbf{p}_{\Delta,{A}.\Delta^{\prime} }\} \arrow[d,"{\mathbf{p}_{\Delta.{A}.\Delta^{\prime},{A\{\mathbf{p}_{\Delta,{A}.\Delta^{\prime} }\}}}}"]\\
\&\&\Delta.{A}.\Delta^{\prime}
\end{tikzcd}
\]
\[
\begin{tikzcd}[ampersand replacement=\&, column sep=tiny, row sep=huge, cells={nodes={font=\scriptsize}}]
\Delta.A.\Delta^{\prime} \arrow[r,"{\mathrm{diag}_{\Delta,A,\Delta^{\prime}}}"] \arrow[d,"{\mathrm{diag}_{\Delta,A,\Delta^{\prime}}}"'] \& \Delta.{A}.\Delta^{\prime}.{A}\{\mathbf{p}_{\Delta,{A}.\Delta^{\prime} }\} \arrow[d,"{\mathrm{diag}_{\Delta,A,\Delta^{\prime}.{A}\{\mathbf{p}_{\Delta,{A}.\Delta^{\prime} }\}}}"]\\
\Delta.{A}.\Delta^{\prime}.{A}\{\mathbf{p}_{\Delta,{A}.\Delta^{\prime} }\} \arrow[r,"\quad{\mathrm{diag}_{\Delta.{A}.\Delta^{\prime},A\{\mathbf{p}_{\Delta,{A}.\Delta^{\prime} }\},\cdot}}"] \& \Delta.{A}.\Delta^{\prime}.{A}\{\mathbf{p}_{\Delta,{A}.\Delta^{\prime} }\}.{A}\{\mathbf{p}_{\Delta,{A}.\Delta^{\prime}.{A}\{\mathbf{p}_{\Delta,{A}.\Delta^{\prime} }\}}\} ,
\end{tikzcd}
\]
where we use $\mathbf{p}_{\Delta,{A}.\Delta'}$ as a shorthand notation for $\mathbf{p}_{\Delta,{A}}\circ\mathbf{p}_{\Delta.{A},\Delta'}$.

Finally, the model supports $I$- and $\otimes$-types. We interpret $I\in\mathrm{LType}(\Delta)$ as the unit object in $\mathcal{L}(\Delta)$ while its term $*$ is interpreted as the identity morphism. Similarly, we interpret $\otimes$ by the monoidal product on the fibres: $*:=\mathrm{id}_I\in\mathcal{L}(\Delta)$, $\mathrm{let}\; t\;\mathrm{be}\; *\;\mathrm{in} \; a:=t\otimes a$, $a\otimes b$ is defined as the tensor product of morphisms in $\mathcal{L}(\Delta)$, and $\mathrm{let}\;t\;\mathrm{be}\; x\otimes y\;\mathrm{in}\;c:= c\circ(\mathrm{id}_{\Xi'}\otimes t)$ (ignoring associators and unitors here). The C- and U-rules are immediate.
\end{proof}
In fact, the converse is also true: we can build a category of this sort from the syntax of ILDTT.
\begin{theorem}[co-Soundness] A tautological model $\widetilde{\mathbb{T}}$ of ILDTT with $I$ and $\otimes$-types defines a strict indexed symmetric monoidal category with comprehension $(\mathcal{C}^\mathbb{T},\mathcal{L}^\mathbb{T},\mathbf{p}^\mathbb{T},\mathbf{v}^\mathbb{T})$.\end{theorem}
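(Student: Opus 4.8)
The plan is to run the Soundness construction in reverse, reading its syntax--semantics dictionary backwards so as to manufacture the category of contexts, its indexed fibres, and the comprehension data directly from the sets $\mathrm{ICtxt}$, $\mathrm{LType}$, $\mathrm{LCtxt}$, $\mathrm{LTerm}$ of the tautological model $\widetilde{\mathbb{T}}$. First I would build the base category $\mathcal{C}^\mathbb{T}$. Its objects are the elements of $\mathrm{ICtxt}$, and I define hom-sets by recursion on the way the target is built by C-Emp and Int-C-Ext: set $\mathcal{C}^\mathbb{T}(\Delta',\cdot):=\{\ast\}$ and $\mathcal{C}^\mathbb{T}(\Delta',\Delta.A):=\{(f,a)\mid f\in\mathcal{C}^\mathbb{T}(\Delta',\Delta),\ a\in\mathrm{LTerm}(\Delta',\cdot,A\{f\})\}$, where $A\{f\}$ denotes the iterated Int-Ty-Subst of the components of $f$ into $A$. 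A morphism is thus precisely a list of closed (empty-linear-region) intuitionistic terms, i.e. a substitution. The identity on $\Delta.A_1.\cdots.A_n$ is the list of variables supplied by Int-Var, composition is iterated Int-Tm-Subst, and the category axioms are exactly the associativity and neutrality of substitution postulated by the model. The empty context $\cdot$ is terminal by construction.

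Next I would produce the strict indexed symmetric monoidal category $\mathcal{L}^\mathbb{T}$. For each $\Delta$ put $\mathrm{ob}(\mathcal{L}^\mathbb{T}(\Delta)):=\mathrm{LType}(\Delta)$ and $\mathcal{L}^\mathbb{T}(\Delta)(A,B):=\mathrm{LTerm}(\Delta,A,B)$, the terms $\Delta;x:A\vdash t:B$ modulo the judgemental equality that the model divides out; here I use, as recorded in the discussion of the tautological model, that in the presence of $I$- and $\otimes$-types a linear context is replaceable by its $\otimes$-product, so single-type linear contexts suffice, and that $\mathcal{L}^\mathbb{T}(\Delta)(I,-)\cong\mathrm{LTerm}(\Delta,\cdot,-)$. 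Identities come from Lin-Var and composition from Lin-Tm-Subst. The monoidal product is $\otimes$ on objects with its action on morphisms built from $\otimes$-I and $\otimes$-E, unit $I$, and braiding obtained via Lin-Exch; the associator, unitors, symmetry and all their coherence equations are forced by the matching C-, U- and commuting-conversion rules. Reindexing $\mathcal{L}^\mathbb{T}(f):=-\{f\}$ is Int-Ty-Subst on objects and Int-Tm-Subst on morphisms; strictness of the indexing (so that $\mathcal{L}^\mathbb{T}$ is a genuine functor, not a pseudofunctor) is again the strict associativity of substitution, and each $-\{f\}$ is strong monoidal by the canonical isomorphisms $I\{f\}\cong I$ and $(A\otimes B)\{f\}\cong A\{f\}\otimes B\{f\}$ noted in the Remark.

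For the comprehension schema I would take as representing object for $(\Delta,A)$ the context extension $\mathbf{p}^\mathbb{T}_{\Delta,A}:\Delta.A\to\Delta$, the projection obtained by weakening the identity substitution on $\Delta$ via Int-Weak, and as universal element the last-variable term $\mathbf{v}^\mathbb{T}_{\Delta,A}\in\mathcal{L}^\mathbb{T}(\Delta.A)(I,A\{\mathbf{p}^\mathbb{T}_{\Delta,A}\})$ supplied by Int-Var. The required bijection $\mathcal{L}^\mathbb{T}(\Delta')(I,A\{f\})\cong\mathcal{C}^\mathbb{T}/\Delta(f,\mathbf{p}^\mathbb{T}_{\Delta,A})$, sending $a\mapsto(f,a)$, is then essentially definitional: a closed term of $A\{f\}$ is exactly the datum needed to extend $f$ to a morphism into $\Delta.A$ lying over $\Delta$. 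Naturality of this bijection in $f$ is once more the compatibility of the pairing with substitution.

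The genuine content, and the main obstacle, is not the choice of data but the bookkeeping that every clause respects judgemental equality and that all equational laws hold on the nose. Concretely, the laborious steps are (i) functoriality of $\mathcal{C}^\mathbb{T}$ and of each $-\{f\}$, which reduces to associativity and unitality of substitution, and (ii) the symmetric monoidal coherence axioms in each fibre, which must be extracted from the C-, U-, and commuting-conversion rules --- exactly the kind of syntactic calculation already illustrated in the proofs of the two preceding theorems. None of these demands a new idea: each equation is discharged by a direct appeal to one structural or logical rule of ILDTT, so I would exhibit the data above and name the rule responsible for each identity rather than grinding through every coherence diagram.
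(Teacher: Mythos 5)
Your construction is correct and takes essentially the same route as the paper's own proof: the base category of intuitionistic contexts with morphisms defined by recursion as lists of closed (empty-linear-region) terms, identities and projections from Int-Var, composition and reindexing by parallel substitution (whose associativity discharges functoriality and strictness), and the comprehension bijection holding essentially by definition of the hom-sets. The only, immaterial, difference is that the paper takes the objects of each fibre $\mathcal{L}^\mathbb{T}(\Delta)$ to be linear \emph{contexts}, with $\mathcal{L}^\mathbb{T}(\Delta)(\Xi,\Xi'):=\mathrm{LTerm}(\Delta,\Xi,\bigotimes\Xi')$ and the empty context as monoidal unit (making the comprehension bijection hold on the nose), whereas you take single types with unit $I$; the two presentations are equivalent precisely by the $\otimes$/$I$-collapse of contexts that you invoke.
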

\begin{proof}
The main technical difficulty in this proof will be that our category of contexts has context morphisms as morphisms (corresponding to lists of terms of the type theory) while the type theory only talks about individual terms. The same difficulty is also encountered when proving completeness of the categories with families semantics for ordinary DTT. It is sometimes fixed by (conservatively) extending the type theory to also talk about context morphisms explicitly. See e.g. \cite{pitts2001categorical}.

\begin{enumerate}
\item[1.] We define $\mathrm{ob}(\mathcal{C}^\mathbb{T}):=\mathrm{ICtxt}$, modulo $\alpha$-equivalence, and write $\Delta.{A}:=\Delta,x:A$. The designated object will be $\cdot$ (from C-Emp), which will automatically become a terminal object because of our definition of a morphism of $\mathcal{C}^\mathbb{T}$ (context morphism). Indeed, we define morphisms in $\mathcal{C}^\mathbb{T}$ by induction as follows.

Start by defining $\mathcal{C}^\mathbb{T}(\Delta',\cdot):=\{\langle\rangle\}$ and for $\Delta\in\mathrm{ICtxt}$ that is not of the form $\Delta''.{A}$, define $\mathcal{C}^\mathbb{T}(\Delta',\Delta)=\{\mathrm{id}_\Delta\}$ if $\Delta'=\Delta$ and $\mathcal{C}^\mathbb{T}(\Delta',\Delta)=\emptyset$ otherwise.

Then, by induction on the length $n$ of $\Delta=x_1:A_1,\ldots,x_n:A_n$, we define 
$$\mathcal{C}^\mathbb{T}(\Delta',\Delta.{A_{n+1}}):=\Sigma_{f\in\mathcal{C}^\mathbb{T}(\Delta',\Delta)}\mathrm{LTerm}(\Delta',\cdot,A_{n+1}[f/x]),$$
where $A_{n+1}[f/x]$ is defined, using Int-Ty-Subst, to be the (syntactic operation of) parallel substitution (see \cite{hofmann1997syntax}, section 2.4) of the list $f_1,\ldots, f_n$ of linear terms $\Delta';\cdot \vdash f_i:A_i[f_1/x_1,\ldots,f_{i-1}/x_{i-1}]$ that $f$ consists of, for the variables $x_1,\ldots, x_n$ in $\Delta$.

Note that, in particular, according to Int-Var, $\mathrm{LTerm}({A_1}.\ldots .{A_n},\cdot,{A_i})$ contains a term\\ $\mathrm{der}_{{A_1}.\ldots.{A_{i-1}},{A_i},{A_{i+1}}.\ldots.{A_n}}$, which allows us to define, inductively,
$$\mathbf{p}_{{A_1}.\ldots.{A_n},\cdot}:=\langle\rangle\in\mathcal{C}^\mathbb{T}({A_1}.\ldots.{A_n},\cdot)$$
$$\mathbf{p}_{{A_1}.\ldots.{A_n},{A_1}.\ldots.{A_i}}:=\mathbf{p}_{{A_1}.\ldots.{A_n},{A_1}.\ldots.{A_{i-1}}},\mathrm{der}_{{A_1}.\ldots.{A_{i-1}},{A_i},{A_{i+1}}.\ldots.{A_n}}\in\mathcal{C}^\mathbb{T}({A_1}.\ldots.{A_n},{A_1}.\ldots.{A_i})$$
In particular, we define identities in $\mathcal{C}^\mathbb{T}$ from these: $\mathrm{id}_{{A_1}.\ldots.{A_n}}:=\mathbf{p}_{{A_1}.\ldots.{A_n},{A_1}.\ldots.{A_n}}$. We will also use these 'projections' in item 3 to define the comprehension schema.

We define composition in $\mathcal{C}^\mathbb{T}$ by induction. Let ${B_1}.\ldots.{B_m}=\Delta'\ra{f=f_1,\ldots,f_n}\Delta={A_1}.\ldots.{A_n}$ and $\Delta''\ra{g=g_1,\ldots,g_m}\Delta'$. Then we define $(f_1,\ldots,f_{n-1},f_n)\circ g:=(f_1,\ldots, f_{n-1})\circ g,f_n[g/x]$, where $f_n[g/x]$ denotes the parallel substitution of $g=g_1,\dots,g_m$ for the free variables $x_1,\ldots,x_m$ in $f_n$, using Int-Tm-Subst. Note that associativity of composition comes from the associativity of substitution that is implicit in the syntax, while the identity morphism we defined acts as a neutral element for our composition.

\item[2.] Define $\mathrm{ob}(\mathcal{L}^\mathbb{T}(\Delta)):=\mathrm{LCtxt}(\Delta)$ and $\mathcal{L}^\mathbb{T}(\Delta)(\Xi,\Xi'):=\mathrm{LTerm}(\Delta,\Xi,\bigotimes \Xi')$. Composition is defined through Lin-Tm-Subst and $\otimes$-E. Identities are given by Lin-Var. The monoidal unit is given by $\cdot\in\mathrm{LCtxt}(\Delta)$, while the monoidal product $\otimes$ on objects is given by context concatenation. The monoidal product $\otimes$ on morphisms is given by $\otimes$-I. Note that the associators and unitors follow from the associative and unital laws for the commutative monoid of contexts together with $\otimes$-C and $\otimes$-U. (Note that the rules for $\otimes$ give us an isomorphism between an arbitrary context $\Xi$ and the one-type-context $\bigotimes \Xi$, while the rules for $I$ do the same for $\cdot$ and $I$.)

We define $\mathcal{L}^\mathbb{T}(f)$ on objects by parallel substitution in each type in a linear context, via Int-Ty-Subst, and on morphisms by parallel substitution, via Int-Tm-Subst. Note that functoriality is given by implicit properties of the syntax, such as associativity of substitution. Note that this defines a strong symmetric monoidal functor. We conclude that $\mathcal{L}^\mathbb{T}$ is a functor $\mathcal{C}^\mathbb{T}{}^{op}\ra{}\mathrm{SMCat}$.

\item[3.] We define the following comprehension schema on $\mathcal{L}^\mathbb{T}$. Suppose $\Delta\in\mathcal{C}^\mathbb{T}$ and $A\in\mathcal{L}^\mathbb{T}(\Delta)$.

Define $\Delta.{A}\ra{\mathbf{p^\mathbb{T}}_{\Delta,A}}\Delta$ as $\mathbf{p}_{\Delta,A}$ from 1. and $I\ra{\mathbf{v^\mathbb{T}}_{\Delta,{A}}} A\{\mathbf{p}_{\Delta,{A}}\}$ (through Int-Var) as\\ $\mathrm{der}_{A}\in\mathrm{LTerm}(\Delta.{A},\cdot,A)=\mathcal{L}^\mathbb{T}(\Delta.{A})(I,A\{\mathbf{p^\mathbb{T}}_{\Delta,{A}}\})$.

Suppose we are given $\Delta'\ra{f}\Delta$ and $a\in\mathcal{L}^\mathbb{T}(\Delta')(I,A\{f\})=\mathrm{LTerm}(\Delta',\cdot,A[f/x])$. Then, by definition of the morphisms in $\mathcal{C}^\mathbb{T}$, there is a unique morphism\\ $\langle f,a\rangle:= f,a\in\mathcal{C}^\mathbb{T}(\Delta',\Delta.{A}):=\Sigma_{f\in\mathcal{C}^\mathbb{T}(\Delta',\Delta)}\mathrm{LTerm}(\Delta',\cdot,A[f/x])$ such that $\mathbf{p}^\mathbb{T}_{\Delta,{A}}\circ\langle f,a\rangle =f$ and $\mathbf{v}^\mathbb{T}_{\Delta,{A}}\{\langle f,a\rangle \}=a$. The uniqueness follows from the fact that $\mathbf{p}^\mathbb{T}_{\Delta,{A}}\circ - $ and $\mathbf{v^\mathbb{T}}_{\Delta,{A}}\{-\}$ are the two (dependent) projections of the $\Sigma$-type (in $\mathrm{Set}$) that defines this homset.
\end{enumerate}
\end{proof}

\begin{theorem}[Completeness] The construction described in 'co-Soundness' followed by the one described in 'Soundness' is the identity (up to categorical equivalence\footnote{The correct formal statement here would be that co-soundness followed by soundness (both of which define 2-functors between the 2-category of tautological models of ILDTT and the 2-category of strict indexed symmetric monoidal categories with comprehension) is 2-equivalent to the identity.}); that is, strict indexed symmetric monoidal categories with comprehension provide a complete semantics for ILDTT with $I$- and $\otimes$-types\footnote{It is easy to see that, similarly, indexed symmetric multicategories with comprehension form a complete semantics for ILDTT, possibly without $I$- and $\otimes$-types.}.
\end{theorem}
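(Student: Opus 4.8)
The plan is to unwind the composite $\text{Soundness}\circ\text{co-Soundness}$ on a fixed tautological model $\widetilde{\mathbb{T}}$ and exhibit an explicit equivalence back to $\widetilde{\mathbb{T}}$. First I would compute the round-trip model $\widetilde{\mathbb{T}}':=\widetilde{\mathbb{T}}^{(\mathcal{C}^\mathbb{T},\mathcal{L}^\mathbb{T},\mathbf{p}^\mathbb{T},\mathbf{v}^\mathbb{T})}$ concretely by reading off the Soundness construction applied to the co-Soundness output. One finds that $\mathrm{ICtxt}'=\mathrm{ob}(\mathcal{C}^\mathbb{T})=\mathrm{ICtxt}$ on the nose; that $\mathrm{LType}'(\Delta)=\mathrm{ob}(\mathcal{L}^\mathbb{T}(\Delta))=\mathrm{LCtxt}(\Delta)$; that $\mathrm{LCtxt}'(\Delta)$ is the free commutative monoid on $\mathrm{LCtxt}(\Delta)$; and that $\mathrm{LTerm}'(\Delta,\Xi',A')=\mathcal{L}^\mathbb{T}(\Delta)(\bigotimes\Xi',A')$, which by the definition of $\mathcal{L}^\mathbb{T}$ unfolds to $\mathrm{LTerm}(\Delta,\bigotimes\Xi',\bigotimes A')$, where $\bigotimes$ denotes the monoidal product (context concatenation, resp. the $\otimes$-fold of a context). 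The only genuine discrepancy is that the new linear types are the old linear \emph{contexts}: types have been inflated to multisets of types. This is precisely the mismatch the theorem resolves ``up to equivalence''.

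Second, I would build a translation $\Phi:\widetilde{\mathbb{T}}\to\widetilde{\mathbb{T}}'$ that is the identity on intuitionistic contexts, sends a linear type $A\in\mathrm{LType}(\Delta)$ to the singleton context $[A]\in\mathrm{LType}'(\Delta)$ (and a linear context to its multiset of singletons), and sends a term to itself: a $t\in\mathrm{LTerm}(\Delta,\Xi,A)$ is by construction a morphism $\Xi\to[A]$ in $\mathcal{L}^\mathbb{T}(\Delta)$, and since $\bigotimes(\Phi\Xi)=\Xi$ (up to the coherence isomorphisms of $\otimes$) this is exactly an element of $\mathrm{LTerm}'(\Delta,\Phi\Xi,[A])=\mathrm{LTerm}(\Delta,\Xi,A)$. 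Because $\mathcal{C}^\mathbb{T}$ and $\mathcal{L}^\mathbb{T}$ were built to mirror the syntax, $\Phi$ automatically respects C-Emp, the two context-extension rules, Int-Var, Lin-Var, Int-Weak, the three substitution operations, and the comprehension data $(\mathbf{p}^\mathbb{T},\mathbf{v}^\mathbb{T})$; these verifications are immediate unfoldings of the definitions given in the co-Soundness proof.

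Third, I would check that $\Phi$ is an equivalence of tautological models rather than merely a strict morphism. On each fibre, $\Phi$ is the singleton embedding $\mathrm{LType}(\Delta)\hookrightarrow\mathcal{L}^\mathbb{T}(\Delta)=\mathrm{LType}'(\Delta)$, which is fully faithful (the term-sets match exactly, as just noted) and essentially surjective, since every context $\Xi$ is isomorphic in $\mathcal{L}^\mathbb{T}(\Delta)$ to the singleton $[\bigotimes\Xi]$. The pseudo-inverse is the folding functor $\bigotimes:\mathrm{LCtxt}(\Delta)=\mathrm{LType}'(\Delta)\to\mathrm{LType}(\Delta)$, which is exactly the map shown to be injective-on-objects-up-to-isomorphism and surjective in the remark following the definition of semantic $I$- and $\otimes$-types; the witnessing natural isomorphism $[\bigotimes\Xi]\cong\Xi$ is supplied by the $I$- and $\otimes$-C/U rules (concretely the equalities $\mathrm{let}\;t\;\mathrm{be}\;x\otimes y\;\mathrm{in}\;x\otimes y=t$ and its $I$-analogue). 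I would then verify that this family of isomorphisms is monoidal, natural in $\Delta$ (i.e. commutes with the strong monoidal functors $\mathcal{L}^\mathbb{T}(f)$ given by substitution), and compatible with $\mathbf{p}^\mathbb{T},\mathbf{v}^\mathbb{T}$, so that $\Phi$ is a genuine equivalence and not just a fibrewise one. Finally, for the 2-categorical refinement in the footnote, I would observe that the assignment $\widetilde{\mathbb{T}}\mapsto(\Phi,\bigotimes,\text{witnessing isos})$ is itself natural in $\widetilde{\mathbb{T}}$, yielding a pseudonatural equivalence between $\text{Soundness}\circ\text{co-Soundness}$ and the identity $2$-functor on tautological models.

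The main obstacle is not any single deep step but the bookkeeping around the type-versus-context inflation exposed in the first paragraph: one must consistently track that $\bigotimes$ is only an \emph{equivalence} of the monoidal fibres rather than an isomorphism, and hence that equality of interpretations must everywhere be replaced by the canonical $I$/$\otimes$-isomorphisms, checked to be coherent, monoidal, natural in $\Delta$, and comprehension-compatible. This is compounded by the reconciliation already flagged in the co-Soundness proof, namely that $\mathcal{C}^\mathbb{T}$ has context morphisms (lists of terms) as morphisms whereas the syntax speaks only of single terms, handled there via parallel substitution (cf. \cite{hofmann1997syntax,pitts2001categorical}). Everything else is a routine unfolding of the rules for $I$ and $\otimes$.
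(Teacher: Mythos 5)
Your proposal is correct: the paper's own ``proof'' of this theorem consists of the single sentence ``This is a trivial exercise,'' and your unfolding --- identity on intuitionistic contexts, the singleton embedding on linear types with the $\bigotimes$-folding functor as pseudo-inverse, and the $I$-/$\otimes$-C/U rules supplying the witnessing isomorphisms $[\bigotimes\Xi]\cong\Xi$ --- is exactly the evident argument the paper intends, including the observation that the only real content is the type-versus-context inflation. The only caveat is that the paper never formally defines morphisms or equivalences of tautological models, so your fibrewise fully-faithful-and-essentially-surjective criterion (compatible with the structural operations and natural in $\widetilde{\mathbb{T}}$) is a reasonable, and surely the intended, reading of ``identity up to categorical equivalence.''
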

\begin{proof}This is a standard exercise.\end{proof}
\begin{theorem}[Failure of co-Completeness] The construction described in 'Soundness' followed by the one described in 'co-Soundness' is not equivalent to the identity; that is, co-Completeness fails (as for the categories with families semantics for DTT).
\end{theorem}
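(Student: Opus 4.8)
The plan is to prove a negative result, so it suffices to exhibit a single $(\mathcal{C},\mathcal{L},\mathbf{p},\mathbf{v})$ whose reconstruction $(\mathcal{C}^{\mathbb{T}},\mathcal{L}^{\mathbb{T}},\mathbf{p}^{\mathbb{T}},\mathbf{v}^{\mathbb{T}})$ is not equivalent to it. First I would unwind the composite. The Soundness construction records the base only through $\mathrm{ob}(\mathcal{C})$ as a bare set together with the context-formation operation and the weakening and substitution \emph{functions}; it discards the actual morphisms of $\mathcal{C}$. The co-Soundness construction then manufactures a new category of contexts whose hom-sets are built \emph{entirely} from the comprehension data, as nested $\Sigma$'s of sets of closed terms. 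Consequently the essential image of co-Soundness consists of ``syntactically generated'' categories: every object is presented as a finite iterated comprehension $\cdot.A_1.\cdots.A_n$ of the terminal object, every morphism into such an object is (by the universal property of the comprehension schema) a list of terms, and any object \emph{not} so presented is left isolated, retaining only its identity endomorphism.

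The key step is then to produce an equivalence-invariant separating a chosen $\mathcal{C}$ from every such syntactically generated category. As noted in the remark that a comprehension structure with Cartesian fibres is exactly a model of DTT, it is enough to reproduce the familiar failure of co-completeness for the categories-with-families semantics of DTT; so I would work with a Cartesian-fibred model, e.g.\ a families model from section~\ref{sec:dismod} over a base $\mathcal{C}$ strictly larger than the subcategory generated by comprehension from the terminal object. In such a model $\mathcal{C}$ carries an object $X$ together with morphisms into it that cannot be realised as lists of terms. Comparing the cardinalities of the hom-functors $\mathrm{Hom}(\cdot,-)$ out of the terminal object --- which any equivalence must preserve --- then separates $\mathcal{C}$ from $\mathcal{C}^{\mathbb{T}}$, since $\mathcal{C}$ admits objects receiving strictly more maps from $\cdot$ than any reconstructed context can.

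The hard part is that the composite is only loosely defined on semantic inputs, and this is exactly where the argument must be pinned down. The co-Soundness construction presupposes a strict, finitely length-graded list presentation of contexts, so that its induction on context length is well-founded; a semantic model, however, supplies only a bare object-set, in which, for instance, extension by the unit type is an isomorphism $\Delta.I\cong\Delta$ rather than a strict length increase. The two genuine tasks are therefore to (i) characterise precisely the essential image of co-Soundness --- the \emph{democratic}, syntactically generated models in which context length is well-defined and every context morphism arises from a term --- and (ii) certify that the chosen witness really lies outside this image rather than being accidentally recovered, as happens when the only unreconstructed object is initial and its trivial incoming hom-sets match on the nose. Carrying this out up to categorical equivalence, rather than on the nose, is the delicate point; once a single separating invariant survives, the failure of co-completeness is immediate, in exact parallel with the DTT case.
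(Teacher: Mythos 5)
Your proposal is correct and takes essentially the same route as the paper: the paper's entire proof is the observation that the round trip discards all non-trivial morphisms into objects of $\mathcal{C}$ that are not of the form $\Delta.A$ (i.e.\ not reachable by iterated comprehension from the terminal object), so any model whose base contains such objects and morphisms is a counterexample. Your elaboration --- characterising the essential image of co-Soundness, exhibiting an explicit families-model witness, and separating it by an equivalence-invariant on hom-sets --- is considerably more careful than the paper's two-sentence argument, and the subtleties you flag (ill-definedness of the length induction on semantic inputs, isolated objects masquerading as an initial object) are genuine issues the paper silently glosses over.
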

\begin{proof}
Indeed, if we start with a strict indexed symmetric monoidal category with comprehension, construct the corresponding tautological model $\widetilde{\mathbb{T}}$ and then construct its syntactic category, we have effectively thrown away all the non-trivial morphisms into objects that are not of the form $\Delta.{A}$.

Of course, we can easily obtain a co-complete model theory by putting this extra restriction on our models. Alternatively, perhaps more naturally from a categorical point of view, we can take the obvious (see e.g. \cite{pitts2001categorical}) conservative extension of our syntax by also talking about context morphisms (corresponding to morphisms in our base category). In that case, we would obtain an actual internal language for strict indexed symmetric monoidal categories with comprehension. This also has the advantage that we can easily obtain an internal language for strict indexed monoidal categories by dropping the axioms Int-C-Ext, Int-C-Ext-Eq, and Int-Var, which correspond to the comprehension schema. We have not chosen this route because it would mean that the syntax does not fit as well with what has been considered so far in the syntactic tradition.\end{proof}
\begin{corollary}[Relation with DTT and ILTT] A model $(\mathcal{C},\mathcal{L},\mathbf{p},\mathbf{v})$ of ILDTT with $I$- and $\otimes$-types defines a model $\mathcal{I}$ of DTT that should be thought of as the intuitionistic content of the linear type theory. This will become even clearer through our treatment of $!$-types and in the examples we treat.

Moreover, it defines a model of ILTT with $I$- and $\otimes$-types (i.e. a symmetric monoidal category) in every context.

Conversely, it is easily seen that every syntactic model\footnote{i.e. a model where we do not have any non-trivial morphisms into contexts that are not built from $\cdot$ by appending types.} of DTT can be obtained this way, up to equivalence, from a syntactic model of ILDTT (we take the same constants and axioms: effectively the same theory but in a system without contraction and weakening) and that every model of ILTT can be embedded in a model of ILDTT. (As we will see later, we can cofreely add type dependency on $\mathrm{Set}$.)\end{corollary}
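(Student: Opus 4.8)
The plan is to treat the three assertions separately, leaning on the comprehension functor of Theorem~\ref{thm:comprfunc} and on the families construction of section~\ref{sec:dismod}.

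For the forward directions, the model $\mathcal{I}$ of DTT is precisely the indexed category produced by Theorem~\ref{thm:comprfunc}: the full subindexed category of $\mathcal{C}/-$ on the objects $\mathbf{p}_{\Delta,A}$, together with the comprehension morphism $M\colon\mathcal{L}\to\mathcal{I}$. As observed in the remark following that theorem, $\mathcal{I}$ is a full comprehension category (a display map category) with terminal object and Cartesian fibres, hence a model of DTT in the sense of the definition given in the Preliminaries, and $M$ exhibits it as the intuitionistic content of $\mathcal{L}$. The second claim is immediate from the definition of a strict indexed symmetric monoidal category: each $\mathcal{L}(\Delta)$ is by assumption a symmetric monoidal category, which is exactly a (non-dependent) model of ILTT with $I$- and $\otimes$-types.

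For the DTT converse I would argue syntactically. A syntactic model of DTT is the classifying category of some DTT theory $T$; I reinterpret $T$ as an ILDTT theory $T^\flat$ with the same sorts, constants and axioms, formulated in the structural rules of section~\ref{sec:syn} with every original hypothesis placed in the intuitionistic region and every type former read as the corresponding linear one. Applying co-Soundness to the tautological model of $T^\flat$ gives a strict indexed symmetric monoidal category with comprehension, and extracting its intuitionistic content via $M$ recovers the classifying category of $T$ up to equivalence. The point to verify is that, since $T^\flat$ carries no genuinely linear data, $M$ is here full and faithful, so the fibres of $\mathcal{I}$ match the original DTT fibres; this is exactly the fullness condition of \cite{jacobs1993comprehension} discussed in the Preliminaries, and it is the step I expect to require the most care.

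For the ILTT converse, let $\mathcal{V}$ be a symmetric monoidal category. I build the cofree families model over $\mathrm{Set}$: put $\mathcal{C}:=\mathrm{Set}$ and $\mathcal{L}(X):=\mathcal{V}^X$ with pointwise monoidal structure and reindexing by precomposition. Given $A=(A_x)_{x\in X}$, I take $X.A:=\coprod_{x\in X}\mathcal{V}(I,A_x)$ with the evident projection $\mathbf{p}$ and the tautological universal element $\mathbf{v}_{(x,s)}:=s$; a short check shows this represents $(f\colon Y\to X)\mapsto\prod_{y\in Y}\mathcal{V}(I,A_{f(y)})$, so we obtain a strict indexed symmetric monoidal category with comprehension. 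The fibre over the terminal object satisfies $\mathcal{L}(1)\simeq\mathcal{V}$, giving the embedding, and since $X\mapsto\mathcal{V}^X$ is right adjoint to evaluation at $1$ it is the cofree such extension, justifying the parenthetical remark; this is the class of models studied in detail in section~\ref{sec:dismod}.
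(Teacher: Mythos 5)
Your overall route is the paper's own: the forward claims come from Theorem \ref{thm:comprfunc} and the remark following it, together with the observation that each fibre $\mathcal{L}(\Delta)$ is a symmetric monoidal category; the DTT converse from re-reading the same constants and axioms in the linear system; and the ILTT converse from the $\mathrm{Fam}$ construction of section \ref{sec:dismod} (your description of $\mathrm{Fam}(\mathcal{V})$, its comprehension, and $\mathrm{ev}_1\dashv\mathrm{Fam}$ agrees with the paper's, and that part is fine). However, two of your supporting claims are stronger than what is true. The minor one: $\mathcal{I}$ does not in general have ``Cartesian fibres'', so it is not a model of DTT in the sense of the Preliminaries' indexed-Cartesian-monoidal definition. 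The candidate product of $\mathbf{p}_{\Delta,A}$ and $\mathbf{p}_{\Delta,B}$ is the composite $\mathbf{p}_{\Delta,A}\circ\mathbf{p}_{\Delta.A,B\{\mathbf{p}_{\Delta,A}\}}$, which lies in the full subcategory $\mathcal{I}(\Delta)$ only when $\mathrm{ob}(\mathcal{I})$ is closed under composition; by Theorem \ref{thm:inttyp} that is exactly the condition for $\mathcal{I}$ to support $\Sigma$-types, and it is not implied by $I$- and $\otimes$-types. Similarly $\mathbf{p}_{\Delta,I}$ need not be fibrewise terminal, since $\mathcal{L}(\Delta.A)(I,I\{\mathbf{p}_{\Delta,A}\})$ need not be a singleton. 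What holds, and what the corollary means, is that $\mathcal{I}$ is a full comprehension category (display map category), as in the paper's remark.

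The serious problem is in the DTT converse, at precisely the step you single out as the crux. For the syntactic ILDTT model of $T^\flat$ the comprehension functor $M$ is \emph{not} full and faithful, even though $T^\flat$ has no linear constants. Counterexample: let $T$ have closed type constants $A,B$ and a closed term constant $b_0:B$. Then Int-Weak gives $x:A;\cdot\vdash b_0:B$, so $\mathcal{I}^{T^\flat}(\cdot)(\mathbf{p}_{\cdot,A},\mathbf{p}_{\cdot,B})$ is nonempty; but $\mathcal{L}^{T^\flat}(\cdot)(A,B)=\emptyset$, because a term $\cdot;x:A\vdash b:B$ must consume the linear variable $x$ exactly once via Lin-Var, and with no constants consuming $A$ there is no way to do so --- linearity forbids discarding $x$. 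Fortunately, full-faithfulness of $M$ is also not what your argument needs: the hom-sets of $\mathcal{I}^{T^\flat}(\Delta)$ are, by the comprehension isomorphism, $\mathcal{L}^{T^\flat}(\Delta.A)(I,B\{\mathbf{p}_{\Delta,A}\})$, i.e.\ ILDTT terms $\Delta,x:A;\cdot\vdash b:B$; the fibre homs of $\mathcal{L}^{T^\flat}$ never enter. So the verification that actually remains is a syntactic conservativity statement: judgements of $T^\flat$ with empty linear region correspond exactly to DTT judgements of $T$ (by induction on derivations, using that the intuitionistic structural rules of section \ref{sec:syn} are those of DTT and that linear assumptions introduced in subderivations can only be consumed trivially). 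The Jacobs fullness condition you invoke is part of the hypothesis on the given DTT model, not something to be established about $T^\flat$; conflating these two distinct ``fullness'' statements sends your proof down a path that cannot be completed as written.
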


\subsubsection*{Semantic Type Formers}
\begin{theorem}[Semantic type formers]\label{thm:semtype} For the other type formers, a model of ILDTT with $I$- and $\otimes$-types (a strict indexed symmetric monoidal category with comprehension) has the following properties.
\begin{enumerate}
\item It supports $\Sigma$-types iff all the pullback functors $\mathcal{L}(\mathbf{p}_{\Delta,{A}})$ have left adjoints $\Sigma_{!{A}}$ that satisfy the Beck-Chevalley condition\footnote{Remember that the Beck-Chevalley condition for a pullback square \begin{center}
\begin{tikzcd}[ampersand replacement=\&, column sep=large, row sep=large]
A \arrow[r,"{h}"] \arrow[d,"{f}"'] \& B \arrow[d,"{k}"]\\
C \arrow[r,"{g}"'] \& D
\end{tikzcd}
\end{center}
corresponds to the statement that the obvious morphism (from commutativity of the pullback square, unit, and counit) $f_!h^*\ra{}f_!h^*k^*k_!\ra{\cong}f_!f^*g^*k_!\ra{}g^*k_!$ is an isomorphism, where we write $f^*$ for $\mathcal{L}(f)$ and $f_!$ for its left adjoint. In the case where $f^*$ has a right adjoint (as for the $\Pi$-type), $f_*$, we mean by the dual Beck-Chevalley condition that the obvious morphism $g^*k_*\ra{}g^*k_*h_*h^*\ra{\cong}g^*g_*f_*h^*\ra{}f_*h^*$ is an isomorphism.}\footnote{This condition should be thought of as the analogue, for $\Sigma$- and $\Pi$-types, of the condition on the substitution functors preserving the appropriate categorical structure for other type formers. It says that, in a sense, $\Sigma$- and $\Pi$-types are preserved under substitution.} for pullback squares in $\mathcal{C}$ of the following form,\footnote{Recall that $\mathbf{q}_{f,{B}}:=\langle f\circ\mathbf{p}_{\Delta',{B\{f\}}},\mathbf{v}_{\Delta',{B\{f\}}} \rangle$ and that this square is indeed a pullback.}.
\[
\begin{tikzcd}[ampersand replacement=\&, column sep=large, row sep=large]
\Delta^{\prime}.{B\{f\}} \arrow[r,"{\mathbf{q}_{f,{B}}}"] \arrow[d,"{\mathbf{p}_{\Delta^{\prime},{B\{f\}}}}"'] \arrow[dr, phantom,"{(*)}" description] \& \Delta.{B} \arrow[d,"{\mathbf{p}_{\Delta,{B}}}"]\\
\Delta^{\prime} \arrow[r,"{f}"'] \& \Delta,
\end{tikzcd}
\]
and that satisfy Frobenius reciprocity\footnote{Frobenius reciprocity expresses compatibility of $\Sigma$ and $\otimes$, which is reasonable if we want a reading of $\Sigma$ as a generalisation of $\otimes$. If one wants to drop Frobenius reciprocity in the semantics, it is easy to see that the syntactic counterpart is setting $\Xi'\equiv\cdot$ in the $\Sigma$-E-rule.} in the sense that the canonical morphism $$\Sigma_{!{A}}(\Xi'\{\mathbf{p}_{\Delta,{A}}\}\otimes B)\ra{} \Xi'\otimes \Sigma_{!{A}}B$$ is an isomorphism, for all $\Xi'\in\mathcal{L}(\Delta)$, $B\in\mathcal{L}(\Delta.{A})$.
\item It supports $\Pi$-types iff all the pullback functors $\mathcal{L}(\mathbf{p}_{\Delta,{A}})$ have right adjoints $\Pi_{!{A}}$ that satisfy the dual Beck-Chevalley condition for pullbacks of the form $(*)$.
\item It supports $\multimap$-types iff $\mathcal{L}$ factors over the category of symmetric monoidal closed categories and closed strong monoidal functors.
\item It supports $\top$-types and $\&$-types iff $\mathcal{L}$ factors over the category of Cartesian categories with a symmetric monoidal structure and their homomorphisms.
\item It supports $0$-types and $\oplus$-types iff $\mathcal{L}$ factors over the category $\mathrm{dSMcCCat}$ of co-Cartesian categories with a distributive\footnote{Note that in light of theorem \ref{thm:pisigmainf}, the demand of distributivity here is essentially the same phenomenon as the demand of Frobenius reciprocity for $\Sigma$-types.} symmetric monoidal structure and their homomorphisms.
\item If it supports $\multimap$-types\footnote{Actually, we only need this for the 'if'. The 'only if' always holds. To make the 'if' work as well, in the absence of $\multimap$-types, we have to restrict $!$-E to the case where $\Xi'\equiv\cdot$.}, then it supports $!$-types iff all the comprehension functors $\mathcal{L}(\Delta)\ra{M_\Delta}\mathcal{I}(\Delta)$ have a left adjoint $\mathcal{I}(\Delta)\ra{L_\Delta}\mathcal{L}(\Delta)$ in the 2-category $\mathrm{SMCat}$ of symmetric monoidal categories, lax symmetric monoidal functors, and monoidal natural transformations\footnote{That is, a symmetric lax monoidal left adjoint functor $L_\Delta$ such that an inverse for its lax structure is given by the oplax structure on $L_\Delta$ coming from the lax structure on $M_\Delta$. Put differently, $L_\Delta$ is a left adjoint functor to $M_\Delta$ and is a strong monoidal functor in a way that is compatible with the lax structure on $M_\Delta$.} and, compatibly with substitution, for all $\Delta'\ra{f}\Delta\in \mathcal{C}$ the following square commutes, making $L_-$ a morphism of indexed categories:
\[
\begin{tikzcd}[ampersand replacement=\&, column sep=huge, row sep=large]
\mathcal{L}(\Delta) \arrow[r,"{\mathcal{L}(f)}"] \& \mathcal{L}(\Delta^{\prime})\\
\mathcal{I}(\Delta) \arrow[u,"{L_\Delta}"] \arrow[r,"{\begin{gathered}\mathcal{I}(f)=f^*\\[-0.5ex]\textnormal{(=pullback along }f\textnormal{)}\end{gathered}}"'] \& \mathcal{I}(\Delta^{\prime}) \arrow[u,"{L_{\Delta^{\prime}}}"'].
\end{tikzcd}
\]
\quad\\
Then the linear exponential comonad $!_\Delta:=L_\Delta\circ M_\Delta:\mathcal{L}(\Delta)\ra{}\mathcal{L}(\Delta)$ is our interpretation of the comodality $!$ in the context $\Delta$.
\item If it supports $\multimap$-types, then it supports $\mathrm{Id}$-types iff for all $A\in\mathrm{ob}\;\mathcal{L}(\Delta)$, we have left adjoints $\mathrm{Id}_{!A}\dashv -\{\mathrm{diag}_{\Delta,{A}}\}$ that satisfy a Beck-Chevalley condition: $\mathrm{Id}_{!A\{f\}}\circ \mathcal{L}(\mathbf{q}_{f,{A}})\ra{}\mathcal{L}(\mathbf{q}_{\mathbf{q}_{f,{A}},{A\{\mathbf{p}_{\Delta,A}\}}})\circ \mathrm{Id}_{!A}$ is an isomorphism.\\
Here, \footnotesize\mbox{\begin{tikzcd}[ampersand replacement=\&, column sep=huge, baseline=-0.6ex]
\Delta.A \arrow[r,"{\mathrm{diag}_{\Delta,A}:=\langle \mathrm{id}_{\Delta.A},\mathbf{v}_{\Delta,A}\rangle}"] \& \Delta.A.A\{\mathbf{p}_{\Delta,A}\}
\end{tikzcd}}\normalsize.
\end{enumerate}\end{theorem}
\begin{proof}
\begin{enumerate}
\item Assume our model supports $\Sigma$-types. We will show the claimed adjunction. The morphism from left to right is provided by $\Sigma$-I. The morphism from right to left is provided by $\Sigma$-E. $\Sigma$-C and $\Sigma$-U say exactly that these are mutually inverse. Naturality corresponds to the compatibility of $\Sigma$-I and $\Sigma$-E with substitution.
\[
\begin{tikzcd}[ampersand replacement=\&, column sep=large, row sep=small]
c' \arrow[r, maps to] \& c'\{\mathbf{p}_{\Delta,{A}}\}\circ\langle \mathrm{diag}_{\Delta,{A},\cdot},\mathrm{id}_B\rangle \\
\mathcal{L}(\Delta)(\Sigma_{!{A}}B,C) \arrow[r, shift left=1.5ex] \arrow[r, phantom,"{\cong}" description] \& \mathcal{L}(\Delta.{A})(B,C\{\mathbf{p}_{\Delta,{A}}\}) \arrow[l, shift left=1.5ex]\\
\mathrm{let}\;z\;\mathrm{be}\;  !{x} \otimes y\;\mathrm{in}\;c \& c \arrow[l, maps to]
\end{tikzcd}
\]

We show how the morphism from left to right arises from $\Sigma$-I.\\
\\
\small
\AxiomC{}
\RightLabel{Int-Var}
\UnaryInfC{$\Delta,x:A;\cdot \vdash x:A$}
\AxiomC{}
\RightLabel{Lin-Var}
\UnaryInfC{$\Delta,x:A;w:B\vdash w:B$}
\RightLabel{$\Sigma$-I}
\BinaryInfC{$\Delta,x:A;w:B\vdash  !{x} \otimes w:\Sigma_{!x:{!A}}B$}

\AxiomC{$\Delta;z:\Sigma_{!x:{!A}}B\vdash c': C$}
\RightLabel{Int-Weak}
\UnaryInfC{$\Delta,x:A;z:\Sigma_{!x:{!A}}B\vdash c': C$}
\RightLabel{Lin-Tm-Subst}
\BinaryInfC{$\Delta,x:A;w:B\vdash c'[ !{x}\otimes w/z]:C$}
\DisplayProof

\normalsize
\quad\\
\\
We show how the morphism from right to left is exactly $\Sigma$-E (with $\Xi'\equiv\cdot$, $\Xi\equiv z:\Sigma_{!x:{!A}}B$, $t\equiv z$).\\
\\
\AxiomC{$\Delta;\cdot \vdash C\;\mathrm{type}$}
\AxiomC{}
\RightLabel{Lin-Var}
\UnaryInfC{$\Delta;z:\Sigma_{!x:{!A}}B \vdash z:\Sigma_{!x:{!A}}B$}
\AxiomC{$\Delta,x:A;y:B\vdash c:C$}
\RightLabel{$\Sigma$-E}
\TrinaryInfC{$\Delta;z:\Sigma_{!x:{!A}}B\vdash \mathrm{let}\;z\;\mathrm{be}\;  !{x} \otimes y \;\mathrm{in}\;c:C$}
\DisplayProof
\\
\\
We show how Frobenius reciprocity can be proved in our type system (relying in particular on the form of the $\Sigma$-E-rule\footnote{To be precise, we will see that Frobenius reciprocity is validated because we allow dependency on $\Xi'$ in the $\Sigma$-E-rule. Conversely, it is easy to see that we can prove Frobenius reciprocity in our model if we have (semantic) $\multimap$-types, as this allows us to remove the dependency on $\Xi'$ in $\Sigma$-E.}).
\begin{lemma}[Frobenius reciprocity] The canonical morphism $$\Sigma_{!{A}}(\Xi'\{\mathbf{p}_{\Delta,{A}}\}\otimes B)\ra{f} \Xi'\otimes \Sigma_{!{A}}B$$ is an isomorphism, for all $\Xi'\in\mathcal{L}(\Delta)$, $B\in\mathcal{L}(\Delta.{A})$.
\end{lemma}
\begin{proof} We first show how to construct the morphism $f$ intended here.\\
\\
\tiny
\AxiomC{}
\RightLabel{Lin-Var}
\UnaryInfC{$\Delta;x':\Sigma_{!x:{!A}}(\Xi'\otimes B)\vdash x':\Sigma_{!x:{!A}}(\Xi'\otimes B)$}

\AxiomC{}
\RightLabel{Lin-Var}
\UnaryInfC{$\Delta,x:A;z:\Xi'\vdash z:\Xi'$}

\AxiomC{}
\RightLabel{Int-Var}
\UnaryInfC{$\Delta,x:A;\cdot \vdash x:A$}
\AxiomC{}
\RightLabel{Lin-Var}
\UnaryInfC{$\Delta;y:B\vdash y:B$}
\RightLabel{$\Sigma$-I}
\BinaryInfC{$\Delta,x:A;y:B\vdash  !{x} \otimes y:\Sigma_{!x:{!A}}B$}
\RightLabel{$\otimes$-I}
\BinaryInfC{$\Delta,x:A;z:\Xi', y:B\vdash z\otimes  ! {x} \otimes y:  \Xi'\otimes \Sigma_{!x:{!A}}B$}
\RightLabel{$\otimes$-E}
\UnaryInfC{$\Delta,x:A;w: \Xi'\otimes B\vdash \mathrm{let}\; w\;\mathrm{be}\;z\otimes y\;\mathrm{in}\;z\otimes  ! {x} \otimes y :\Xi'\otimes \Sigma_{!x:{!A}}B$}
\RightLabel{$\Sigma$-E}
\BinaryInfC{$\Delta;x':\Sigma_{!x:{!A}}(\Xi'\otimes B)\vdash f: \Xi'\otimes \Sigma_{!x:{!A}}B$}
\DisplayProof
\normalsize\\
\\
We now construct its inverse. Call it $g$.\\
\\
\tiny
\AxiomC{}
\RightLabel{Lin-Var}
\UnaryInfC{$\Delta;y_2:\Sigma_{!x:{!A}}B\vdash y_2:\Sigma_{!x:!{A}}B$}
\AxiomC{}
\RightLabel{Int-Var}
\UnaryInfC{$\Delta,x:A;\cdot \vdash x:A$}
\AxiomC{}
\RightLabel{Lin-Var}
\UnaryInfC{$\Delta;y_1:\Xi'\vdash y_1:\Xi'$}
\AxiomC{}
\RightLabel{Lin-Var}
\UnaryInfC{$\Delta;y:B\vdash y:B$}
\RightLabel{$\otimes$-I}
\BinaryInfC{$\Delta,x:A;y_1:\Xi',y:B\vdash y_1\otimes y:\Xi'\otimes B$}
\RightLabel{$\Sigma$-I}
\BinaryInfC{$\Delta,x:A;y_1:\Xi',y:B\vdash  !{x} \otimes  y_1\otimes y:\Sigma_{!x:{!A}}(\Xi'\otimes B)$}
\RightLabel{$\Sigma$-E}
\BinaryInfC{$\Delta;y_1:\Xi',y_2:\Sigma_{!x:!{A}}B\vdash \mathrm{let}\; y_2\;\mathrm{be}\; !{x} \otimes y\; \mathrm{in}\;  !{x} \otimes  y_1\otimes y:\Sigma_{!x:!A}(\Xi'\otimes B)$}
\RightLabel{$\otimes$-E}
\UnaryInfC{$\Delta;y':\Xi'\otimes \Sigma_{!x:{!A}}B\vdash g:\Sigma_{!x:{!A}}(\Xi'\otimes B)$}
\DisplayProof\footnote{The use of $\Sigma$-E is precisely where Frobenius reciprocity comes in, because of the factor $\Xi'$ in the $\Sigma$-E-rule.}
\quad
\normalsize
\\
\\
We leave it to the reader to verify that these morphisms are mutually inverse in the sense that $$\Delta;x':\Sigma_{!x:{!A}}(\Xi'\otimes B)\vdash g[f/y']\equiv x':\Sigma_{!x:{!A}}(\Xi'\otimes B)\txt{and} \Delta;y':\Xi'\otimes \Sigma_{!x:{!A}}B \vdash f[g/x']\equiv y':\Xi'\otimes \Sigma_{!x:!{A}}B .$$\end{proof}
\quad\\
\\
For the converse, we show how to obtain $\Sigma$-I from our morphism from left to right:\\
\\
\footnotesize
\AxiomC{}
\RightLabel{Lin-Var}
\UnaryInfC{$\Delta;z:\Sigma_{!x:{!A}}B\vdash z:\Sigma_{!x:{!A}}B$}
\RightLabel{"left to right"}
\UnaryInfC{$\Delta,x:A;w:B\vdash  !{x}\otimes w:\Sigma_{!x:{!A}}B$}
\AxiomC{$\Delta;\cdot \vdash a:A$}
\RightLabel{Int-Tm-Subst}
\BinaryInfC{$\Delta;w:B\vdash  !{a}\otimes w:\Sigma_{!x:{!A}}B$}
\AxiomC{$\Delta;\Xi\vdash b:B[{a}/x]$}
\RightLabel{Lin-Tm-Subst}
\BinaryInfC{$\Delta;\Xi\vdash !{a}\otimes b:\Sigma_{!x:{!A}}B$}
\DisplayProof
\normalsize
\\
\\
We show how to obtain $\Sigma$-E from our morphism from right to left, using Frobenius reciprocity. In particular, note that we do need Frobenius reciprocity.\\
\\
\footnotesize
\AxiomC{$\Delta;\cdot\vdash C\;\mathrm{type}$}
\AxiomC{$\Delta,x:A;\Xi', y:B\vdash c:C$}
\RightLabel{$\otimes$-E}
\UnaryInfC{$\Delta,x:A;y:\Xi'\otimes B\vdash c:C$}
\RightLabel{"right to left"}
\BinaryInfC{$\Delta;z:\Sigma_{!x:{!A}}(\Xi'\otimes B)\vdash \mathrm{let}\;z\;\mathrm{be}\;  !{x} \otimes y \;\mathrm{in}\;c:C$}
\RightLabel{Frobenius reciprocity}
\UnaryInfC{$\Delta;z:(\Xi'\otimes \Sigma_{!x:{!A}}B)\vdash \mathrm{let}\;z\;\mathrm{be}\;  !{x} \otimes y \;\mathrm{in}\;c:C$}
\RightLabel{Lin-Tm-Subst,$\otimes$-I,2$\times$Lin-Var}
\UnaryInfC{$\Delta;z_1:\Xi', z_2:\Sigma_{!x:!{A}}B\vdash \mathrm{let}\;z_1\otimes z_2\;\mathrm{be}\;  !{x} \otimes y \;\mathrm{in}\;c:C$}
\AxiomC{$\Delta;\Xi\vdash t: \Sigma_{!x:{!A}}B$}
\RightLabel{Lin-Tm-Subst}
\BinaryInfC{$\Delta;z_1:\Xi',\Xi\vdash (\mathrm{let}\;z_1\otimes z_2\;\mathrm{be}\;  !{x} \otimes y \;\mathrm{in}\;c)[t/z_2]:C$}
\DisplayProof
\normalsize
\\
\\
As usual, the Beck-Chevalley condition says precisely that $\Sigma$-types commute with substitution, as dictated by the type theory.

\item Assume our model supports $\Pi$-types. We will show the claimed adjunction. The morphism from left to right is provided by $\Pi$-I (indeed, it is exactly the introduction rule), and the one from right to left by $\Pi$-E. $\Pi$-C and $\Pi$-U say exactly that these are mutually inverse. Naturality corresponds to the compatibility of $\Pi$-I and $\Pi$-E with substitution.
\[
\begin{tikzcd}[ampersand replacement=\&, column sep=large, row sep=small]
b \arrow[r, maps to] \& \lambda_{!x:{!A}} b \\
\mathcal{L}(\Delta.{A})(\Xi\{\mathbf{p}_{\Delta,{A}}\},B) \arrow[r, shift left=1.5ex] \arrow[r, phantom,"{\cong}" description] \& \mathcal{L}(\Delta)(\Xi,\Pi_{!x:{!A}} B) \arrow[l, shift left=1.5ex]\\
f(!x) \& f \arrow[l, maps to].
\end{tikzcd}
\]
We show how we obtain the definition of $f(!x)$ from $\Pi$-E.\\
\\
\AxiomC{}
\RightLabel{Int-Var}
\UnaryInfC{$\Delta,x:A;\cdot \vdash x:A$}
\AxiomC{$\Delta;\Xi \vdash f:\Pi_{!x:!A}B$}
\RightLabel{Int-Weak}
\UnaryInfC{$\Delta,x:A;\Xi \vdash f:(\Pi_{!x:!A}B)$}
\RightLabel{$\Pi$-E}
\BinaryInfC{$\Delta,x:A;\Xi\vdash f(!x):B$}
\DisplayProof

For the converse, we have to show that we can recover $\Pi$-E from the definition of $f(!x)$.

\AxiomC{$\Delta;\cdot\vdash a:A$}
\AxiomC{$\Delta;\Xi \vdash f:\Pi_{!x:!A}B$}
\RightLabel{Definition $f(!x)$}
\UnaryInfC{$\Delta,x:A;\Xi\vdash f(!x):B$}
\RightLabel{Int-Tm-Subst}
\BinaryInfC{$\Delta;\Xi\vdash f(!x)[{a}/x]:B[{a}/x]$}
\UnaryInfC{$\Delta;\Xi\vdash f(!{a}):B[{a}/x]$}
\DisplayProof

This shows that individual $\Pi$-types correspond to right adjoint functors to substitution along projections. The type theory dictates that $\Pi$-types interact well with substitution. This corresponds to the dual Beck-Chevalley condition, as usual.

\item From the categorical semantics of (non-dependent) linear type theory (see e.g. \cite{bierman1994intuitionistic} for a comprehensive account) we know that $\multimap$-types correspond to monoidal closure of the category of contexts. The extra feature in dependent linear type theory is that the syntax dictates that the type formers are compatible with substitution. This means that we also have to restrict the functors $\mathcal{L}(f)$ to preserve the relevant categorical structure.

\item The same argument applies.
\item The same argument applies.
\item Assume that we have $!$-types. We will define a left adjoint $L_\Delta\dashv M_\Delta$ as $L_\Delta\mathbf{p}_{\Delta,{A}}:=!A$ (this is easily seen to be well-defined up to isomorphism, so we can use AC for a definition on the nose) and, noting that every morphism $\mathbf{p}_{\Delta,{A}}\ra{}\mathbf{p}_{\Delta,{B}}$ in $\mathcal{C}/\Delta$ is of the form $\langle \mathbf{p}_{\Delta,{A}},b\rangle$ for some unique $I\ra{b}B\{\mathbf{p}_{\Delta,{A}}\}\in\mathcal{L}(\Delta.{A})$, we define $L_\Delta$ as acting on $b$ as the map obtained from

\AxiomC{$\Delta,x:A;\cdot \vdash b:B$}
\RightLabel{!-I}
\UnaryInfC{$\Delta,x:A;\cdot \vdash !b:!B$}
\AxiomC{}
\RightLabel{Lin-Var}
\UnaryInfC{$\Delta;y:!A\vdash y:!A$}
\RightLabel{!-E}
\BinaryInfC{$\Delta;y:!A \vdash\mathrm{let}\; y\;\mathrm{be}\;!x\;\mathrm{in}\; !b:!B$}
\DisplayProof
\\
\\
which indeed gives us $L_\Delta(\langle \mathbf{p}_{\Delta,{A}},b\rangle)\in\mathcal{L}(\Delta)(!A,!B)$. Note that $L_\Delta$ is strong monoidal, as the rules for $!$ define a natural bijection between terms $\Delta,x:A,y:B;\cdot \vdash t:C$ and $\Delta;x':!A,y':!B\vdash t':C$. In terms of the model, this gives a natural bijection $\mathcal{L}(\Delta)(L_\Delta(M_\Delta A \times M_\Delta B),C)\cong \mathcal{L}(\Delta)(!D,C)\cong \mathcal{L}(\Delta.D)(I,C)\cong \mathcal{L}(\Delta.A.B)(I,C)\cong \mathcal{L}(\Delta)(!A\otimes !B,C)$, where we write $D$ for an object such that $M_\Delta D=M_\Delta A \times M_\Delta B$ (which exists if the product exists), so strong monoidality follows by the Yoneda lemma. (The reader can verify that the oplax structure on $L_\Delta$ corresponds to the lax structure on $M_\Delta$.)\\
\\
We exhibit the adjunction by the following isomorphism of hom-sets, where the morphism from left to right comes from $!$-I and the one from right to left comes from $!$-E.\footnotesize
\[
\begin{tikzcd}[ampersand replacement=\&, column sep=small, row sep=small, cells={nodes={font=\scriptsize}}]
b \arrow[r, maps to] \& b[!x/x^{\prime}]  \\
\mathcal{L}(\Delta)(L_\Delta\mathbf{p}_{\Delta,{A}},B)=\mathcal{L}(\Delta)(!A,B) \arrow[r, shift left=1.5ex] \arrow[r, phantom,"{\cong}" description] \& \mathcal{L}(\Delta.A)(I,B\{\mathbf{p}_{\Delta,A}\})\cong \mathcal{C}/\Delta(\mathbf{p}_{\Delta,{A}},\mathbf{p}_{\Delta,{B}})=\mathcal{I}(\Delta)(\mathbf{p}_{\Delta,{A}},M_\Delta B) \arrow[l, shift left=1.5ex]\\
\mathrm{let}\; y\;\mathrm{be}\; !x\;\mathrm{in}\; b^{\prime} \& b^{\prime} \arrow[l, maps to]
\end{tikzcd}
\]\normalsize

We show how to construct the morphism from left to right, using $!$-I.\\
\\
\AxiomC{$\Delta;x':!A\vdash b:B$}
\RightLabel{Int-Weak}
\UnaryInfC{$\Delta,x:A;x':!A\vdash b:B$}
\AxiomC{}
\RightLabel{Int-Var}
\UnaryInfC{$\Delta,x:A;\cdot\vdash x:A$}
\RightLabel{$!$-I}
\UnaryInfC{$\Delta,x:A;\cdot \vdash !x:!A$}
\RightLabel{Lin-Tm-Subst}
\BinaryInfC{$\Delta,x:A;\cdot\vdash b[!x/x']:B$}
\DisplayProof

We show how to construct the morphism from right to left, using $!$-E. Suppose we are given $b'\in\mathcal{L}(\Delta.{A})(I,B\{\mathbf{p}_{\Delta,A}\})$. From this, we produce a morphism in $\mathcal{L}(\Delta)(!A,B)$ as follows.\\
\\
\AxiomC{}
\RightLabel{Lin-Var}
\UnaryInfC{$\Delta;y:!A\vdash y:!A$}
\AxiomC{$\Delta,x:A;\cdot \vdash b':B$}
\RightLabel{!E}
\BinaryInfC{$\Delta;y:!A\vdash \mathrm{let}\; y\;\mathrm{be}\; !x\;\mathrm{in}\; b':B$}
\DisplayProof
\quad\\
\\
\\
We leave it to the reader to verify that these morphisms are mutually inverse, according to $!$-C and $!$-U.\\
\\
Conversely, suppose we have a strong monoidal left adjoint $L_\Delta\dashv M_\Delta$. We define, for $A\in\mathrm{ob}(\mathcal{L}(\Delta))$, $!A:=L_\Delta M_\Delta(A)$.

We verify that $!$-I can be derived from the homset morphism from left to right:\\
\\
\AxiomC{}
\RightLabel{Lin-Var}
\UnaryInfC{$\Delta;x':!A\vdash x':!A$}
\RightLabel{"left to right"}
\UnaryInfC{$\Delta,x:A;\cdot \vdash !x:!A$}
\AxiomC{$\Delta;\cdot \vdash a:A$}
\RightLabel{Int-Tm-Subst}
\BinaryInfC{$\Delta;\cdot \vdash !x[{a}/x]:!A$}
\DisplayProof
\quad\\
\\
We verify that, in the presence of $\multimap$-types, $!$-E can be derived from the homset morphism from right to left:\\
\\
\small
\AxiomC{$\Delta;\Xi\vdash t:!A$}
\AxiomC{}
\RightLabel{Lin-Var}
\UnaryInfC{$\Delta;w:\Xi'\vdash w:\Xi'$}
\AxiomC{$\Delta,x:A;y:\Xi'\vdash b :B$}
\RightLabel{$\multimap$-I}
\UnaryInfC{$\Delta,x:A;\cdot \vdash \lambda_{y:\Xi'}b :\Xi'\multimap B$}
\RightLabel{"right to left"}
\UnaryInfC{$\Delta;z:!A\vdash  \mathrm{let}\; z\;\mathrm{be}\; !x\;\mathrm{in}\; \lambda_{y:\Xi'}b : \Xi'\multimap B$}
\RightLabel{$\multimap$-E}
\BinaryInfC{$\Delta;z:!A,\Xi'\vdash  \mathrm{let}\; z\;\mathrm{be}\; !x\;\mathrm{in}\; b[w/y]:B$}
\RightLabel{Lin-Tm-Subst}
\BinaryInfC{$\Delta;\Xi,\Xi'\vdash \mathrm{let}\; t\;\mathrm{be}\; !x\;\mathrm{in}\; b[w/y]: B$}
\DisplayProof
\normalsize
\quad\\
\\
Note that the $!$-C- and $!$-U-rules correspond precisely to the fact that our morphisms from left to right and from right to left define a homset isomorphism.\\
\\
Finally, it is easily verified that the condition that $\mathcal{L}(f)\circ L_\Delta\cong L_{\Delta'}\circ \mathcal{I}(f)$ corresponds exactly to the compatibility of $!$ with substitution.

\item Suppose we have $\mathrm{Id}_{!A}\dashv -\{\mathrm{diag}_{\Delta,A}\}$, i.e. we have a (natural) homset isomorphism
\[
\begin{tikzcd}[ampersand replacement=\&, column sep=large]
\mathcal{L}(\Delta.A.A\{\mathbf{p}_{\Delta,A}\})(\mathrm{Id}_{!A}(B),C) \arrow[r, shift left=0.6ex] \arrow[r, phantom,"{\cong}" description] \& \mathcal{L}(\Delta.A)(B,C\{\mathrm{diag}_{\Delta,A}\}) \arrow[l, shift left=0.6ex].
\end{tikzcd}
\]
The claim is that $\mathrm{Id}_{!A}(I)$ satisfies the rules for the $\mathrm{Id}$-type of $A$ (or, perhaps more appropriately, of $!A$). Indeed, we have $\mathrm{Id}$-I as follows.\\
\quad\\
\scriptsize
\AxiomC{}
\RightLabel{Lin-Var}
\UnaryInfC{$\Delta,x:A,x':A;w:\mathrm{Id}_{!A}(I)(x,x')\vdash w:\mathrm{Id}_{!A}(I)(x,x')$}
\RightLabel{"left to right"}
\UnaryInfC{$\Delta,x:A;y:I\vdash \mathrm{refl}_{!x}^y:\mathrm{Id}_{!A}(I)(x,x)$}
\AxiomC{}
\RightLabel{$I$-I}
\UnaryInfC{$\Delta,x:A;\cdot\vdash *:I$}
\RightLabel{Lin-Tm-Subst}
\BinaryInfC{$\Delta,x:A;\cdot\vdash \mathrm{refl}_{!x}:\mathrm{Id}_{!A}(I)(x,x)$}
\AxiomC{$\Delta;\cdot\vdash a:A$}
\LeftLabel{Int-Tm-Subst}
\BinaryInfC{$\Delta;\cdot\vdash \mathrm{refl}_{!a}:\mathrm{Id}_{!A}(I)(a,a)$}
\DisplayProof
\normalsize\quad\\
\\
We obtain $\mathrm{Id}$-E as follows. Let $\Delta,x:A,x':A;\cdot\vdash C\;\mathrm{type}$.\\
\quad\\
\tiny
\AxiomC{$\Delta,x:A;B\vdash c:C[x/x']$}
\RightLabel{"right to left"}
\UnaryInfC{$\Delta,x:A,x':A;\mathrm{Id}_{!A}(B)\vdash c':C$}
\AxiomC{$\Delta;\cdot\vdash a:A$}
\AxiomC{$\Delta;\cdot\vdash a':A$}
\RightLabel{Int-Tm-Subst}
\TrinaryInfC{$\Delta;\mathrm{Id}_{!A}(B)[a/x,a'/x']\vdash c'[a/x,a'/x']:C[a/x,a'/x']$}
\AxiomC{$\Delta;B'\vdash p:\mathrm{Id}_{!A}(I)[a/x,a'/x']$}
\LeftLabel{(*)}
\UnaryInfC{$\Delta;B[a/x],B'\vdash p':\mathrm{Id}_{!A}(B)[a/x,a'/x']$}
\LeftLabel{Lin-Tm-Subst}
\BinaryInfC{$\Delta;B[a/x],B'\vdash \mathrm{let}\; (a,a',p)\;\mathrm{be}\;(z,z,\mathrm{refl}_{!z})\;\mathrm{in}\; c:C[a/x,a'/x']$}
\DisplayProof
\normalsize
\quad\\
Here, $(*)$ is a slightly non-trivial step that follows immediately when we note that $\mathrm{Id}_{!A}(B)\cong \mathrm{Id}_{!A}(I)\otimes B\{\mathbf{p}_{\Delta,A}\}$ (by tensoring with $\mathrm{id}_B$). Indeed,
\begin{align*}\mathcal{L}(\Delta.A.A\{\mathbf{p}_{\Delta,A}\})(\mathrm{Id}_{!A}(B),C)&\cong \mathcal{L}(\Delta.A)(B,C\{\mathrm{diag}_{\Delta,A}\})\\
&\cong \mathcal{L}(\Delta.A)(I,(B\{\mathbf{p}_{\Delta,A}\}\multimap C)\{\mathrm{diag}_{\Delta,A}\})\\
&\cong \mathcal{L}(\Delta.A.A\{\mathbf{p}_{\Delta,A}\})(\mathrm{Id}_{!A}(I),B\{\mathbf{p}_{\Delta,A}\}\multimap C)\\
&\cong \mathcal{L}(\Delta.A.A\{\mathbf{p}_{\Delta,A}\})(\mathrm{Id}_{!A}(I)\otimes B\{\mathbf{p}_{\Delta,A}\}, C).
\end{align*}\normalsize
Since all these isomorphisms are natural in $C$, the Yoneda lemma says that $\mathrm{Id}_{!A}(B)\cong \mathrm{Id}_{!A}(I)\otimes B\{\mathbf{p}_{\Delta,A}\}$. \\
\\
Conversely, suppose we have $\mathrm{Id}$-types. Then, define $\mathrm{Id}_{!A}(B):=\mathrm{Id}_{!A}\otimes B\{\mathbf{p}_{\Delta,A}\}$, with the obvious extension on morphisms. Then, we obtain the morphism "left to right" as follows.\\
\quad\\
\scriptsize
\AxiomC{$\Delta,x:A,x':A;z:\mathrm{Id}_{!A},y:B\vdash c:C$}
\RightLabel{}
\AxiomC{}
\RightLabel{Int-Var}
\UnaryInfC{$\Delta,x:A;\cdot\vdash x:A$}
\RightLabel{Int-Tm-Subst}
\BinaryInfC{$\Delta,x:A;z:\mathrm{Id}_{!A}[x/x'],y:B\vdash c[x/x']:C[x/x']$}
\AxiomC{}
\RightLabel{Int-Var}
\UnaryInfC{$\Delta,x:A;\cdot\vdash x:A$}
\RightLabel{$\mathrm{Id}$-I}
\UnaryInfC{$\Delta,x:A;\cdot\vdash \mathrm{refl}_{!x}:\mathrm{Id}_{!A}(x,x)$}
\RightLabel{Lin-Tm-Subst}
\BinaryInfC{$\Delta,x:A;y:B\vdash c': C[x/x']$}
\DisplayProof
\normalsize
\quad\\
\\
The morphism "right to left" is obtained as follows.\\
\\
\tiny\hspace{-30pt}
\AxiomC{$\Delta,x_0:A; y:B\vdash c:C[x_0/x_1]$}

\AxiomC{}
\RightLabel{Lin-Var}
\UnaryInfC{$\Delta,x_0:A,x_1:A;w:\mathrm{Id}_{!A}\vdash w:\mathrm{Id}_{!A}$}
\AxiomC{}
\RightLabel{Int-Var}
\UnaryInfC{$\Delta,x_0:A,x_1:A;\cdot\vdash x_i:A$}
\RightLabel{$\mathrm{Id}$-E}
\TrinaryInfC{$\Delta,x_0:A,x_1:A; w:\mathrm{Id}_{!A},y:B\vdash c':C$}
\DisplayProof
\normalsize
\quad\\
\\
We leave it to the reader to verify that the $\mathrm{Id}$-C- and $\mathrm{Id}$-U-rules translate precisely into the "right to left" and "left to right" morphisms being inverse.
\end{enumerate}
\end{proof}

The semantics of $!$ suggests an alternative definition of comprehension: if we have $\Sigma$-types in a strong sense, then it is a derived notion.

\begin{theorem}[Lawvere Comprehension]\label{altcompr} Given a strict indexed monoidal category $(\mathcal{C},\mathcal{L})$ with left adjoints $\Sigma_{Lf}$ to $\mathcal{L}(f)$ for arbitrary $\Delta'\ra{f}\Delta\in\mathcal{C}$, we can define $\mathcal{C}/\Delta\ra{L_\Delta}\mathcal{L}(\Delta)$ by
$$L_\Delta(-):=\Sigma_{L-}I.$$
In that case, $(\mathcal{C},\mathcal{L})$ has a comprehension schema iff $L_\Delta$ has a right adjoint $M_\Delta$ (which then automatically satisfies $M_{\Delta'}\circ \mathcal{L}(f)=f^*\circ M_\Delta$ for all $\Delta'\ra{f}\Delta\in\mathcal{C}$, where $f^*$ denotes pullback along $f$ in slice categories). That is, our notion of comprehension generalises that of \cite{lawvere1970equality}.

Finally, if $\Sigma_{Lf}$ are required to satisfy the Beck-Chevalley condition and Frobenius reciprocity, then $(\mathcal{C},\mathcal{L})$ satisfies the comprehension schema iff it admits $!$-types.
\end{theorem}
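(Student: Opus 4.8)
The plan is to establish the three assertions in turn, each by exhibiting natural isomorphisms of hom-sets. For the first (well-definedness of $L_\Delta$) I would check functoriality using that left adjoints compose and are unique up to canonical isomorphism: a factorisation $f=g\circ h$ in $\mathcal{C}$, i.e. a morphism $h$ from $f$ to $g$ in $\mathcal{C}/\Delta$, yields $\Sigma_{Lf}\cong\Sigma_{Lg}\circ\Sigma_{Lh}$. The strong monoidal reindexing supplies a canonical iso $I\cong\mathcal{L}(h)(I)$, a map $I\to\mathcal{L}(h)(I)$ in $\mathcal{L}(\mathrm{dom}(f))$ whose transpose under $\Sigma_{Lh}\dashv\mathcal{L}(h)$ is a map $\Sigma_{Lh}I\to I$ in $\mathcal{L}(\mathrm{dom}(g))$; applying $\Sigma_{Lg}$ and the iso $\Sigma_{Lf}\cong\Sigma_{Lg}\Sigma_{Lh}$ gives $L_\Delta(h)\colon L_\Delta(f)=\Sigma_{Lf}I\to\Sigma_{Lg}I=L_\Delta(g)$. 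Functoriality and $L_\Delta(\mathrm{id}_\Delta)=\Sigma_{L\,\mathrm{id}_\Delta}I=I$ then follow from uniqueness of adjoint transposes; I expect this to be routine.

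For the second assertion the heart is the composite natural isomorphism
\begin{align*}
\mathcal{C}/\Delta(f,M_\Delta A) &\cong \mathcal{L}(\Delta)(L_\Delta f,A)\\
&= \mathcal{L}(\Delta)(\Sigma_{Lf}I,A)\\
&\cong \mathcal{L}(\mathrm{dom}(f))(I,\mathcal{L}(f)A),
\end{align*}
where the first iso is the putative adjunction $L_\Delta\dashv M_\Delta$ and the last is $\Sigma_{Lf}\dashv\mathcal{L}(f)$. By parametrised representability, a right adjoint $M_\Delta$ to $L_\Delta$ exists precisely when the functor $f\mapsto\mathcal{L}(\mathrm{dom}(f))(I,A\{f\})$ is representable for every $A$, which is exactly the comprehension schema (with $\mathbf{p}_{\Delta,A}:=M_\Delta A$). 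The compatibility $M_{\Delta'}\circ\mathcal{L}(f)=f^{*}\circ M_\Delta$ (pullback on slices) I would obtain from Theorem~\ref{thm:comprfunc}, which gives $f^{*}\mathbf{p}_{\Delta,A}=\mathbf{p}_{\Delta',A\{f\}}$; equivalently both sides represent $g\mapsto\mathcal{L}(\mathrm{dom}(g))(I,(A\{f\})\{g\})$, so they agree by Yoneda.

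For the third assertion I would show that, under Beck--Chevalley and Frobenius reciprocity, the extra structure separating $!$-types from a bare comprehension is automatic. Given comprehension, the second assertion supplies $L_\Delta\dashv M_\Delta$; by Theorem~\ref{thm:comprfunc} the projections are stable under pullback, so the full subcategory $\mathcal{I}(\Delta)$ on the $\mathbf{p}_{\Delta,A}$ is Cartesian and $L_\Delta\mathbf{p}_{\Delta,A}=\Sigma_{L\mathbf{p}}I\cong{!A}$ by Theorem~\ref{thm:!fromsigma}. Beck--Chevalley applied to the relevant pullback squares gives $\mathcal{L}(f)\circ L_\Delta\cong L_{\Delta'}\circ f^{*}$, i.e. $L_\Delta$ is a morphism of indexed categories; Frobenius reciprocity, via $L_\Delta(f\times_\Delta g)=\Sigma_{Lf}\Sigma_{L\pi_1}I\cong\Sigma_{Lf}(\mathcal{L}(f)L_\Delta g)\cong L_\Delta g\otimes\Sigma_{Lf}I=L_\Delta g\otimes L_\Delta f$, shows $L_\Delta$ is strong monoidal. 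These are precisely the hypotheses of Theorem~\ref{thm:semtype}(6), so $!$-types hold; conversely that characterisation already presupposes the comprehension functors $M_\Delta$, so admitting $!$-types forces comprehension.

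The main obstacle I anticipate is exactly this last Frobenius-plus-Beck--Chevalley computation: getting the direction and naturality of the canonical maps right so that $L_\Delta(f\times_\Delta g)\cong L_\Delta f\otimes L_\Delta g$ is a coherent strong monoidal structure (and $L_\Delta$ genuinely a morphism of indexed monoidal categories), rather than a mere object-wise isomorphism. The hom-set manipulations in the first two parts, by contrast, I expect to be essentially formal.
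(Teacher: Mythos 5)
Your proposal is correct and follows essentially the same route as the paper: the same composite hom-set isomorphism $\mathcal{C}/\Delta(f,M_\Delta A)\cong\mathcal{L}(\Delta)(\Sigma_{Lf}I,A)\cong\mathcal{L}(\mathrm{dom}(f))(I,A\{f\})$ identifying the comprehension schema with representability (i.e.\ existence of the right adjoint $M_\Delta$), and the same Frobenius-plus-Beck--Chevalley computation showing that $L_\Delta$ is strong monoidal, which is what reduces $!$-types (in the sense of Theorem~\ref{thm:semtype}) to bare comprehension. The differences are purely presentational: you spell out the functorial action of $L_\Delta$ on morphisms and package both directions at once via parametrised representability, where the paper argues the two directions separately, invoking Theorem~\ref{thm:comprfunc} for the converse.
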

\begin{proof}Suppose that we have the stated right adjoints $M_\Delta$. We will construct a comprehension schema. 

We define $\mathbf{p}_{\Delta,{A}}:=M_\Delta(A)$ and \\
\[
\begin{tikzcd}[ampersand replacement=\&, column sep=large, row sep=small]
\mathcal{L}(\Delta^{\prime})(I,A\{f\}) \arrow[r,"{\cong}"] \& \mathcal{L}(\Delta)(\Sigma_{Lf} I_{\Delta^{\prime}},A)=\mathcal{L}(\Delta)(L_\Delta f,A) \arrow[r,"{\cong}"] \& \mathcal{C}/\Delta(f,M_\Delta A) \\
a \arrow[r, maps to] \& a_f \arrow[r, maps to] \& \langle f,a\rangle,
\end{tikzcd}
\]
where the first natural isomorphism comes from the adjunction $\Sigma_{Lf}\dashv -\{f\}$ and the second one comes from the adjunction $L_\Delta\dashv M_\Delta$. Note that, by definition, $\mathbf{p}_{\Delta,{A}}\langle f,a\rangle=f$.

In particular, we obtain a unique $\mathbf{v}_{\Delta,{A}}\in \mathcal{L}(\Delta.{A})(I,A\{\mathbf{p}_{\Delta,{A}}\})$ inducing $\mathrm{id}_{M_\Delta A}$ as $\langle \mathbf{p}_{\Delta,{A}},\mathbf{v}_{\Delta,{A}}\rangle$. Finally, the Yoneda lemma (i.e. naturality of these isomorphisms) says that $\mathbf{v}_{\Delta,{A}}\{\langle f,a\rangle\}=a$.\\
\\
Conversely, suppose we are given a comprehension schema. Then, we know, by theorem \ref{thm:comprfunc}, that we can define a comprehension functor $M_\Delta$ such that $M_{\Delta'}\circ \mathcal{L}(f)=f^*\circ M_\Delta$. Then we have the following:
\[
\begin{tikzcd}[ampersand replacement=\&, column sep=large, row sep=small]
\mathcal{C}/\Delta(f,M_\Delta A) \arrow[r,"{\cong}"] \& \mathcal{L}(\Delta^{\prime})(I,A\{f\}) \arrow[r,"{\cong}"] \& \mathcal{L}(\Delta)(\Sigma_{Lf} I_{\Delta^{\prime}},A)=\mathcal{L}(\Delta)(L_\Delta f,A) \\
\langle f,a\rangle \arrow[r, maps to] \& a \arrow[r, maps to] \& a_f,
\end{tikzcd}
\]
where the first isomorphism is precisely the representation defined by our comprehension and the second isomorphism comes from the fact that $\Sigma_{Lf}\dashv -\{f\}$.

Finally, the following calculation shows that it follows from Frobenius reciprocity and Beck-Chevalley that $L_\Delta$ is strong monoidal:
\begin{align*}
L_\Delta(f)\otimes L_\Delta(g)&= (\Sigma_{Lf}I_{\mathrm{dom}f})\otimes (\Sigma_{Lg}I_{\mathrm{dom}g})\\
&= \Sigma_{Lg}((\Sigma_{Lf}I_{\mathrm{dom}f})\{g\}\otimes I_{\mathrm{dom}g})\txt{(Frobenius reciprocity)}\\
&=\Sigma_{Lg}((\Sigma_{Lf}I_{\mathrm{dom}f})\{g\})\\
&=\Sigma_{Lg}\Sigma_{L f^* g}I_{\mathrm{dom}f}\{g^*f\}\txt{(Beck-Chevalley)}\\
&=\Sigma_{Lg}\Sigma_{Lf^*g}I_{\mathrm{dom}_{f\times g}}\\
&=\Sigma_{L(f\times g)}I_{\mathrm{dom}_{f\times g}}\\
&=L_\Delta(f\times g).
\end{align*}
\end{proof}

\begin{theorem}[Type Formers in $\mathcal{I}$]\label{thm:inttyp} $\mathcal{I}$ supports $\Sigma$-types iff $\mathrm{ob}(\mathcal{I})$ is closed under compositions (as morphisms in $\mathcal{C}$). It supports $\mathrm{Id}$-types iff $\mathrm{ob}(\mathcal{I})$ is closed under post-composition with maps $\mathrm{diag}_{\Delta,A}$. If $\mathcal{L}$ supports $!$- and $\Pi$-types, then $\mathcal{I}$ supports $\Pi$-types. Moreover,
$$\Sigma_{!A}! B\cong L(\Sigma_{M A}MB) \qquad\qquad \mathrm{Id}_{!A}(!B)\cong L\mathrm{Id}_{MA}(MB)\qquad\qquad M\Pi_{!B}C\cong \Pi_{MB}MC.
$$\end{theorem}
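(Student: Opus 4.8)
The plan is to dispatch the two biconditionals (parts 1 and 2) as standard display-map-category facts, and then to derive part 3 together with all three displayed isomorphisms from a single bookkeeping argument with the adjunctions $L_\Delta\dashv M_\Delta$, $\Sigma_{!A}\dashv\mathcal{L}(\mathbf{p}_{\Delta,A})$, $\mathcal{L}(\mathbf{p}_{\Delta,A})\dashv\Pi_{!A}$, and $\mathrm{Id}_{!A}\dashv\mathcal{L}(\mathrm{diag}_{\Delta,A})$, glued by the fact that $M$ and $L$ commute with substitution. For parts 1 and 2 I would use that $\mathcal{I}$ is a full comprehension category on the maps $\mathbf{p}_{\Delta,A}$ (the remark after Theorem \ref{thm:comprfunc}): for any $f$ in $\mathcal{C}$ the pullback functor $\mathcal{I}(f)$ has a genuine left adjoint in the slice, namely post-composition $f\circ(-)$, and this is the only candidate, so the sole content is whether the result again lies in $\mathrm{ob}(\mathcal{I})$. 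For $\Sigma$-types the relevant $f$ is the projection, so $\Sigma_{MA}(\mathbf{p}_{\Delta.A,B})=\mathbf{p}_{\Delta,A}\circ\mathbf{p}_{\Delta.A,B}$, and requiring this to be a display map for all $A,B$ is exactly closure of $\mathrm{ob}(\mathcal{I})$ under composition; for $\mathrm{Id}$-types the relevant $f$ is $\mathrm{diag}_{\Delta,A}$, giving closure under post-composition with diagonals. Beck--Chevalley holds automatically by pullback pasting, and Frobenius reciprocity is automatic since the fibres of $\mathcal{I}$ are Cartesian, so I would cite the corresponding facts for display map categories in \cite{jacobs1993comprehension}.

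For part 3, assume $\mathcal{L}$ has $!$- and $\Pi$-types, so by Theorem \ref{thm:semtype} we have $L_\Delta\dashv M_\Delta$ commuting with substitution and right adjoints $\Pi_{!A}$. Every object of $\mathcal{I}(\Delta.A)$ is of the form $M_{\Delta.A}D$, by the very definition of $\mathcal{I}$, so I would set $\Pi_{MA}(M_{\Delta.A}D):=M_\Delta\Pi_{!A}D$ and check adjointness for $C'\in\mathcal{I}(\Delta)$ by the chain
\begin{align*}
\mathcal{I}(\Delta)(C', M_\Delta\Pi_{!A}D)
&\cong \mathcal{L}(\Delta)(L_\Delta C', \Pi_{!A}D)\\
&\cong \mathcal{L}(\Delta.A)(\mathcal{L}(\mathbf{p}_{\Delta,A})L_\Delta C', D)\\
&\cong \mathcal{L}(\Delta.A)(L_{\Delta.A}\mathcal{I}(\mathbf{p}_{\Delta,A})C', D)\\
&\cong \mathcal{I}(\Delta.A)(\mathcal{I}(\mathbf{p}_{\Delta,A})C', M_{\Delta.A}D),
\end{align*}
using in turn $L\dashv M$, $\mathcal{L}(\mathbf{p})\dashv\Pi$, compatibility of $L$ with substitution, and $L\dashv M$ again. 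This is natural in $C'$ and depends only on $M_{\Delta.A}D$, so $M_\Delta\Pi_{!A}D$ is well-defined up to canonical isomorphism as the sought right adjoint; this establishes part 3 and, taking $D=C$ and $A=B$, the identity $M\Pi_{!B}C\cong\Pi_{MB}MC$ simultaneously.

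The remaining two isomorphisms I would prove by the mirror-image Yoneda arguments, testing against an arbitrary $X$. For $\Sigma$, with $MB=M_{\Delta.A}B$ and $!B=L_{\Delta.A}M_{\Delta.A}B$,
\begin{align*}
\mathcal{L}(\Delta)(L_\Delta(\Sigma_{MA}MB), X)
&\cong \mathcal{I}(\Delta)(\Sigma_{MA}MB, M_\Delta X)\\
&\cong \mathcal{I}(\Delta.A)(MB, \mathcal{I}(\mathbf{p}_{\Delta,A})M_\Delta X)\\
&\cong \mathcal{I}(\Delta.A)(MB, M_{\Delta.A}\mathcal{L}(\mathbf{p}_{\Delta,A})X)\\
&\cong \mathcal{L}(\Delta.A)(!B, \mathcal{L}(\mathbf{p}_{\Delta,A})X)\\
&\cong \mathcal{L}(\Delta)(\Sigma_{!A}!B, X),
\end{align*}
using $L\dashv M$, $\Sigma_{MA}\dashv\mathcal{I}(\mathbf{p})$, compatibility of $M$ with substitution (Theorem \ref{thm:comprfunc}), $L\dashv M$ again, and $\Sigma_{!A}\dashv\mathcal{L}(\mathbf{p})$; Yoneda then yields $\Sigma_{!A}!B\cong L(\Sigma_{MA}MB)$. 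The case of $\mathrm{Id}$ is word-for-word the same after replacing $\mathbf{p}_{\Delta,A}$ by $\mathrm{diag}_{\Delta,A}$ throughout.

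The main obstacle I anticipate is that the comprehension functor $M$ is in general neither full nor faithful on linear types, so none of these hom-set transports can be read off naively; the argument survives precisely because every object of $\mathcal{I}$ is \emph{literally} $M$ of an object and because each constructed object is pinned down by a universal property (hence independent of the chosen $M$-preimage), the glue being the two compatibility-with-substitution squares — for $M$ from Theorem \ref{thm:comprfunc} and for $L$ from the $!$-clause of Theorem \ref{thm:semtype}. The genuinely fiddly residue is verifying the Beck--Chevalley side-conditions that the formal statement ``$\mathcal{I}$ supports $\Pi$-types'' demands; I expect these to transport mechanically from the corresponding conditions for $\Pi_{!A}$ via the same substitution-compatibilities, and would leave them as a routine check.
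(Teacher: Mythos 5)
Your proposal is correct and takes essentially the same route as the paper: the paper likewise identifies the forced candidates for the adjoints in $\mathcal{I}$ with (post-)composition via the comprehension isomorphism and Yoneda for parts 1 and 2, and establishes part 3 together with the three displayed isomorphisms by the very same hom-set chains (the adjunctions $L\dashv M$, $\Sigma_{!A}\dashv \mathcal{L}(\mathbf{p}_{\Delta,A})\dashv \Pi_{!A}$, $\mathrm{Id}_{!A}\dashv\mathcal{L}(\mathrm{diag}_{\Delta,A})$, compatibility of $L$ and $M$ with substitution, then Yoneda). The paper also dismisses the Beck--Chevalley and Frobenius side conditions as ``trivially seen to hold,'' matching your deferral of them as routine checks.
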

\begin{proof}We write out the adjointness condition
\begin{align*}\mathcal{I}(\Delta)(\Sigma_{\mathbf{p}_{\Delta,B}}f,\mathbf{p}_{\Delta,D})&\stackrel{!}{\cong } \mathcal{I}(\Delta.B)(f,\mathbf{p}_{\Delta,D}\{\mathbf{p}_{\Delta,B}\})\\
&\cong \mathcal{I}(\Delta.B)(f,\mathbf{p}_{\Delta,D\{\mathbf{p}_{\Delta,B}\}})\\
&\cong \mathcal{L}(\Delta.B.C)(I,D\{\mathbf{p}_{\Delta,B}\}\{f\})\\
&\cong \mathcal{L}(\Delta.B.C)(I,D\{\mathbf{p}_{\Delta,B}\circ f\})\\
&\cong \mathcal{I}(\Delta)(\mathbf{p}_{\Delta,B}\circ f,\mathbf{p}_{\Delta,D}).
\end{align*}
Now, the Yoneda lemma gives $\Sigma_{\mathbf{p}_{\Delta,B}}f=\mathbf{p}_{\Delta,B}\circ f$.\\\\
Similarly,
\begin{align*}
\mathcal{I}(\Delta.A.A)(\mathrm{Id}_{\mathbf{p}_{\Delta,A}}(f),\mathbf{p}_{\Delta.A.A,C})&\stackrel{!}{\cong} \mathcal{I}(\Delta.A)(f,\mathbf{p}_{\Delta.A.A,C}\{\mathrm{diag}_{\Delta,A}\})\\
&\cong \mathcal{L}(\Delta.A.B)(I,C\{\mathrm{diag}_{\Delta,A}\}\{f\})\\
&\cong \mathcal{L}(\Delta.A.B)(I,C\{\mathrm{diag}_{\Delta,A}\circ f\})\\
&\cong \mathcal{I}(\Delta.A.A)(\mathrm{diag}_{\Delta,A}\circ f, \mathbf{p}_{\Delta.A.A,C}),
\end{align*}
so $\mathrm{diag}_{\Delta,A}\circ f$ models $\mathrm{Id}_{\mathbf{p}_{\Delta,A}}(f)$.\\
\\
Finally,
\begin{align*}
\mathcal{I}(\Delta)(M_\Delta D,\Pi_{\mathbf{p}_{\Delta,B}}\mathbf{p}_{\Delta.B,C})&\stackrel{!}{\cong} \mathcal{I}(\Delta.B)((M_\Delta D)\{\mathbf{p}_{\Delta,B}\},\mathbf{p}_{\Delta.B,C})\\
&\cong \mathcal{I}(\Delta.B)((M_\Delta D)\{\mathbf{p}_{\Delta,B}\},M_{\Delta.B}C)\\
&\cong \mathcal{L}(\Delta.B)(L_{\Delta.B}((M_\Delta D)\{\mathbf{p}_{\Delta,B}\}),C)\\
&\cong \mathcal{L}(\Delta.B)((L_{\Delta}M_\Delta D)\{\mathbf{p}_{\Delta,B}\},C)\\
&\cong \mathcal{L}(\Delta)(L_{\Delta}M_\Delta D,\Pi_{!B}C)\\
&\cong \mathcal{I}(\Delta)(M_\Delta D,M_{\Delta}\Pi_{!B}C)\\
\end{align*}
Again, using the Yoneda lemma, we conclude that $M_{\Delta}\Pi_{!B}C$ models $\Pi_{M_{\Delta}B}M_{\Delta.B}C$.\\
\\
In all cases, we have not addressed Beck-Chevalley (and Frobenius reciprocity for $\Sigma$-types), because these are straightforward to verify.\\
\\
Note that if $\mathcal{L}$ has $!$- and $\Sigma$-types, then
\begin{align*}\mathcal{L}(\Delta)(L_\Delta(\Sigma_{M_\Delta A}M_{\Delta.A}B),C)&\cong\mathcal{I}(\Delta)(\Sigma_{M_\Delta A}M_{\Delta.A} B,M_\Delta C)\\
&\cong \mathcal{I}(\Delta.A)(M_{\Delta.A} B,(M_\Delta C)\{\mathbf{p}_{\Delta,A}\})\\
&\cong \mathcal{I}(\Delta.A)(M_{\Delta.A} B,M_{\Delta.A}(C\{\mathbf{p}_{\Delta,A}\}))\\
&\cong \mathcal{L}(\Delta.A)(! B,C\{\mathbf{p}_{\Delta,A}\})\\
&\cong \mathcal{L}(\Delta)(\Sigma_{!A}! B,C).
\end{align*}
By the Yoneda lemma, we conclude that $\Sigma_{!A}! B\cong L_\Delta (\Sigma_{M_\Delta A}M_{\Delta.A}B)$.\\
\\
Note that, if $\mathcal{L}$ admits $!$- and $\mathrm{Id}$-types,
\begin{align*}\mathcal{L}(\Delta.A.A)(\mathrm{Id}_{!A}(!B),C)&\cong \mathcal{L}(\Delta.A)(!B,C\{\mathrm{diag}_{\Delta,A}\})\\
&\cong \mathcal{I}(\Delta.A)(M_{\Delta.A} B,M_{\Delta.A}(C\{\mathrm{diag}_{\Delta,A}\}))\\
&\cong \mathcal{I}(\Delta.A)(M_{\Delta.A} B,M_{\Delta.A.A}(C)\{\mathrm{diag}_{\Delta,A}\})\\
&\cong \mathcal{I}(\Delta.A.A)(\mathrm{diag}_{\Delta,A}\circ M_{\Delta.A} B,M_{\Delta.A.A}(C))\\
&\cong \mathcal{I}(\Delta.A.A)(\mathrm{Id}_{M_\Delta A}(M_{\Delta.A} B),M_{\Delta.A.A}(C))\\
&\cong \mathcal{L}(\Delta.A.A)(L_{\Delta.A.A}\mathrm{Id}_{M_{\Delta}A}(M_{\Delta.A}B),C).
\end{align*}
We conclude that $\mathrm{Id}_{!A}(!B)\cong L_{\Delta.A.A}\mathrm{Id}_{M_\Delta A}(M_{\Delta.A}B)$ and, in particular, $\mathrm{Id}_{!A}(I)\cong L_{\Delta.A.A}\mathrm{Id}_{M_{\Delta}A}(\mathrm{id}_{\Delta.A})$. The last statement is also easily seen to be valid in the absence of $\top$-types.
\end{proof}
\begin{remark}[Dependent Seely Isomorphisms?] Note that, in our setup, we have a version of the normal Seely isomorphisms in each fibre. Indeed, suppose $\mathcal{L}$ supports $\top$-, $\&$, and $!$-types. Then, $M_\Delta(\top)=\mathrm{id}_\Delta$ and $M_\Delta(A\& B)=M_\Delta(A)\times M_\Delta(B)$, as $M_\Delta$ has a left adjoint and therefore preserves products. Now, $L_\Delta$ is strong monoidal and $!_\Delta=L_\Delta M_\Delta$, so it follows that $!_\Delta\top=I$ and $!_\Delta(A\& B)=!_\Delta A\otimes !_\Delta B$.

Now, theorem \ref{thm:inttyp} suggests the possibility of similar Seely isomorphisms for $\Sigma$-types and $\mathrm{Id}$-types. Indeed, $\mathcal{I}$ supports $\Sigma$-types iff we have additive $\Sigma$-types in $\mathcal{L}$ in the sense of objects $\Sigma_A^\& B$ such that
$$M\Sigma_A^\& B\cong \Sigma_{MA}MB\txt{and hence} !\Sigma_A^\& B\cong \Sigma_{!A}^\otimes !B,$$
where we suggestively write $\Sigma^\otimes$ for the usual multiplicative $\Sigma$-type in $\mathcal{L}$. In an ideal world, one would hope that $\Sigma_A^\& B$ generalises $A\& B$ in the same way that $\Sigma_{!A} B$ is a dependent generalisation of $!A\otimes B$.

Similarly, we get a notion of additive $\mathrm{Id}$-types: $\mathcal{I}$ supports $\mathrm{Id}$-types iff we have objects $\mathrm{Id}_A^\&(B)$ in $\mathcal{L}$ such that
$$M\mathrm{Id}_A^\&(B)\cong \mathrm{Id}_{MA}(MB)\txt{and hence} !\mathrm{Id}_A^\&(B)\cong \mathrm{Id}_{!A}^\otimes(!B),$$
writing $\mathrm{Id}^\otimes$ for the usual (multiplicative) $\mathrm{Id}$-type in $\mathcal{L}$. Note that this suggests that, in the same way that $\mathrm{Id}_{!A}^\otimes(B)\cong \mathrm{Id}_{!A}^\otimes(I)\otimes B$ (a sense in which usual $\mathrm{Id}$-types are multiplicative connectives), $\mathrm{Id}_A^\&(B)\cong \mathrm{Id}_A^\&(\top)\& B$. In fact, if we have $\top$- and $\&$-types, we only have to give $\mathrm{Id}_A^\&(\top)$ and can then \emph{define} $\mathrm{Id}_A^\&(B):=\mathrm{Id}_A^\&(\top)\& B$ to obtain additive $\mathrm{Id}$-types in full generality.

A fortiori, if some $M_\Delta$ is essentially surjective, we obtain such additive $\Sigma$- and $\mathrm{Id}$-types in the fibre over $\Delta$. In particular, we are in this situation if $L_\cdot \dashv M_\cdot$ is the usual co-Kleisli adjunction of $!_\cdot$, where $\mathcal{I}(\cdot)=\mathcal{C}$. This shows that if we hope to obtain a model of ILDTT indexed over the co-Kleisli category, in the natural way, we need to support these additive connectives.

It remains to be seen whether these ``additive connectives'' can be understood from a syntactic point of view. Moreover, it seems that the natural models of ILDTT do not support them. Finally, it is difficult to find an intuitive interpretation of such connectives as resources. Further investigation is necessary here.\end{remark}
\clearpage
\section{Some Discrete Models: Monoidal Families}
\label{sec:dismod}
We discuss a simple class of models in terms of families with values in a symmetric monoidal category. On a logical level, the construction amounts to starting with a model $\mathcal{V}$ of a linear propositional logic and taking the cofree linear predicate logic on $\mathrm{Set}$ with values in this propositional logic. This important example illustrates how $\Sigma$- and $\Pi$-types can represent infinitary additive disjunctions and conjunctions. The model is discrete in nature, however, and in that respect not representative of the type theory.

Suppose $\mathcal{V}$ is a symmetric monoidal category. We can then consider a strict $\mathrm{Set}$-indexed category, defined through the following enriched Yoneda embedding $\mathrm{Fam}(\mathcal{V}):={\mathcal{V}}^{-}:=\mathrm{SMCat}(-,\mathcal{V})$:
\[
\begin{tikzcd}[ampersand replacement=\&, column sep=large]
\mathrm{Set}^{op} \arrow[r,"{\mathrm{Fam}(\mathcal{V})}"] \& \mathrm{SMCat} \&\& S\ra{f}S^{\prime} \arrow[r, maps to] \& \mathcal{V}^S\stackrel{-\circ f}{\longleftarrow} \mathcal{V}^{S^{\prime}}.
\end{tikzcd}
\]
Note that this definition naturally extends to a functor $\mathrm{Fam}$.
\begin{theorem}[Families Model ILDTT] The construction $\mathrm{Fam}$ adds type dependency on $\mathrm{Set}$ cofreely in the sense that it is right adjoint to the forgetful functor $\mathrm{ev}_1$ that evaluates a model of linear dependent type theory at the empty context to obtain a model of linear propositional type theory (where $\mathrm{SMCat}_{\mathrm{compr}}^{\mathrm{Set}^{op}}$ is the full subcategory of $\mathrm{SMCat}^{\mathrm{Set}^{op}}$ on the objects with comprehension):
\[
\begin{tikzcd}[ampersand replacement=\&, column sep=huge]
\mathrm{SMCat} \arrow[r, hook, shift right=1.5ex,"{\mathrm{Fam}}"'] \arrow[r, phantom,"{\bot}" description] \& \mathrm{SMCat}_{\mathrm{compr}}^{\mathrm{Set}^{op}} \arrow[l, shift right=1.5ex,"{\mathrm{ev}_1}"'] .
\end{tikzcd}
\]
\end{theorem}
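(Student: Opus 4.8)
The plan is to exhibit the adjunction $\mathrm{ev}_1 \dashv \mathrm{Fam}$ directly, by constructing a bijection
$$\mathrm{SMCat}(\mathrm{ev}_1\mathcal{L}, \mathcal{V}) \;\cong\; \mathrm{SMCat}_{\mathrm{compr}}^{\mathrm{Set}^{op}}(\mathcal{L}, \mathrm{Fam}(\mathcal{V})),$$
natural in the indexed category with comprehension $\mathcal{L}$ and in the symmetric monoidal category $\mathcal{V}$. Before doing so I must check that $\mathrm{Fam}$ is well defined as a functor into $\mathrm{SMCat}_{\mathrm{compr}}^{\mathrm{Set}^{op}}$: the fibre $\mathrm{Fam}(\mathcal{V})(S) = \mathcal{V}^S$ is symmetric monoidal under the pointwise tensor, reindexing $-\{f\} = -\circ f$ along $f\colon S'\to S$ is strictly functorial and strong monoidal, and a strong monoidal functor $\mathcal{V}\to\mathcal{V}'$ induces one $\mathcal{V}^S\to\mathcal{V}'^S$ pointwise, commuting with reindexing. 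The one genuinely structural point to settle is that $\mathrm{Fam}(\mathcal{V})$ admits a comprehension schema.

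I would produce this comprehension explicitly. Given $S\in\mathrm{Set}$ and a family $A\colon S\to \mathrm{ob}(\mathcal{V})$, set
$$S.A \;:=\; \coprod_{s\in S} \mathcal{V}(I, A(s)), \qquad \mathbf{p}_{S,A}\colon S.A\to S$$
the evident projection. For $x\colon S'\to S$ in $\mathrm{Set}/S$ one computes $\mathrm{Fam}(\mathcal{V})(S')(I, A\{x\}) = \prod_{s'\in S'}\mathcal{V}(I, A(x(s')))$, and a map $S'\to S.A$ over $S$ is exactly a choice, for each $s'$, of an element of the fibre $\mathbf{p}_{S,A}^{-1}(x(s')) = \mathcal{V}(I, A(x(s')))$; so $\mathbf{p}_{S,A}$ represents the required functor, with $\mathbf{v}_{S,A}$ the tautological family picking out the chosen morphism in each fibre. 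Hence $\mathrm{Fam}(\mathcal{V})$ lies in $\mathrm{SMCat}_{\mathrm{compr}}^{\mathrm{Set}^{op}}$, and evaluation at the terminal set gives $\mathrm{ev}_1\,\mathrm{Fam}(\mathcal{V}) = \mathcal{V}^1\cong\mathcal{V}$, furnishing the counit.

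The bijection itself rests on the well-pointedness of $\mathrm{Set}$. In one direction I send an indexed morphism $F\colon\mathcal{L}\to\mathrm{Fam}(\mathcal{V})$ to its fibre $F_1\colon\mathcal{L}(1)\to\mathcal{V}$. In the other, a strong monoidal $G\colon\mathcal{L}(1)\to\mathcal{V}$ is sent to $\tilde G$ with $\tilde G_S(A)(s) := G(A\{s\})$, where $s\colon 1\to S$ is the point naming $s\in S$. That $\tilde G_S$ is strong monoidal follows from strong monoidality of $G$ and of the point-reindexings together with the pointwise structure on $\mathcal{V}^S$; compatibility of $\tilde G$ with reindexing along $f\colon S'\to S$ reduces, on points, to the identity $(A\{f\})\{s'\} = A\{f(s')\}$ coming from strictness of the indexing. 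The key observation is that these assignments are mutually inverse: one composite is immediate, since $\tilde G_1(A)(\ast) = G(A\{\mathrm{id}_1\}) = G(A)$, while the other uses that the reindexing $\mathrm{Fam}(\mathcal{V})(s)$ along a point $s\colon 1\to S$ is precisely evaluation at $s$, so naturality of $F$ forces $F_S(A)(s) = F_1(A\{s\})$, i.e. $\widetilde{F_1} = F$.

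The remaining work is routine naturality in $\mathcal{L}$ and $\mathcal{V}$, which I would verify by unwinding the two descriptions of the bijection against whiskering with indexed morphisms and with strong monoidal functors. I expect the main obstacle to be the comprehension verification for $\mathrm{Fam}(\mathcal{V})$ --- getting the representing object $\coprod_{s}\mathcal{V}(I,A(s))$ and its universal element to fit the comprehension schema --- rather than the hom-bijection, which is essentially forced once one notes that a family over $S$ is literally a function of the points of $S$ and that reindexing along points is evaluation. Note that the comprehension structure on $\mathcal{L}$ is never used in the bijection, since morphisms of the full subcategory need not preserve comprehension; it enters only in ensuring that $\mathrm{Fam}(\mathcal{V})$ is itself an object of the target category.
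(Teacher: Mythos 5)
Your proposal is correct and takes essentially the same route as the paper: you exhibit the comprehension on $\mathrm{Fam}(\mathcal{V})$ via the same representing object $\coprod_{s\in S}\mathcal{V}(I,A(s))\ra{\mathrm{fst}}S$, and you establish the hom-bijection from the well-pointedness of $\mathrm{Set}$, using that reindexing along points $1\to S$ is evaluation. If anything, you are more complete than the paper's own proof, which only argues that a natural transformation $\phi$ is \emph{uniquely determined} by $\phi_1$ (joint injectivity of the evaluations $-\circ s$) and leaves the explicit inverse construction $G\mapsto\tilde{G}$, with $\tilde{G}_S(A)(s):=G(A\{s\})$, implicit.
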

\begin{proof} $\mathrm{Fam}(\mathcal{V})$ admits a comprehension by the following isomorphism
\begin{align*}
\mathrm{Fam}(\mathcal{V})(S)(I,B\{f\})&=\mathcal{V}^S(I,B\circ f)\\
&=\Pi_{s\in S}\mathcal{V}(I,B(f(s)))\\
&\cong \mathrm{Set}/S(S\ra{\mathrm{id}_S}S, \Sigma_{s\in S}\mathcal{V}(I,B(f(s)))\ra{\mathrm{fst}}S)\\
&\cong \mathrm{Set}/S'(S\ra{f}S',\Sigma_{s'\in S'}\mathcal{V}(I,B(s'))\ra{\mathrm{fst}}S')\\
&=\mathrm{Set}/S'(f,\mathbf{p}_{S',{B}}),
\end{align*}
where $\mathbf{p}_{S',{B}}:=\Sigma_{s'\in S'}\mathcal{V}(I,B(s'))\ra{\mathrm{fst}}S'$. ($\mathbf{v}_{S',{B}}$ is the element corresponding to $\mathrm{id}_{\mathbf{p}_{S',{B}}}\in \mathrm{Set}/S'(\mathbf{p}_{S',{B}},\mathbf{p}_{S',{B}})$ under this isomorphism.)
To see that $\mathrm{ev}_1\dashv \mathrm{Fam}$, note that the following naturality diagrams for elements $1\ra{s}S$
\[
\begin{tikzcd}[ampersand replacement=\&, column sep=large, row sep=large]
1 \arrow[d,"{s}"'] \& \mathrm{ev}_1(\mathcal{L})=\mathcal{L}(1) \arrow[r,"{\phi_1}"] \& \mathcal{V}=\mathrm{Fam}(\mathcal{V})(1)\\
S \& \mathcal{L}(S) \arrow[u,"{-\{s\}}"] \arrow[r,"{\phi_S}"'] \& \mathcal{V}^S=\mathrm{Fam}(\mathcal{V})(S) \arrow[u,"{-\circ s}"']
\end{tikzcd}
\]
together with the fact that all $1\ra{s}S$ are jointly surjective (and hence that the functors $-\circ s$ are jointly injective) mean that a natural transformation $\phi\in\mathrm{SMCat}^{\mathrm{Set}^{op}}(\mathcal{L},\mathrm{Fam}(\mathcal{V}))$ is uniquely determined by $\phi_1\in\mathrm{SMCat}(\mathrm{ev}_1(\mathcal{L}),\mathcal{V})$.
\end{proof}

We have the following results for type formers\footnote{We do not examine $\mathrm{Id}$-types here, as they correspond precisely to the intuitionistic identity type in $\mathcal{I}$, which is probably of limited interest, since $\mathcal{I}$ is a submodel of the normal set-based model of dependent types (i.e. fibred sets, which is equivalent to indexed sets: $\mathrm{Set}$-valued families).}.

\begin{theorem}[Type Formers for Families] $\mathcal{V}$ has small coproducts that distribute over $\otimes$ iff $\mathrm{Fam}(\mathcal{V})$ supports $\Sigma$-types. In that case, $\mathrm{Fam}(\mathcal{V})$ also supports $0$- and $\oplus$-types (which correspond precisely to finite distributive coproducts).

$\mathcal{V}$ has small products iff $\mathrm{Fam}(\mathcal{V})$ supports $\Pi$-types. In that case, $\mathrm{Fam}(\mathcal{V})$ also supports $\top$- and $\&$-types (which correspond precisely to finite products).

$\mathrm{Fam}(\mathcal{V})$ supports $\multimap$-types iff $\mathcal{V}$ is monoidal closed. 

$\mathrm{Fam}(\mathcal{V})$ supports $!$-types iff $\mathcal{V}$ has small coproducts of $I$ that are preserved by $\otimes$ in the sense that the canonical morphism $\mathrm{coprod}_S(\Xi'\otimes I)\ra{}\Xi'\otimes \mathrm{coprod}_S I$ is an isomorphism for any $\Xi'\in\mathrm{ob}\;\mathcal{V}$ and $S\in\mathrm{ob}\;\mathrm{Set}$. In particular, if $\mathrm{Fam}(\mathcal{V})$ supports $\Sigma$-types, then it also supports $!$-types.

$\mathrm{Fam}(\mathcal{V})$ supports $\mathrm{Id}$-types if $\mathcal{V}$ has an initial object. If $\mathcal{V}$ has a terminal object, the only-if direction also holds.
\end{theorem}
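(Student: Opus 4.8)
The plan is to verify each clause by reducing the condition from Theorem~\ref{thm:semtype} to a pointwise statement about $\mathcal{V}$, exploiting that in $\mathrm{Fam}(\mathcal{V})$ the monoidal and closed structure of each fibre $\mathcal{V}^S$ is computed pointwise and that substitution along $f\colon S\to S'$ is precomposition $f^{*}=-\circ f$. The single computation that drives everything is the identification of the adjoints to $f^{*}$: where they exist, the left and right adjoints are the left and right Kan extensions of $f$ viewed as a functor of discrete categories, that is, the fibrewise coproduct $(\Sigma_f A)(s')=\coprod_{s\in f^{-1}(s')}A(s)$ and the fibrewise product $(\Pi_f A)(s')=\prod_{s\in f^{-1}(s')}A(s)$. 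I would first isolate this adjunction lemma, noting that $\Sigma_f$ exists for all $f$ precisely when $\mathcal{V}$ has all small coproducts, and dually $\Pi_f$ when $\mathcal{V}$ has all small products.

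With the lemma in hand the $\Sigma$- and $\Pi$-clauses follow by specialising $f$ to the comprehension projections $\mathbf{p}_{S',B}$. The Beck--Chevalley conditions are automatic, since a pullback of sets is computed fibrewise, so the fibres over which one sums or multiplies are carried bijectively and the Kan extensions are stable under reindexing. The only genuine content on the $\Sigma$-side is Frobenius reciprocity, which unwinds pointwise to invertibility of the canonical map $\coprod_{s\in f^{-1}(s')}(\Xi'(s')\otimes A(s))\to \Xi'(s')\otimes\coprod_{s\in f^{-1}(s')}A(s)$, i.e. exactly distributivity of coproducts over $\otimes$; this is where distributivity is used in the \emph{if} direction and, by specialising to the projection onto a point, forced in the \emph{only if} direction. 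The $0$/$\oplus$- and $\top$/$\&$-clauses are the nullary and binary instances of the same computation (finite distributive coproducts, resp. finite products), matching Theorem~\ref{thm:semtype}(5),(4); and $\multimap$ is immediate, as the pointwise internal hom on $\mathcal{V}^S$ exists iff $\mathcal{V}$ is monoidal closed.

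For $!$-types I would invoke Theorem~\ref{thm:semtype}(6): one needs a strong monoidal left adjoint $L_S$ to the comprehension functor $M_S$, which sends $B$ to $\mathbf{p}_{S,B}$ with fibres $\mathcal{V}(I,B(s))$. Computing $L_SM_S$ yields $(!_SB)(s)\cong\coprod_{\mathcal{V}(I,B(s))}I$, so the adjoint exists iff these coproducts of copies of $I$ exist, while strong monoidality of $L_S$ translates precisely into the stated requirement that $\otimes$ preserve them, namely $\mathrm{coprod}_S(\Xi'\otimes I)\cong\Xi'\otimes\mathrm{coprod}_S I$; since distributive small coproducts are a special case, $\Sigma$-types imply $!$-types. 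For Id-types (Theorem~\ref{thm:semtype}(7)) the relevant substitution is along the fibrewise diagonal $\mathrm{diag}_{S,A}$, an injection of total spaces, whose left adjoint is left Kan extension along that injection: it reproduces a family on the diagonal and fills in the initial object $0$ off it. Thus an initial object in $\mathcal{V}$ suffices, with $\mathrm{Id}_{!A}(I)$ equal to $I$ on the diagonal and $0$ elsewhere; conversely, under the terminal-object hypothesis one reads off the off-diagonal value of a suitable identity family as a colimit of the empty diagram, recovering an initial object.

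I expect the main obstacle to lie in the converse directions, where one must extract honest (co)limits in $\mathcal{V}$ from the fibrewise adjoints by evaluating them on well-chosen representing families, and in confirming that Beck--Chevalley together with Frobenius reciprocity correspond exactly to distributivity rather than to something weaker. The Id-type converse is the most delicate: it degenerates when all diagonals happen to be isomorphisms, so that $0$ cannot be detected, and the terminal-object hypothesis is precisely what rules this case out; I would handle that case analysis carefully. The remaining checks — naturality of the Kan-extension adjunctions, the associators and unitors of the pointwise monoidal structure, and the monoidal-natural-transformation conditions witnessing strong monoidality of $L_S$ — are routine and I would leave them to the reader.
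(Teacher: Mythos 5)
Your proposal follows essentially the same route as the paper's proof: the adjoints to $f^{*}=-\circ f$ are computed fibrewise as coproducts/products over fibres, Beck--Chevalley comes for free because substitution is precomposition, Frobenius reciprocity unwinds pointwise to distributivity of coproducts over $\otimes$, the $0/\oplus$ and $\top/\&$ clauses are the finite instances, and $\multimap$ is pointwise closure. The one place you genuinely diverge is the $!$-clause: you compute the left adjoint $L_S$ of the comprehension functor $M_S$ directly from Theorem~\ref{thm:semtype}, obtaining $(!B)(s)\cong\coprod_{\mathcal{V}(I,B(s))}I$ and reading the preservation condition off strong monoidality of $L_S$, whereas the paper gets the same formula by citing Theorem~\ref{thm:!fromsigma} ($!A$ as $\Sigma_{!A}I$) and re-using its $\Sigma$-type computation $!A=\Sigma_{\mathbf{p}_{S',A}}I$. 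The two are equivalent; yours is more self-contained semantically, the paper's recycles an earlier syntactic theorem.

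The one substantive defect is in your treatment of the $\mathrm{Id}$-type converse. You assert that the degenerate case --- all diagonals $\mathrm{diag}_{S,A}$ being isomorphisms, so that no initial object can be detected --- is ``precisely what the terminal-object hypothesis rules out''. It is not. In $\mathrm{Fam}(\mathcal{V})$ the comprehension fibre over $s$ is the hom-set $\mathcal{V}(I,A(s))$, so every diagonal is a bijection exactly when no object of $\mathcal{V}$ has two distinct points $I\to W$; a symmetric monoidal poset with a top element but no bottom element is of this kind, has a terminal object, and supports $\mathrm{Id}$-types trivially (restriction along a bijection is an isomorphism of categories), yet has no initial object. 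The actual role of the terminal object --- and the way the paper uses it --- is to pad the product of hom-sets so that one off-diagonal factor can be isolated: taking $C$ to be $\top$ everywhere except at a single off-diagonal point, where it is an arbitrary $V$, the adjunction isomorphism collapses to $\mathcal{V}(\mathrm{Id}_{!A}(B)(s,a,a'),V)\cong\{*\}$, naturally in $V$, whence that value is initial. What the argument additionally needs is that an off-diagonal point \emph{exists} at all, i.e.\ an object of $\mathcal{V}$ with two distinct points $I\to W$; the paper silently assumes this by writing ``$A:=\{0,1\}$'' for a family over a point, and your sketch inherits the same unexamined assumption while mislocating it in the terminal-object hypothesis. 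So the ``careful case analysis'' you defer cannot be closed along the lines you indicate: either one adds the non-degeneracy of $\mathcal{V}$ as a hypothesis, or one reads the comprehension fibres as arbitrary sets, as the paper implicitly does.
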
\begin{proof}The statement about $0$-, $\oplus$-, $\top$-, and $\&$-types should be clear from the previous sections, as products and coproducts in $\mathcal{V}^S$ are pointwise (and hence automatically preserved under substitution).

We will denote coproducts in $\mathcal{V}$ by $\bigoplus$. Then,
\begin{align*}
\Pi_{s'\in S'}\mathcal{V}(\bigoplus_{s\in f^{-1}(s')}A(s),B(s'))
&\cong \Pi_{s'\in S'}\Pi_{s\in f^{-1}(s')}\mathcal{V}(A(s),B(s'))\\
&\cong \Pi_{s\in \Sigma_{s'\in S'}f^{-1}(s')}\mathcal{V}(A(s),B(f(s)))\\
&\cong \Pi_{s\in S}\mathcal{V}(A(s),B(f(s)))\\
&=\mathcal{V}^S(A,B\circ f).
\end{align*}
Thus, if we have coproducts, we can define $\Sigma_{Lf}(A)(s'):=\bigoplus_{s\in f^{-1}(s')}A(s)$ to get a left adjoint $\Sigma_{Lf}\dashv -\{f\}$. The definition on morphisms is the obvious one coming from the coCartesian monoidal structure on $\mathcal{V}$. Conversely, we can use $\Sigma_{Lf}$ to define any coproduct by taking $f$ to be the unique function from the indexing set to $1$ and taking $A$ to be the family of objects whose coproduct we want. The Beck-Chevalley condition is handled by the fact that our substitution morphisms are given by precomposition. Frobenius reciprocity precisely corresponds to distributivity of the coproducts over $\otimes$.

Similarly, if $\mathcal{V}$ has products, we will denote them by $\bigwith$ to suggest the connection with linear type theory. In that case, we can define $\Pi_{Lf}(A)(s'):=\bigwith_{s\in f^{-1}(s')}A(s)$ to get a right adjoint $-\{f\}\dashv \Pi_{Lf}$. The definition on morphisms is the obvious one coming from the Cartesian monoidal structure on $\mathcal{V}$. Indeed,
\begin{align*}
\Pi_{s'\in S'}\mathcal{V}(B(s'),\bigwith_{s\in f^{-1}(s')}A(s))&\cong \Pi_{s'\in S'}\Pi_{s\in f^{-1}(s')}\mathcal{V}(B(s'),A(s))\\
&\cong \Pi_{s\in \Sigma_{s'\in S'}f^{-1}(s')}\mathcal{V}(B(f(s)),A(s))\\
&\cong \Pi_{s\in S}\mathcal{V}(B(f(s)),A(s))\\
&=\mathcal{V}^S(B\circ f,A).
\end{align*}
Again, in the same way as before, we can construct any product using $\Pi_{Lf}$ along the unique function from the indexing set to $1$. The dual Beck-Chevalley condition is automatic because our substitution morphisms are precomposition.

The claim about $\multimap$-types follows immediately from the previous section: $\mathrm{Fam}(\mathcal{V})$ supports $\multimap$-types iff all its fibres have a monoidal closed structure that is preserved by the substitution functors. Since our monoidal structure is pointwise, the same holds for any monoidal closed structure. Since substitution is given by precomposition, the preservation requirement is automatic.

The characterisation of $!$-types is given by theorem \ref{thm:!fromsigma}, which tells us we can define $!A:=\Sigma_{\mathbf{p}_{S',{A}}}I=s'\mapsto\bigoplus_{\mathcal{V}(I,A(s'))}I$ and conversely.

Finally, for $\mathrm{Id}$-types, note that the adjointness condition $\mathrm{Id}_{!A}\dashv -\{\mathrm{diag}_{\Delta,A}\}$ amounts to the requirement (*)
\begin{align*}\Pi_{s\in S}\Pi_{a\in A(s)}\mathcal{V}(B(s,a),C(s,a,a))&\cong\mathcal{V}^{\Sigma_{s\in S}A(s)}(B,C\{\mathrm{diag}_{S,A}\})\\
&\stackrel{!}{\cong} \mathcal{V}^{\Sigma_{s\in S}A(s)\times A(s)}(\mathrm{Id}_{!A}(B),C)\\
&\cong \Pi_{s\in S}\Pi_{a\in A(s)}\Pi_{a'\in A(s)}\mathcal{V}(\mathrm{Id}_{!A}(B)(s,a,a'),C(s,a,a')).
\end{align*}
We see that if we have an initial object $0\in\mathrm{ob}(\mathcal{V})$, we can define $$\mathrm{Id}_{!A}(B)(s,a,a'):=\left\{\begin{array}{l}B(s,a)\textnormal{ if $a=a'$}\\ 0\textnormal{ else} \end{array}\right.$$
For a partial converse, suppose we have a terminal object $\top\in \mathcal{V}$. Let $V\in\mathrm{ob}(\mathcal{V})$. Let $S:=\{*\}$, $A:=\{0,1\}$, take $B$ constantly $\top$, and choose $C$ such that $C(0,0)=C(1,1)=C(0,1)=\top$ and $C(1,0)=V$. Then, (*) becomes the condition that $\{*\}\cong \mathcal{V}(\mathrm{Id}_{!A}(B)(1,0),V)$. We conclude that $\mathrm{Id}_{!A}(B)(1,0)$ is initial in $\mathcal{V}$.
\end{proof}

\begin{remark}Note that an obvious way to guarantee distributivity of coproducts over $\otimes$ is by requiring $\mathcal{V}$ to be monoidal closed.
\end{remark}
\begin{remark}It is easily seen that $\Sigma$-types in $\mathcal{I}$, or additive $\Sigma$-types in $\mathcal{L}=\mathrm{Fam}(\mathcal{V})$, amount to having an object $\mathrm{or}_{s\in S}C(s)\in\mathrm{ob}(\mathcal{V})$ for a family $(C(s)\in \mathrm{ob}(\mathcal{V}))_{s\in S}$ such that $\Sigma_{s\in S}\mathcal{V}(I,C(s))\cong \mathcal{V}(I,\mathrm{or}_{s\in S} C(s))$. Similarly, $\mathrm{Id}$-types in $\mathcal{I}$, or additive $\mathrm{Id}$-types in $\mathcal{L}$, amount to having objects $\mathrm{one},\mathrm{zero}\in\mathrm{ob}(\mathcal{V})$ such that $\mathcal{V}(I,\mathrm{one})\cong 1$ and $\mathcal{V}(I,\mathrm{zero})=0$.
\end{remark}
Two particularly simple concrete choices of $\mathcal{V}$ can accommodate all type formers and serve as useful illustrations: a category $\mathcal{V}=\mathrm{Vect}_F$ of vector spaces over a field $F$, with the tensor product, and the category $\mathcal{V}=\mathrm{Set}_*$ of pointed sets, with the smash product. All type formers have their obvious interpretation, but let us pause to consider $!$, since a novelty of ILDTT is that it is uniquely determined by the indexing, while in propositional linear type theory we might have several different choices. In the first example, $!$ amounts to the following: $(!B)(s')=\mathrm{coprod}_{ \mathrm{Vect}_F(F,B(s'))}F\cong   \bigoplus_{ B(s')}F$, i.e. the vector space freely spanned by all the vectors. In the second example, $(!B)(s')=\mathrm{coprod}_{\mathrm{Set}_*(2_*,B(s'))}2_*=\bigvee_{B(s')}2_*=B(s')+\{*\}$, i.e. $!$ freely adds a new basepoint. These models show the following.
\begin{theorem}[DTT, DILL$\subsetneq$ILDTT] ILDTT is a proper generalisation of DTT and DILL: we have inclusions of the classes of models DTT, DILL$\subsetneq$ILDTT.\end{theorem}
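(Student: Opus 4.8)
The plan is to prove the two inclusions of model classes separately and then to exhibit witnesses showing each is strict, using the concrete models $\mathrm{Fam}(\mathrm{Vect}_F)$ and $\mathrm{Fam}(\mathrm{Set}_*)$ discussed just above. For the inclusion of DTT into ILDTT I would appeal directly to the Remark following the definition of a strict indexed symmetric monoidal category with comprehension: when the monoidal structure on each fibre $\mathcal{L}(\Delta)$ is Cartesian, the notion of ILDTT model coincides (modulo the fullness condition on the comprehension functor) with a strict indexed Cartesian monoidal category with comprehension, i.e. with a model of DTT. Hence every DTT model is literally an ILDTT model whose fibres happen to be Cartesian, which gives the first inclusion on the nose.

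For the inclusion of DILL into ILDTT I would use the Families adjunction $\mathrm{ev}_1 \dashv \mathrm{Fam}$ established above. Since $\mathrm{Fam}(\mathcal{V})(1)=\mathcal{V}$, we have $\mathrm{ev}_1\circ\mathrm{Fam}\cong \mathrm{id}$, so $\mathrm{Fam}$ is a full embedding of the class of propositional linear models (symmetric monoidal categories, carrying the structure needed for $\multimap$, $!$, etc., which by the Type Formers for Families theorem is transported to $\mathrm{Fam}(\mathcal{V})$) into the class of ILDTT models. This realises each DILL model $\mathcal{V}$ as the fibre over the empty context of a genuine ILDTT model, establishing the second inclusion.

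It then remains to show strictness. For $\mathrm{DTT}\subsetneq\mathrm{ILDTT}$, take $\mathcal{V}=\mathrm{Vect}_F$ (or $\mathrm{Set}_*$): the fibres of $\mathrm{Fam}(\mathrm{Vect}_F)$ are $\mathrm{Vect}_F^S$ with the pointwise tensor product, whose unit (the constant family $F$) is not terminal and for which $V\otimes W\not\cong V\times W$ in general, so this monoidal structure is not Cartesian. By the Remark, a model with non-Cartesian fibres cannot be a DTT model, so the first inclusion is strict. For $\mathrm{DILL}\subsetneq\mathrm{ILDTT}$, I would observe that a DILL model is propositional, carrying no non-trivial type dependency, whereas $\mathrm{Fam}(\mathrm{Vect}_F)$ is indexed over $\mathrm{Set}$ with genuinely distinct fibres over distinct sets and non-identity substitution functors; more decisively, $\mathrm{Fam}$ is faithful but not essentially surjective, so there are ILDTT models (for instance a syntactic DTT model whose base category of contexts is not equivalent to $\mathrm{Set}$) that lie outside the image of the embedding and are thus not DILL models. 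Combining the two inclusions with these witnesses yields the claim.

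The main obstacle is the properness of the DILL inclusion: unlike the DTT side, where the non-Cartesian computation is immediate from the Remark, here one must pin down precisely what it means for an ILDTT structure to fail to be a DILL model, i.e. one must argue that the witness genuinely lies outside the (discrete, cofree) image of $\mathrm{Fam}$ rather than merely noting that it arises from a DILL model under the embedding. Making this essential-non-surjectivity argument clean — for example by isolating the fact that $\mathrm{Fam}$ lands only in models whose base is $\mathrm{Set}$ and whose dependency is discrete — is the delicate step; the remaining verifications are routine.
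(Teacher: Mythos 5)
Your proposal is correct and takes essentially the same route as the paper: the DTT inclusion via the observation that Cartesian fibres (with full and faithful comprehension) recover the split full comprehension-category models of DTT, the DILL inclusion via the $\mathrm{ev}_1 \dashv \mathrm{Fam}$ adjunction realising DILL models as a reflective full subcategory, properness on the DTT side from the non-Cartesian fibres of $\mathrm{Fam}(\mathrm{Vect}_F)$ or $\mathrm{Fam}(\mathrm{Set}_*)$, and properness on the DILL side from ILDTT models indexed over a base not equivalent to $\mathrm{Set}$. The only point in the paper's proof you omit is the closing remark that the inclusions persist in the sub-algebraic setting without $I$- and $\otimes$-types (via multicategories), which is inessential to the statement as given.
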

\begin{proof}By the Grothendieck construction, every split fibration can be seen equivalently as the category of elements of the corresponding strict indexed category defined by the fibres. Under this equivalence, split full comprehension categories with finite fibrewise products (i.e. models of DTT with $1$- and $\times$-types) correspond precisely to strict indexed Cartesian monoidal categories with comprehension whose comprehension functor is full and faithful. These are a special case of our notion of model of ILDTT. Moreover, in such cases, $!A\cong A$. From their categorical description, the other connectives of ILDTT reduce to those of DTT. This proves the inclusion DTT$\subset$ILDTT.

The models described above are more general than those of DTT, as we are dealing with a non-Cartesian monoidal structure on the fibre categories. This proves that the inclusion is proper.

We have seen that the $\mathrm{Fam}$-construction realises the category of models of DILL as a reflective subcategory of the category of models of ILDTT. Moreover, the existence of various non-trivial models of DTT indexed over categories other than $\mathrm{Set}$ shows that this inclusion is proper as well.

Finally, we note that these inclusions remain valid in the sub-algebraic setting where we do not have $I$- and $\otimes$-types. A simple variation of the argument using multicategories rather than monoidal categories suffices.\end{proof}
Although this class of family models is important, it captures only a very limited part of the generality of ILDTT: not every model of ILDTT is a model of either DTT or DILL. Hence, we need models that are less discrete in nature but still linear, if we hope to observe interesting new phenomena arising from the connectives of linear dependent type theory. Some suggestions and work in progress will be discussed in the next section.
\clearpage

\section{Conclusions and Future Work}
We hope to have convinced the reader that linear dependent types fit naturally into the landscape of existing type theories and that they admit a rich theory rather than being limited to the specific examples that had been considered so far. There is a larger story connecting these examples.

On a syntactic level, our system is a natural blend of (intuitionistic) dependent type theory and dual intuitionistic linear logic. On a semantic level, if one starts with the right notion of model for dependent types, the linear generalisation is obtained through the usual philosophy of passing from Cartesian to symmetric monoidal structures. The resulting notion of a model forms a natural blend of comprehension categories, modelling DTT, and linear-non-linear models, modelling DILL.

It is pleasing to see that all the syntactically natural rules for type formers are equivalent to the semantic counterparts expected from the traditions of categorical logic for dependent and linear types. In particular, from the point of view of logic, it is interesting to see that the categorical semantics seems to have a preference for multiplicative quantifiers.

Finally, we have shown that, as in the intuitionistic case, we can represent infinitary (additive) disjunctions and conjunctions in linear type theory, through cofree $\Sigma$- and $\Pi$-types, indexed over $\mathrm{Set}$. In particular, this construction exhibits a family of non-trivial, truly linear models of dependent types, providing an essential reality check for our system.\\
\\
Despite what this paper might suggest, much of this work has been motivated by semantics, and specifically by models. In joint work with Samson Abramsky, a model of linear dependent types with comprehension has been constructed in a category of coherence spaces. Apart from the usual type constructors from linear logic, it also supports $\Sigma$-, $\Pi$-, and $\mathrm{Id}$-types. A detailed account of this model will be made available soon.

In addition to providing what is, as far as we are aware, the first non-trivial, semantically motivated model of such a type system, this work serves as a stepping stone for a model that we are currently developing in a category of games, together with Samson Abramsky and Radha Jagadeesan. In particular, this should provide a game semantics for dependent type theory. 

An indexed category of spectra up to homotopy over topological spaces has been studied, e.g., in \cite{may2006parametrized,ponto2012duality} as a setting for stable homotopy theory. It has been shown to admit $I$-, $\otimes$-, $\multimap$-, and $\Sigma$-types. The natural candidate for a comprehension adjunction here is the one between the infinite suspension spectrum and the infinite loop space: $L \dashv M \;\; = \;\; \Sigma^\infty\dashv \Omega^\infty$. A detailed examination of the situation and an explanation of the relation with the Goodwillie calculus would be desirable. This might fit with our related objective of giving a linear analysis of homotopy type theory.

Another fascinating possibility is that of models related to quantum mechanics. Non-dependent linear type theory has found interesting interpretations in quantum computation, e.g. \cite{AbrDun:CQLv2:2004}. The question arises whether the extension to dependent linear types has a natural counterpart in physics. In \cite{schreiber2014quantization}, Urs Schreiber has recently sketched how linear dependent types can serve as a language for discussing quantum field theory and quantisation in particular. There are many interesting open questions here.

Finally, many theoretical questions remain within the type theory. Can we expect interesting models with type dependency on the co-Kleisli category of $!$, and can we make sense of additive $\Sigma$- and $\mathrm{Id}$-types, e.g. from a syntactic point of view? Is there an equivalent of strong/dependent E-rules for ILDTT? Does the Curry-Howard correspondence extend fully: do we have a propositions-as-types interpretation of linear predicate logic in ILDTT? These questions need to be addressed by a combination of research into the formal system and the study of specific models. We hope that the general framework we sketched will play a part in connecting all the different aspects of the story: from syntax to semantics; from computer science and logic to geometry and physics.
\clearpage

\bibliographystyle{splncs}
\bibliography{tau}

\begin{thebibliography}{99}
\bibitem{dunbell} S. Abramsky and R. Duncan: \textit{A categorical quantum logic.} Mathematical Structures in Computer Science / Volume 16 / Issue 03 / June 2006.
\bibitem{bardualill} A. Barber: \textit{Dual Intuitionistic Linear Logic.} 1996.
\bibitem{bierthesis} G.M. Bierman: \textit{On Intuitionistic Linear Logic.} 1994.
\bibitem{blutefeyn} R. Blute and P. Panangaden: \textit{Proof Nets as Formal Feynman Diagrams.} In: New Structures for Physics, 2010.
\bibitem{cerpfellf} I. Cervesato and F. Pfenning: \textit{A Linear Logical Framework.} 1997.
\bibitem{chustlamcalc} A. Church: \textit{A Formulation of the Simple Theory of Types.} JSL 5, 1940.
\bibitem{dallindepsubr} U. Dal Lago: \textit{Linear Dependent Types in a Subrecursive Setting.} In: Bounded Linear Logic Workshop, Fontainebleau, 2013.
\bibitem{darlindeprelcom} U. Dal Lago and M. Gaboardi: \textit{Linear Dependent Types and Relative Completeness.} In: Logical Methods in Computer Science Vol. 8(4:11), 2012.
\bibitem{dallindepcalval} U. Dal Lago and B. Petit: \textit{Linear dependent types in a call-by-value scenario.} In: Science of Computer Programming 84, 2014.
\bibitem{fonsrestr} F.N. Forsberg: \textit{Restricted linear dependent types for resource allocation.} In: Bounded Linear Logic Workshop, Fontainebleau, 2013.
\bibitem{gablindepdiffpriv}  M. Gaboardi et al.: \textit{Linear Dependent Types for Differential Privacy.} In: POPL '13, 2013.
\bibitem{girlinlog} J.-Y. Girard: \textit{Linear Logic.} In: Theoretical Computer Science 50, 1987.
\bibitem{hofsynsem} M. Hofmann: \textit{Syntax and Semantics of Dependent Type Theory.} In: Semantics and Logics of Computation 1997.
\bibitem{howchc} W. Howard: \textit{The formulae-as-types notion of construction.} In: J. Roger Seldin, Jonathan P.; Hindley, (ed.s), To H.B. Curry: Essays on Combinatory Logic, Lambda Calculus and Formalism. 1980. original paper manuscript from 1969.
\bibitem{jaccomcat} B. Jacobs: \textit{Comprehension categories and the semantics of type dependency.} 1993.
\bibitem{lamchl} J. Lambek: \textit{Deductive systems and categories, III.} In: Toposes, Algebraic Geometry, and Logic
(F. W. Lawvere, ed.), Springer-Verlag, 1972.
\bibitem{lawcompr} F.W. Lawvere: \textit{Equality in hyperdoctrines and comprehension schema as an adjoint functor.} 1970.
\bibitem{mlitt} P. Martin-L\"of: \textit{An intuitionistic theory of types.} In: Twenty-five years of constructive type theory. Venice, 1995. 
\bibitem{maysigpar} J.P. May and J. Sigurdsson: \textit{Parametrized Homotopy Theory.} 2006.
\bibitem{melcat} P.A. Mellies: \textit{Categorical Semantics of Linear Logic.} In: Interactive models of computation and program behaviour. Pierre-Louis Curien, Hugo Herbelin, Jean-Louis Krivine, Paul-André Melliès. Panoramas et Synthèses 27, Société Mathématique de France, 2009. 
\bibitem{pymbunimpl} P.W. O'Hearn and D.J. Pym: \textit{The Logic of Bunched Implications.} The Bulletin of Symbolic Logic, Vol. 5, No. 2. 1999.
\bibitem{palstodom} E. Palmgren and V. Stoltenberg-Hansen: \textit{Domain Interpretations of Martin-L\"of's Partial Type Theory.} 1990.
\bibitem{pitcatlog} A. Pitts: \textit{Categorical Logic.} In: Handbook for Logic in Computer Science Vol. VI. 1995.
\bibitem{ponshudua} K. Ponto and M. Shulman: \textit{Duality and Traces in Indexed Monoidal Categories.} 2012.
\bibitem{powcoh} A.J. Power: \textit{A general coherence result.} JPAA Vol. 57 Iss. 2, 1989.
\bibitem{schrqualinho} U. Schreiber: \textit{Quantization via Linear Homotopy Types.} 2014.
\bibitem{schulind} M. Shulman: \textit{Enriched Indexed Categories.} 2008.
\end{thebibliography}

\end{document}